\newtheorem*{theorem*}{Theorem}
\newtheorem{theorem}{Theorem}
\newtheorem{proposition}{Proposition}
\newtheorem{lemma}{Lemma}
\newtheorem{remark}{Remark}
\newtheorem{corollary}{Corollary}
\newtheorem*{corollary*}{Corollary}
\newtheorem{condition}{Condition}
\def\T{\text{T}}
\def\adj{\text{adj}}
\def\sumk{\sum_{k=1}^{K} }
\def\sumik{\sum_{i \in [k]}}
\def\nkt{n_{[k]1}}
\def\nkc{n_{[k]0}}
\def\pik{\pi_{[k]}}
\def\tauk{ \tau_{[k]}}
\def\pk{p_{[k]}}
\def\nk{n_{[k]}}
\def\nt{n_1}
\def\nc{n_0}
\def\SkXY{S_{[k]XY}}
\DeclareMathOperator*{\argmin}{arg\,min}
\def\T{{ \mathrm{\scriptscriptstyle T} }}
\def\cov{\mathrm{cov}}
\def\var{\mathrm{var}}
\def\pr{\mathrm{pr}}
\def\Mtau{\mathcal{M}_{\hat \tau_X}}
\def\MI{\mathcal{M}_s}
\def\P{P}
\def\dd{\mathrm{d}}
\def\p{p}
\def\S{K}
\def\pk{p_{[k]}}
\def\nk{n_{[k]}}
\def\nkz{n_{[k]z}}
\def\pik{\pi_{[k]}}
\def\obs{\text{obs}}
\def\nkt{n_{[k]1}}
\def\nkc{n_{[k]0}}
\def\nt{n_1}
\def\nc{n_0}
\def\Yi{Y_{i}}
\def\Yk{\bar Y_{[k]}}
\def\bYk{\bar Y_{[k]}}
\def\Yobs{\bar Y^{\obs}}
\def\Xi{X_i}
\def\bXk{\bar X_{[k]}}
\def\Xobs{\bar X^{\obs}}
\def\ttX{{\tilde \tau_X}}
\def\Md{\mathcal{M}_\ttX}
\def\taui{\tau_i}
\def\tauk{\tau_{[k]}}
\def\taukX{\tau_{[k]X}}
\def\tmj{\tau_{m_j}}
\def\htmj{\hat\tau_{m_j}}
\def\skmj{\sum_{k\in A_{ss}:\nk=m_j}}
\def\Wi{W_i}
\def\tW{\tau_W}
\def\htW{\hat{\tau}_W}
\def\bWk{\bar W_{[k]}}
\def\SkW{S_{[k]WW}}
\def\SkWY{S_{[k]WY}}
\def\SkY{S^2_{[k]Y}}
\def\SkX{S_{[k]XX}}
\def\SkXY{S_{[k]XY}}
\def\Skt{S^2_{[k]\tau}}
\def\Sktr{S^2_{[k]\tau_R }}
\def\skY{s^2_{[k]Y}}
\def\skYX{s_{[k]YX}}
\def\skXY{s_{[k]XY}}
\def\Ymul{R}
\def\bYmulk{\bar R_{[k]}}
\def\tmul{\tau_{\Ymul}}
\def\htmul{\hat\tau_{\Ymul}}
\def\htkmul{\hat\tau_{\Ymul,[k]}}
\def\SkYmul{S^2_{[k]\Ymul}}
\def\Sktmul{S^2_{[k]\tmul}}
\def\Sigmul{\Sigma_{\Ymul}}
\def\Ynew{R^{\new}}
\def\bYnewk{\bar R^{\new}_{[k]}}
\def\tnew{\tau^{\new}}
\def\htnew{\hat\tau^{\new}}
\def\SkYnew{S^2_{[k]\Ynew}}
\def\Sktnew{S^2_{[k]\tnew}}
\def\Zi{Z_i}
\def\htk{\hat\tau_{[k]}}
\def\htX{\hat{\tau}_X}
\def\htkX{\hat{\tau}_{[k]X}}
\def\Mk{M_{[k]}}
\def\Ai{A_i}
\def\Bi{B_i}
\def\bAk{\bar A_{[k]}}
\def\bBk{\bar B_{[k]}}
\def\SkAB{S_{[k]AB}}
\def\skAB{s_{[k]AB}}
\def\rkz{r_{[k]z}}
\def\bAobsk{\bar A^{\obs}_{[k]}}
\def\bBobsk{\bar B^{\obs}_{[k]}}
\def\Vktt{\Sigma_{[k]\tau\tau}}
\def\Vktx{\Sigma_{[k]\tau x}}
\def\Vkxt{\Sigma_{[k]x\tau}}
\def\Vkxx{\Sigma_{[k]xx}}
\def\hVktt{\hat\Sigma_{[k]\tau\tau}}
\def\tVktt{\tilde\Sigma_{[k]\tau\tau}}
\def\Rk{R_{[k]}^2}
\def\hRk{\hat R_{[k]}^2}
\def\tRk{\tilde R_{[k]}^2}
\def\pa{p_a}
\def\tpa{p_{\tilde a}}
\def\ntau{n^{1/2}(\hat\tau-\tau)}
\def\skS{\sum_{k=1}^{K}}
\def\sumk{\sum_{i\in[k]}}
\def\ik{i\in[k]}
\def\CI{(1-\alpha)}
\def\new{\text{new}}
\def\opt{\textnormal{opt}}
\def\adj{\textnormal{adj}}
\def\argmin{\mathop{\arg\min}}
\def\Sktr{S^2_{[k]\tau_R }}
\def\Sktrij{S^2_{[k]{\tau_R},ij }}
\def\Sktrii{S^2_{[k]{\tau_R},ii }}
\def\skrij{s^2_{[k]R,ij}}
\def\Skrij{S^2_{[k]R,ij}}
\def\Skrii{S^2_{[k]R,ii}}
\def\trk{\tau_{R,[k]}}
\def\trs{\tau_{R,ss}}
\def\trki{\tau_{R,[k],i}}
\def\trsi{\tau_{R,ss,i}}
\def\trkj{\tau_{R,[k],j}}
\def\trsj{\tau_{R,ss,j}}
\def\htrk{\hat\tau_{R,[k]}}
\def\htrsi{\hat\tau_{R,ss,i}}
\def\htrki{\hat\tau_{R,[k],i}}
\def\htrsj{\hat\tau_{R,ss,j}}
\def\htrkj{\hat\tau_{R,[k],j}}
\def\Ri{R_i}
\def\Rij{R_{i,j}}
\def\oRk{\bar{R}_{[k]}}
\def\oRkj{\bar{R}_{[k],j}}
\def\argmin{\mathop{\arg\min}}
\def\new{\text{new}}
\begin{document}
\begin{singlespace}
\title{\bf Rerandomization in stratified randomized experiments}

\author{
\small
{
Xinhe Wang$^{1}$\thanks{
    The authors gratefully acknowledge \textit{the Tsinghua University Initiative Scientific Research Program and  the National Natural Science Foundation of China grant 11701316 and 12071242}. Xinhe Wang and Tingyu Wang contribute equally to this work.}, Tingyu Wang$^{2}$, Hanzhong Liu$^{3}$\thanks{\small{Correspondence: \texttt{lhz2016@tsinghua.edu.cn}}}
}
\\ \\
{\small $^{1}$ Department of Mathematical Sciences, Tsinghua University, Beijing, China}\\
{\small $^{2}$ Department of Physics, Tsinghua University, Beijing, China}\\
{\small $^{3}$ Center for Statistical Science, Department of Industrial Engineering, Tsinghua University, Beijing, China}
}

\date{}
\maketitle
\end{singlespace}

\thispagestyle{empty}
\vskip -8mm 

\begin{singlespace}
\begin{abstract}

Stratification and rerandomization are two well-known methods used in randomized experiments for balancing the baseline covariates. Renowned scholars  in experimental design have recommended combining these two methods; however, limited studies have addressed the statistical properties of this combination. This paper proposes two rerandomization methods to be used in stratified randomized experiments, based on the overall and stratum-specific Mahalanobis distances. The first method is applicable for nearly arbitrary numbers of strata, strata sizes, and stratum-specific proportions of the treated units. The second method, which is generally more efficient than the first method, is suitable for  situations in which the number of strata is fixed with their sizes tending to infinity.  Under the randomization inference framework, we obtain the  asymptotic distributions of estimators used in these methods and the formulas of variance reduction when compared to stratified randomization. Our analysis does not require any modeling assumption regarding the potential outcomes. Moreover, we provide asymptotically conservative variance estimators and confidence intervals for the average treatment effect.  The advantages of the proposed methods are exhibited through an extensive simulation study and a real-data example.

\vspace{12pt}
\noindent {\bf Key words}: Blocking; Causal inference; Covariate adjustment; Randomization inference; Stratification.
\end{abstract}

\end{singlespace}

\newpage

\clearpage
\setcounter{page}{1}

\allowdisplaybreaks
\baselineskip=24pt

\begin{singlespace}
\section{Introduction}
\label{sec:intro}

The application of randomized experiments has recently gained increasing popularity in various fields, including industry,  social sciences, and clinical trials \citep[e.g.,][]{box2005,GerberGreen2012,Rosenberger2015}. Often, there are covariates  that are likely to be unbalanced in completely randomized experiments \citep{Fisher1926, Senn1989, Morgan2012,Athey2017}. \citet{Fisher1926} first recognised this issue and introduced the use of blocking, or stratification, for balancing discrete covariates. In stratified randomized experiments, units are divided into strata according to the discrete covariates and complete randomization is conducted within each stratum. Appropriate stratification improves the covariate balance and inference efficiency; see \citet{imai2008variance}, \citet{miratrix2013}, and \citet{Imbens2015} for an overview. 

Whereas stratification balances only discrete covariates, rerandomization is a more powerful tool that excludes allocations causing covariate imbalance. Covariate balance can be measured by a predetermined criterion, and only the allocations that meet this criterion are accepted \citep{Morgan2012}. \citet{Morgan2012} used the Mahalanobis distance of the sample means of the covariates in the treatment and control groups for measuring covariate balance and set a threshold in advance to rule out unsatisfactory allocations. The authors showed that the difference-in-means average treatment effect estimator remains unbiased under the symmetric balance criterion for the treatment and control groups, and that rerandomization enhances efficiency when the treatment effect is additive (i.e., all the units have the same treatment effect) and the covariates are correlated with the potential outcomes. For more general situations,  \citet{Li2018} obtained an asymptotic distribution of the difference-in-means estimator under rerandomization, and developed a method to construct large-sample confidence intervals for the average treatment effect.

Renowned scholars, such as R. A. Fisher, have recommended combining the  rerandomization and stratification methods. This design strategy was summarized by D. B. Rubin as `Block what you can and rerandomize what you cannot'. Recently, \citet{Schultzberg2019} developed a stratified rerandomization design where stratification on binary covariates was followed by rerandomization on continuous covariates. They demonstrated that for binary covariates, stratification is equivalent to rerandomization, and that stratified rerandomization enhances both inference and computation efficiencies  under equal-sized treatment and control groups and an additive treatment effect, or the Fisher sharp null hypothesis. However, when the sizes of the treatment and control groups are not equally sized, or the treatment effect is not additive, especially when the number of strata tends towards infinity, the efficient strategy of stratified rerandomization and its statistical behaviour are unknown. 

The present paper  proposes two rerandomization strategies in stratified randomized experiments and establishes their asymptotic theory by using the Neyman--Rubin potential outcomes model \citep{Neyman:1923,Rubin:1974} and randomization inference framework \citep{kempthorne1955randomization, fcltxlpd2016, zhao2016randomization}, without any modeling assumption regarding the potential outcomes. The proposed methods are termed the overall strategy and the stratum-specific strategy. Both use the  Mahalanobis distance for measuring covariate imbalance. However, the first computes  the overall covariate imbalance and rerandomizes over the entire strata together, and the second computes  the stratum-specific covariate imbalance and rerandomizes within each stratum independently. The overall strategy is flexible and applicable to nearly arbitrary numbers of strata and their sizes, and does not require the same propensity scores (proportions of the treated units) across different strata.
This strategy is essential for stratified experiments with strata containing single treated or control unit and hybrid experiments with both large and small strata. These scenarios can easily appear in modern social science experiments. For instance, multisite trials in education often have several sites, including only a few schools on each site \citep[e.g.,][]{Wills2018}.
The stratum-specific strategy is a straightforward extension of \citet{Li2018}, and is generally more efficient than the first method; however, it requires the number of strata to be fixed with their sizes tending to infinity. 

We prove that, under mild conditions, the stratified difference-in-means estimators are asymptotically unbiased and truncated-normal distributed under both stratified rerandomization strategies. In addition, we show that stratified rerandomization improves, or at least does not degrade, the precision as compared to stratified randomization (SR). We further  provide asymptotically conservative estimators for the variances and confidence intervals under both strategies. Finally, we illustrate the performances of the proposed methods through an extensive simulation study and a real-data example.

\section{Framework, notation, and stratified rerandomization}
\label{sec2}

In stratified randomized experiments with $n$ units, $\p_0  $ discrete covariates and $\p$ additional (discrete or continuous) covariates are collected before the physical implementation of randomization. The units are divided into $\S$ strata according to the $p_0$ discrete covariates, each having $\nk \ge2\ (k=1,\ldots,\S)$ units, such that $n=n_{[1]}+\cdots+n_{[\S]}$. Let $X\in\mathbb{R}^{n\times\p}$ denote an additional covariate matrix, whose $i$th row, denoted by $\Xi^\T$, indicates the observations of the additional covariates of unit $i$. In stratum $k$, $\nkt = \pk\nk$ units are randomly selected and assigned to the treatment group, and the remaining $\nkc = (1-\pk)\nk$ units are assigned to the control group, where $\pk \in (0,1)$ is called the propensity score. The total numbers of treated and control units are $\nt = \sum_{k=1}^{K} \nkt$ and $\nc = \sum_{k=1}^{K} \nkc$, respectively. 
For each unit $i=1,\ldots,n$, let $\Zi$ be the treatment assignment indicator, where $\Zi=1$ if it is assigned to the treatment group and $\Zi=0$ if it is assigned to the control group.  We use $i\in[k]$ to denote the indices taken over the stratum $k$.  Let $Y_i(z)$ be the potential outcomes for unit $i$ under the treatment arm $z\ (z=0,1)$, where $z=1$ indicates treatment and $z=0$ indicates control. The unit level treatment effect is defined as $\taui=\Yi(1)-\Yi(0)$, and the average treatment effect is defined as
$
	\tau= n^{-1} \skS\sumk\taui=\skS\pik\tauk,
$
where $\pik=\nk/n$ is the proportion of stratum size and $\tauk = \nk^{-1} \sumk \taui$ is the stratum-specific average treatment effect in stratum $k\ (k=1,\ldots,\S)$. 

In stratum $k$, the stratum-specific means of covariates and potential outcomes are denoted as
$\bXk= \nk^{-1} \sumk\Xi$ and $\bYk(z)= \nk^{-1} \sumk\Yi(z)$, $z = 0, 1,$ and the stratum-specific variances and covariances are denoted as
$$ \SkY(z)= \frac{1}{\nk - 1} \sumk\{\Yi(z)-\bYk(z)\}^2, \quad  \SkX=   \frac{1}{\nk - 1}  \sumk(\Xi-\bXk)(\Xi-\bXk)^\T ,  $$
$$  \SkXY(z)=  \frac{1}{\nk - 1} \sumk (\Xi-\bXk) \{\Yi(z)-\bYk(z)\} , \quad  \Skt= \frac{1}{\nk - 1} \sumk(\taui-\tauk)^2. $$

Under the stable unit treatment value assumption \citep{Rubin:1980}, for any realised value of $\Zi$, the observed outcome of unit $i$ is $Y_i^\obs = \Zi\Yi(1)+(1-\Zi)\Yi(0)$. For the treatment arm $z=1$, the observed stratum-specific means of the potential outcomes and covariates are denoted as
$   \Yobs_{[k]1} =  \nkt^{-1} \sumk \Zi \Yi(1)$ and $\Xobs_{[k]1} =  \nkt^{-1} \sumk \Zi \Xi$. Similarly, we define $\Yobs_{[k]0}$ and $ \Xobs_{[k]0} $ for the control arm $z=0$. The stratified difference-in-means estimator of the average treatment effect is 
\begin{equation}
	\hat{\tau}=\skS\pik\Big\{ \Yobs_{[k]1}-\Yobs_{[k]0} \Big\}=\skS\pik\htk,
	\label{tauhat}
\end{equation} 
where $\htk = \Yobs_{[k]1}-\Yobs_{[k]0} $ is the difference-in-means estimator of $\tauk$.

This paper proposes two stratified rerandomization criteria, one based on the overall Mahalanobis distance and the other based on the stratum-specific Mahalanobis distance. 

(1) Stratified rerandomization based on the overall Mahalanobis distance. Because covariates can be viewed as potential outcomes that are unaffected by the treatment assignment with zero treatment effect,  the Mahalanobis distance of the  stratified sample means of the covariates under two treatment arms can be used to measure the covariate imbalance. More specifically, denote 
\begin{equation*}
	\htX=\skS\pik  \Big\{ \Xobs_{[k]1} - \Xobs_{[k]0} \Big\}=\skS\pik\htkX,
\end{equation*}
where $\htkX=\Xobs_{[k]1}-\Xobs_{[k]0}$ indicates the difference-in-means of the covariates in stratum $k$. The overall Mahalanobis distance is defined as $M_{\htX}=(\htX)^\T\cov(\htX)^{-1}\htX$. Here, a random assignment is accepted only when  $M_{\htX}<a$, where $a$ is a predetermined threshold.

(2) Stratified rerandomization based on the stratum-specific Mahalanobis distance. When each stratum comprises a large number of units, rerandomizing within each stratum separately and independently can be more efficient than the overall rerandomization. Thus, we use this rerandomization criterion in our study, where the stratum-specific Mahalanobis distance is defined as $\Mk=(\htkX)^\T\cov(\htkX)^{-1}\htkX,\ k=1,\ldots,\S$. Here, a random assignment is accepted only when $\Mk<a_k$, where  $a_k$ is a predetermined threshold for the stratum $k$.

To investigate the asymptotic properties of the above two stratified rerandomization strategies and obtain valid inferences for the average treatment effect, we first establish the joint asymptotic normality of the stratified difference-in-means estimator for vector potential outcomes. Our analysis is conducted under the randomization inference framework, where both $Y_i(z)$ and $\Xi$ are fixed quantities, and randomness originates only from the treatment assignment $Z_i$.

\section{Joint asymptotic normality of stratified difference-in-means estimator}

Let us consider (fixed) $d$-dimensional potential outcomes  $\Ymul_{i}(z)=(R_{i,1}(z),\cdots,R_{i,d}(z))^\T$, $i=1,\ldots,n,$ $z=0,1$. In what follows, $\Ymul_i(z)$ can take the form of $Y_i(z)$, $\Xi$, or $(Y_i(z), \Xi^\T)^\T$. Similar to the definitions established in Section~\ref{sec2}, we can define the vector-form average treatment effect $\tmul$, its stratified difference-in-means estimator $\htmul$, and the covariances of $\Ymul_{i}(z)$ and $\tmul$. 
\begin{proposition}
\label{prop0}
Under stratified randomization, the covariance of $n^{1/2}(\htmul-\tmul)$ is 
\[\Sigmul=\skS\pik \Big \{ \frac{\SkYmul(1)}{\pk}+\frac{\SkYmul(0)}{1-\pk}-\Sktmul \Big\}.\]
\end{proposition}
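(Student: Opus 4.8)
The plan is to reduce the claim to the finite-population Neyman covariance identity applied separately within each stratum, and then to assemble the strata by exploiting the fact that the assignments are drawn independently across strata. That independence forces all cross-stratum terms to vanish, so that $\cov(\htmul)=\skS\pik^2\,\cov(\htkmul)$; since $\tmul$ is a fixed constant, $\Sigmul=n\,\cov(\htmul)$. Thus everything comes down to the single-stratum covariance $\cov(\htkmul)$ together with some elementary bookkeeping.

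First I would record the moments that govern the computation. Within stratum $k$, complete randomization makes $\{\Zi:\ik\}$ an exchangeable binary vector with exactly $\nkt$ ones, so $E(\Zi)=\pk$, $\var(\Zi)=\pk(1-\pk)$, and $\cov(\Zi,Z_j)=-\pk(1-\pk)/(\nk-1)$ for $i\neq j$. Writing $\htkmul=\bar R^{\obs}_{[k]1}-\bar R^{\obs}_{[k]0}$ with $\bar R^{\obs}_{[k]1}=\nkt^{-1}\sumk \Zi \Ymul_i(1)$ and $\bar R^{\obs}_{[k]0}=\nkc^{-1}\sumk(1-\Zi)\Ymul_i(0)$, I would then expand
\[\cov(\htkmul)=\cov(\bar R^{\obs}_{[k]1})+\cov(\bar R^{\obs}_{[k]0})-\cov(\bar R^{\obs}_{[k]1},\bar R^{\obs}_{[k]0})-\cov(\bar R^{\obs}_{[k]0},\bar R^{\obs}_{[k]1})\]
and evaluate each block by substituting the moments above. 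The two variance blocks give the usual sampling-without-replacement expressions $\nkc\,\SkYmul(1)/(\nk\nkt)$ and $\nkt\,\SkYmul(0)/(\nk\nkc)$, while the two cross blocks collapse, after collecting the diagonal and off-diagonal indicator covariances, to $(S_{[k]\Ymul(1)\Ymul(0)}+S_{[k]\Ymul(1)\Ymul(0)}^\T)/\nk$, where $S_{[k]\Ymul(1)\Ymul(0)}$ is the stratum-$k$ cross-covariance matrix of $\Ymul_i(1)$ and $\Ymul_i(0)$.

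Next I would combine these blocks through the algebraic identity $\Sktmul=\SkYmul(1)+\SkYmul(0)-S_{[k]\Ymul(1)\Ymul(0)}-S_{[k]\Ymul(1)\Ymul(0)}^\T$, obtained by expanding the covariance of the individual vector effects $\Ymul_i(1)-\Ymul_i(0)$. Matching the finite-population corrections against this identity produces the vector Neyman formula $\cov(\htkmul)=\SkYmul(1)/\nkt+\SkYmul(0)/\nkc-\Sktmul/\nk$. Substituting into $\Sigmul=n\skS\pik^2\cov(\htkmul)$ and using $\pik=\nk/n$, the prefactor becomes $n\pik^2=\pik\nk$; combined with $\nkt=\pk\nk$ and $\nkc=(1-\pk)\nk$ this turns $\pik\nk/\nkt$ into $\pik/\pk$, $\pik\nk/\nkc$ into $\pik/(1-\pk)$, and $\pik\nk/\nk$ into $\pik$, which is exactly the stated form of $\Sigmul$.

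The step I expect to be the main obstacle is the cross-covariance $\cov(\bar R^{\obs}_{[k]1},\bar R^{\obs}_{[k]0})$: because the treated and control samples are complementary subsets of the same stratum, this term is nonzero, and it is the sole source of the $-\Sktmul/\nk$ correction that distinguishes the finite-population variance from its with-replacement analogue. Its evaluation hinges on handling the negative off-diagonal indicator covariance $-\pk(1-\pk)/(\nk-1)$ correctly; the rest is entrywise bookkeeping that lifts the scalar Neyman computation to the vector setting.
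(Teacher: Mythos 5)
Your proposal is correct and follows the same skeleton as the paper's proof: both exploit the independence of assignments across strata to reduce $\cov\{n^{1/2}(\htmul-\tmul)\}$ to $n\skS\pik^2\cov(\htkmul)$, and both then substitute the within-stratum vector Neyman formula $\cov(\htkmul)=\SkYmul(1)/\nkt+\SkYmul(0)/\nkc-\Sktmul/\nk$ before the final bookkeeping with $\pik=\nk/n$, $\nkt=\pk\nk$, $\nkc=(1-\pk)\nk$. The only difference is that the paper obtains the within-stratum formula by citing Theorem~3 of Li and Ding (2017), whereas you rederive it from first principles via the indicator moments and the identity $\Sktmul=\SkYmul(1)+\SkYmul(0)-S_{[k]\Ymul(1)\Ymul(0)}-S_{[k]\Ymul(1)\Ymul(0)}^\T$; your computation of the cross term (which contributes $+(S_{[k]\Ymul(1)\Ymul(0)}+S_{[k]\Ymul(1)\Ymul(0)}^\T)/\nk$ to $\cov(\htkmul)$ after the sign flip from $\cov(\Zi,1-Z_j)$) is the correct source of the $-\Sktmul/\nk$ term, so your argument is a valid self-contained substitute for the citation.
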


Because covariates can be considered potential outcomes with no treatment effect, we can apply Proposition~\ref{prop0} to $\Ymul_i(z) = (Y_i(z), \Xi^\T)^\T$ and obtain the following proposition.

\begin{proposition}
\label{prop::covariance}
Under stratified randomization, the covariance of $n^{1/2}(\hat\tau-\tau, \htX^\T)^\T$ is 
\begin{equation}
\label{cov1}
\Sigma=\left(
	\begin{array}{cc}
	\Sigma_{\tau\tau}&\Sigma_{\tau x}\\
	\Sigma_{x\tau}&\Sigma_{xx}\\
	\end{array}\right)
	=\skS\pik \left(
	\begin{array}{cc}
	\frac{\SkY(1)}{\pk}+\frac{\SkY(0)}{1-\pk}-\Skt & \frac{\SkXY^\T(1)}{\pk}+\frac{\SkXY^\T(0)}{1-\pk} \\
	\frac{\SkXY(1)}{\pk}+\frac{\SkXY(0)}{1-\pk} &	\frac{\SkX}{\pk(1-\pk )} \\
	\end{array}
	\right).
\end{equation}
\end{proposition}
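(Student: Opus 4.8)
The plan is to derive Proposition~\ref{prop::covariance} as an immediate specialization of Proposition~\ref{prop0}, applied to the stacked $(1+\p)$-dimensional potential-outcome vector $\Ymul_i(z)=(\Yi(z),\Xi^\T)^\T$. The single substantive idea is that the covariates carry no treatment effect, since $\Xi$ takes the same value under both arms; the remainder is block bookkeeping together with one partial-fractions identity.

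First I would record the block structure of the two stratum-level covariance matrices entering Proposition~\ref{prop0}. Because $\Xi$ does not depend on $z$, centering splits as $\Ymul_i(z)-\bYmulk(z)=(\Yi(z)-\bYk(z),\,(\Xi-\bXk)^\T)^\T$, which gives
\[
\SkYmul(z)=\left(\begin{array}{cc}\SkY(z)&\SkXY^\T(z)\\ \SkXY(z)&\SkX\end{array}\right),
\]
the lower-right block being $\SkX$ for both $z=0,1$ as it involves only the covariates.

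Second, and this is the one genuinely content-bearing step, I would compute the treatment-effect covariance $\Sktmul$. The unit-level effect of the stacked vector is $\Ymul_i(1)-\Ymul_i(0)=(\taui,\,0^\T)^\T$, with stratum-$k$ mean $(\tauk,\,0^\T)^\T$, so
\[
\Sktmul=\left(\begin{array}{cc}\Skt&0^\T\\ 0&0\end{array}\right);
\]
every entry meeting a covariate coordinate vanishes. This is precisely where the observation that ``covariates are potential outcomes with zero treatment effect'' does its work.

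Finally I would substitute these forms into $\Sigmul=\skS\pik\{\SkYmul(1)/\pk+\SkYmul(0)/(1-\pk)-\Sktmul\}$ and read off the four blocks. The $\tau\tau$ block retains the $-\Skt$ correction while the $\tau x$ block does not, because only the upper-left entry of $\Sktmul$ is nonzero; these reproduce the first row of \eqref{cov1}. For the $xx$ block the two variance terms combine via $1/\pk+1/(1-\pk)=1/\{\pk(1-\pk)\}$ to yield $\SkX/\{\pk(1-\pk)\}$, matching the stated $\Sigma_{xx}$. I expect no real obstacle beyond tracking block positions and transposes carefully; the only claim deserving explicit verification is the vanishing of the covariate block of $\Sktmul$, which rests entirely on $\Xi(1)=\Xi(0)$.
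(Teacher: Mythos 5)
Your proposal is correct and follows essentially the same route as the paper: apply Proposition~\ref{prop0} to the stacked vector $(\Yi(z),\Xi^\T)^\T$, observe that $\Sktmul$ has only the $\Skt$ entry nonzero because the covariates carry no treatment effect, and combine $1/\pk+1/(1-\pk)=1/\{\pk(1-\pk)\}$ for the $xx$ block. No gaps.
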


To establish the joint asymptotic normality of $\htmul$, the following conditions need to be satisfied. Without further explanation, limits  are taken as $n$ tends to infinity with no restriction on $\S$ and $\nk$. Let $\| \cdot \|_\infty$ denote the infinity norm of a vector, and let $\mathcal{N}(\mu, \Sigma)$ denote a normal distribution with mean $\mu$ and covariance matrix $\Sigma$. 

\begin{condition}\label{cond::propensity}
	For $k= 1,\dots,K$, there exist constants $\pk^{\infty}$ and $c\in(0,0.5)$ such that $\pk^{\infty}\in(c,1-c)$ and $\max_{k=1,\ldots,\S}|\pk -\pk^{\infty}|\to 0$.
\label{cond1pk}
\end{condition}

\begin{condition}\label{cond::max}
	For $z=0,1$, $\max_{k=1,\ldots,\S}\max_{i\in[k]}\|\Ymul_i(z)-\bYmulk(z)\|_\infty^2/n\to 0$.
\label{cond2}
\end{condition}

\begin{condition}\label{cond::covariance}
	The following three matrices have finite limits:
	 $$  \skS\pik \frac{ \SkYmul(1) }{\pk} , \quad \skS\pik   \frac{ \SkYmul(0) }{ 1-\pk }, \quad \skS\pik\Sktmul, $$
	and the limit of $\Sigmul$, denoted as $\Sigmul^\infty$,  is (strictly) positive definite.
\label{cond3}
\end{condition}

\begin{remark}
Condition~\ref{cond::propensity} assumes that the propensity scores for all strata uniformly converge to limits between zero and one. Condition~\ref{cond::max} requires that the maximum squared distance between each component of the potential outcomes and its stratum-specific means, divided by $n$, tends to zero. When $K=1$, Condition~\ref{cond::max} reduces to that proposed  in \citet{fcltxlpd2016} for establishing the finite-population central limit theorem for simple randomization. Condition~\ref{cond::covariance} is a technical condition. When $d=1$, Conditions~\ref{cond::max} and \ref{cond::covariance}  reduce to those proposed  in \citet{Liu2019} for analysing the properties of regression adjustments in stratified randomized experiments. 
\end{remark}

\begin{theorem}
\label{thm1CLT}
Under Conditions \ref{cond1pk}--\ref{cond3} and stratified randomization, $n^{1/2}(\htmul-\tmul)$ converges in distribution to $\mathcal{N}(0,\Sigmul^\infty)$ as $n$ tends to infinity.
\end{theorem}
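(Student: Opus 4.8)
The plan is to reduce the $d$-dimensional statement to a scalar one and then prove a finite-population central limit theorem that simultaneously accommodates an arbitrary number of strata and arbitrary stratum sizes. First I would apply the Cram\'er--Wold device: it suffices to show that $\lambda^\T n^{1/2}(\htmul-\tmul)$ converges to $\mathcal{N}(0,\lambda^\T\Sigmul^\infty\lambda)$ for every fixed $\lambda\in\mathbb{R}^d$. Writing $W_i(z)=\lambda^\T R_i(z)$, the quantity $\lambda^\T\htmul$ is exactly the scalar stratified difference-in-means estimator built from the one-dimensional potential outcomes $W_i(z)$, so this reduces matters to the case $d=1$. The three conditions are inherited by $W$: Condition~\ref{cond::max} passes through because $|W_i(z)-\bar W_{[k]}(z)|\le \|\lambda\|_1\,\|R_i(z)-\bYmulk(z)\|_\infty$, and Condition~\ref{cond::covariance} guarantees that the target variance $\lambda^\T\Sigmul^\infty\lambda$ is finite and, by positive definiteness, strictly positive.

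Next I would write the centred estimator as a sum over strata of independent terms. A direct computation gives
\[
n^{1/2}(\htW-\tW)=\skS \xi_{[k]},\qquad
\xi_{[k]}=n^{-1/2}\sumk\Big\{\tfrac{\Zi}{\pk}\big(W_i(1)-\bar W_{[k]}(1)\big)-\tfrac{1-\Zi}{1-\pk}\big(W_i(0)-\bar W_{[k]}(0)\big)\Big\}.
\]
Because complete randomization is performed independently across strata, the $\xi_{[k]}$ are mutually independent with mean zero, and their variances sum to the exact variance given by Proposition~\ref{prop0} applied to $W$, which converges to $\lambda^\T\Sigmul^\infty\lambda$ under Condition~\ref{cond::covariance}.

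The remaining and main difficulty is that neither the number of strata $K$ nor the sizes $\nk$ are controlled, so no elementary theorem applies directly: a Lindeberg--Feller argument over the $K$ independent summands $\xi_{[k]}$ fails when $K$ is fixed and each $\xi_{[k]}$ is only asymptotically normal, while the units cannot be treated as independent because within a stratum $\sumk \Zi=\nkt$ is fixed. The plan is to handle the within-stratum finite-population dependence and the across-stratum independence in one stroke. Concretely, I would invoke a Berry--Esseen bound for linear statistics of sampling without replacement to couple each $\xi_{[k]}$ with a Gaussian of matching variance, and then sum the independent Gaussians exactly; equivalently, one may arrange the sequential within-stratum assignments into a single martingale array over all $n$ units and apply the martingale central limit theorem. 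In either route the accumulated error is governed by a Lindeberg-type negligibility term. Using Condition~\ref{cond::propensity} to bound the inverse-propensity weights $1/\pk$ and $1/(1-\pk)$ uniformly, this term is dominated by $\max_{k}\max_{\ik}|W_i(z)-\bar W_{[k]}(z)|^2/n$, which tends to zero by Condition~\ref{cond::max}. Together with the variance convergence this yields $\skS\xi_{[k]}\to\mathcal{N}(0,\lambda^\T\Sigmul^\infty\lambda)$, and the Cram\'er--Wold device completes the proof. The crux, and where I expect the real work to lie, is precisely this unified double-array treatment reconciling the ``many small strata'' and ``few large strata'' regimes, which is why a unit-level rather than a stratum-level negligibility condition is imposed.
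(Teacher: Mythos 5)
Your reduction is essentially the paper's: both apply the Cram\'er--Wold device, define scalar potential outcomes $W_i(z)=\lambda^\T R_i(z)$, check that Conditions~\ref{cond::max} and \ref{cond::covariance} pass to $W$ (you via H\"older with $\|\lambda\|_1$, the paper via Cauchy--Schwarz), and note that positive definiteness of $\Sigmul^\infty$ gives a strictly positive limiting variance. Where you diverge is in what happens next: the paper stops there, stating the scalar stratified CLT as a lemma that is ``a direct result of Theorem 2 of \citet{Liu2019}'' and omitting its proof, whereas you go on to sketch a proof of that underlying result via the decomposition $n^{1/2}(\hat\tau_W-\tau_W)=\skS\xi_{[k]}$ into independent, mean-zero stratum contributions. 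That decomposition is correct, and your second proposed route --- arranging the sequential within-stratum assignments into a single martingale array and verifying a Lindeberg-type negligibility condition controlled by $\max_k\max_{\ik}|W_i(z)-\bar W_{[k]}(z)|^2/n$ and the uniform bounds on $1/\pk$, $1/(1-\pk)$ --- is the standard and sound way to cover both the many-small-strata and few-large-strata regimes simultaneously. Your first proposed route, however, is not: coupling each $\xi_{[k]}$ to a Gaussian via a Berry--Esseen bound for sampling without replacement gives a per-stratum error that does not vanish when the $\nk$ stay bounded (e.g.\ $\nk=10$ with $K\to\infty$), and these errors do not obviously telescope into something negligible for the sum; in that regime normality comes only from aggregating many non-Gaussian summands, so the martingale/Lindeberg--Feller argument is the one to commit to. In short, your proposal is correct in outline and strictly more self-contained than the paper's proof, at the cost of leaving the hardest step (the double-array CLT) as a sketch rather than a citation; if you pursue it, drop the Gaussian-coupling alternative and carry out the martingale route, and explicitly verify the three finite-limit requirements of Condition~\ref{cond::covariance} for $W$ (they follow from $S^2_{[k]W}(z)=\lambda^\T\SkYmul(z)\lambda$), as the paper does.
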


Theorem~\ref{thm1CLT} provides a normal approximation for the distribution of $\htmul$. It  generalizes the asymptotic normality of the stratified difference-in-means estimator from one-dimensional outcomes \citep{Liu2019} to $d$-dimensional vector outcomes, as well as the result of \citet{fcltxlpd2016} from simple randomization to stratified randomization. The generalization is straightforward in case of a fixed $K$ with each $\nk$ tending to infinity, but novel for an asymptotic regime where both $K$ and $\nk$ can tend to infinity, including the special cases of paired randomized experiments, finely stratified randomized experiments \citep{fogarty2018finely}, and threshold blocking design \citep{Higgins2016}.

To construct a large-sample confidence set of $\tmul$ using Theorem~\ref{thm1CLT}, we propose an asymptotically conservative estimator of $\Sigmul$.
When there are at least two treated and two control units in each stratum, we can replace the stratum-specific population covariances, $ \SkYmul(1)$ and $ \SkYmul(0)$, by their  sample analog and ignore the term $\Sktmul$ to conservatively estimate  $\Sigmul$. When there is only one treated or control unit in some strata, we can generalize the variance estimator proposed by \citet{Pashley2017} to handle $d$-dimensional potential outcomes.  Define $s^2_{[k]R}(z)$ as the sample covariance of $R_{i}$ in stratum $k$ under treatment arm $z$ when $n_{[k]z} \ge 2$, $z=0,1$. Let $A_{ss} = \{ k: \nkt = 1 \ \textnormal{or} \ \nkc = 1 \}$ be the set of small strata, where ``ss" stands for ``small strata".  Let $n_{ss} = \sum_{k \in A_{ss} } \nk$ be the total number of units in small strata, and let $\tau_{R,ss} = \sum_{k \in A_{ss}} (\nk / n_{ss} ) \trk$ and $\hat \tau_{R,ss} = \sum_{k \in A_{ss}} (\nk / n_{ss} ) \htkmul$. Throughout the paper, we assume that $\nk < n_{ss} / 2$ for all $k \in A_{ss}$. Then an asymptotically conservative estimator of $\Sigmul$ is 
\begin{eqnarray}
\label{eq:Sigmul}
\hat \Sigma_{R} &= &   \sum_{k \notin A_{ss} } \pik\Big\{\frac{s^2_{[k]R}(1)}{\pk}+\frac{s^2_{[k]R}(0)}{1-\pk}\Big\} \nonumber  \\
& & +  \left(  \frac{n_{ss} }{n} \right)^2  \sum_{k \in A_{ss} }  \frac{ n \nk^2 } { ( n_{ss} - 2 \nk  ) \Big( n_{ss} + \sum_{h \in A_{ss}} \frac{ n_{[h]}^2  }{ n_{ss} - 2  n_{[h]} }  \Big)  }  ( \htkmul -  \hat \tau_{R,ss} )( \htkmul -  \hat \tau_{R,ss} )^\T.\nonumber
\end{eqnarray}

\begin{condition}
\label{cond::ss}
There exists a constant $C$ such that $n^{-1}\sum_{i=1}^n R_i(z)^\T R_i(z) \leq C,\ z=0,1 $.
\end{condition}

\begin{theorem}
\label{thm:cov-est}
Under Conditions \ref{cond1pk}--\ref{cond::ss} and stratified randomization, $E( \hat  \Sigma_{R} ) = \tilde{\Sigma}_{R} $ and $\hat \Sigma_{R} - \tilde{\Sigma}_{R}$ converges to zero in probability, where
\begin{eqnarray}
\tilde{\Sigma}_{R} & = &  \Sigma_{R} + \sum_{k \notin A_{ss}} \pik\Sktr  \nonumber \\
&& + \frac{n_{ss}^2}{n} \sum_{k \in A_{ss} }  \frac{  \nk^2 } { ( n_{ss} - 2 \nk  ) \Big( n_{ss} + \sum_{h \in A_{ss}} \frac{ n_{[h]}^2  }{ n_{ss} - 2  n_{[h]} }  \Big)  }  ( \trk - \trs )( \trk - \trs )^\T \nonumber
\end{eqnarray}
with $\tilde{\Sigma}_{R} -  \Sigma_{R} $ being positive semidefinite.
\end{theorem}


Next, we apply Theorem \ref{thm1CLT} to $\Ymul_i(z) = (Y_i(z), \Xi^\T)^\T$. The following conditions should be met.

\begin{condition}
	For each treatment arm $z=0,1$, 
	$$ \max_{k=1,\ldots,\S}\max_{i\in[k]}\{\Yi(z)-\bYk(z)\}^2/n\to 0,\ \text{and}\ \max_{k=1,\ldots,\S}\max_{i\in[k]}\|\Xi-\bXk\|_\infty^2/n\to0 . $$
\label{cond4}
\end{condition}

\vspace{-1cm}

\begin{condition}
	The following two matrices have finite limits:
	\begin{eqnarray*}
	\skS\frac{\pik}{\pk} \left(
	\begin{array}{cc}
	\SkY(1) & \SkXY^\T(1)\\
	\SkXY(1) & \SkX\\
	\end{array}
	\right),\quad 
	\skS\frac{\pik}{1-\pk} \left(
	\begin{array}{cc}
	\SkY(0) & \SkXY^\T(0)\\
	\SkXY(0) & \SkX\\
	\end{array}
	\right),
	\end{eqnarray*}
	$\skS\pik\Skt$ has a limit, and the limit of $\Sigma$, denoted by $\Sigma^\infty$,  is (strictly) positive definite.
\label{cond5}
\end{condition}

\begin{corollary}
	\label{cor1}
	Under stratified randomization, if Conditions \ref{cond1pk}, \ref{cond4}, and \ref{cond5} hold, then $n^{1/2}(\hat{\tau}-\tau, \htX^\T)^\T$
converges in distribution to $\mathcal{N}(0,{\Sigma}^\infty)$ as $n$ tends to infinity.
\end{corollary}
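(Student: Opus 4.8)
The plan is to obtain Corollary~\ref{cor1} as a direct specialization of Theorem~\ref{thm1CLT}, exploiting the observation (already noted before Proposition~\ref{prop::covariance}) that the covariates behave as potential outcomes carrying no treatment effect. Concretely, I would take the generic $d$-dimensional potential outcome to be $\Ymul_i(z) = (\Yi(z), \Xi^\T)^\T$ with $d = \p+1$, so that $X_i(1)=X_i(0)=\Xi$. Under this identification $\htmul$ reduces to $(\hat\tau, \htX^\T)^\T$ and $\tmul$ to $(\tau, 0^\T)^\T$, since the covariate block has zero average effect. The entire argument is then a matter of checking that the hypotheses of the corollary (Conditions~\ref{cond4} and \ref{cond5}) are precisely what Theorem~\ref{thm1CLT} demands (Conditions~\ref{cond2} and \ref{cond3}) for this particular $\Ymul_i(z)$, and that the limiting covariance furnished by the theorem coincides with $\Sigma^\infty$.

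First I would verify that Condition~\ref{cond4} is Condition~\ref{cond2} under the substitution. Writing $\Ymul_i(z)-\bYmulk(z) = (\Yi(z)-\bYk(z),\ (\Xi-\bXk)^\T)^\T$, its squared infinity norm equals $\max\{(\Yi(z)-\bYk(z))^2,\ \|\Xi-\bXk\|_\infty^2\}$; because a maximum of two nonnegative sequences tends to zero if and only if each does, the two displays of Condition~\ref{cond4} together are exactly Condition~\ref{cond2} for $\Ymul_i(z)=(\Yi(z),\Xi^\T)^\T$.

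Next I would match Condition~\ref{cond5} to Condition~\ref{cond3}. The structural facts are that the stratum covariance $\SkYmul(z)$ has the block form built from $\SkY(z)$, $\SkXY(z)$, and $\SkX$ (the $X$--$X$ block being free of $z$ since $\Xi$ is unaffected by treatment), while the individual effect vector $\Ymul_i(1)-\Ymul_i(0)=(\taui,\ 0^\T)^\T$ forces $\Sktmul$ to be the matrix whose sole nonzero entry is $\Skt$ in the $(1,1)$ position. Substituting these into the three matrices of Condition~\ref{cond3} reproduces the two block matrices and the scalar $\skS\pik\Skt$ of Condition~\ref{cond5}; since the finite-$n$ covariances then agree block by block, their limits agree and the positive-definiteness requirements coincide. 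Hence Condition~\ref{cond5} is Condition~\ref{cond3} for this $\Ymul_i(z)$, and Theorem~\ref{thm1CLT} yields asymptotic normality with limiting covariance $\Sigmul^\infty$.

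Finally I would confirm $\Sigmul^\infty=\Sigma^\infty$ by inserting the block forms of $\SkYmul(z)$ and $\Sktmul$ into the formula of Proposition~\ref{prop0}: the $(1,1)$ entry becomes $\skS\pik\{\SkY(1)/\pk+\SkY(0)/(1-\pk)-\Skt\}$, the off-diagonal becomes $\skS\pik\{\SkXY^\T(1)/\pk+\SkXY^\T(0)/(1-\pk)\}$, and the $X$--$X$ block becomes $\skS\pik\SkX\{1/\pk+1/(1-\pk)\}=\skS\pik\SkX/\{\pk(1-\pk)\}$, matching the entries of $\Sigma$ in~\eqref{cov1}; passing to the limit gives the claim. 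I do not expect a genuine obstacle: the content is entirely the block algebra respecting the ``no treatment effect on $X$'' structure, and the only point needing care is the identity $1/\pk+1/(1-\pk)=1/\{\pk(1-\pk)\}$ that collapses the two $\SkX$ terms into the single $\Sigma_{xx}$ block.
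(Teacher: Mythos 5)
Your proposal is correct and follows exactly the paper's route: apply Theorem~\ref{thm1CLT} to $\Ymul_i(z)=(\Yi(z),\Xi^\T)^\T$, note that Conditions~\ref{cond4} and~\ref{cond5} reduce to Conditions~\ref{cond::max} and~\ref{cond::covariance} under this substitution, and identify $\Sigmul^\infty$ with $\Sigma^\infty$ via the block computation of Proposition~\ref{prop::covariance}. The paper states this in two lines; your version simply makes the block-matrix and infinity-norm verifications explicit, and they are all accurate.
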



\section{Asymptotics of stratified rerandomization}

\subsection{Stratified rerandomization based on the overall Mahalanobis distance}
\label{SRRoM}

According to Proposition~\ref{prop::covariance}, the overall Mahalanobis distance
$
	M_{\htX} =(\htX)^\T\cov(\htX)^{-1}\htX= n (\htX)^\T \Sigma_{xx}^{-1} \htX,
$
where $\Sigma_{xx}=\skS\pik\SkX/\{\pk(1-\pk)\}$ is the lower right block matrix of $\Sigma$ known at the design stage of the experiment. Denote $\Mtau = \{ (Z_1,\dots,Z_n): \  M_{\htX} < a   \}$ as an event that an assignment is accepted under the \underline{s}tratified \underline{r}e\underline{r}andomization based on the \underline{o}verall \underline{M}ahalanobis distance  $M_{\htX}$, which is abbreviated as SRRoM. 

\begin{proposition}
	\label{prop1pa}
	Under SRRoM, if Conditions \ref{cond1pk}, \ref{cond4}, and \ref{cond5} hold, then the asymptotic probability of accepting a random assignment is $p_a=\pr(\chi^2_{\p}<a)$, where $\chi^2_{\p}$ represents a chi-square distribution with $\p$ degrees of freedom.
\end{proposition}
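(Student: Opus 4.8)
The plan is to reduce the acceptance probability $\pr(M_{\htX}<a)$, taken over the underlying stratified randomization, to a statement about the limiting distribution of the quadratic form $M_{\htX}$. First I would invoke Corollary~\ref{cor1}, which under Conditions~\ref{cond1pk}, \ref{cond4}, and \ref{cond5} gives the joint asymptotic normality of $n^{1/2}(\hat\tau-\tau,\htX^\T)^\T$. Reading off the covariate marginal yields $n^{1/2}\htX \to_d \mathcal{N}(0,\Sigma_{xx}^\infty)$, where $\Sigma_{xx}^\infty$ is the limit of $\Sigma_{xx}=\skS\pik\SkX/\{\pk(1-\pk)\}$, the lower-right block of $\Sigma^\infty$. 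Since $\Sigma^\infty$ is positive definite by Condition~\ref{cond5}, its principal submatrix $\Sigma_{xx}^\infty$ is also positive definite, so $(\Sigma_{xx}^\infty)^{-1}$ exists and the limiting quadratic form is well defined.

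Next I would address the fact that $M_{\htX}=(n^{1/2}\htX)^\T\Sigma_{xx}^{-1}(n^{1/2}\htX)$ is built from the finite-$n$ matrix $\Sigma_{xx}$ rather than its limit. Writing $V_n=n^{1/2}\htX$, I would split $M_{\htX}=V_n^\T(\Sigma_{xx}^\infty)^{-1}V_n + V_n^\T\{\Sigma_{xx}^{-1}-(\Sigma_{xx}^\infty)^{-1}\}V_n$. Because $V_n=O_p(1)$ (being convergent in distribution) and $\Sigma_{xx}\to\Sigma_{xx}^\infty$ forces $\Sigma_{xx}^{-1}-(\Sigma_{xx}^\infty)^{-1}\to 0$ by continuity of matrix inversion at a positive definite limit, the second term is $o_p(1)$. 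The continuous mapping theorem gives that the first term converges in distribution to $W^\T(\Sigma_{xx}^\infty)^{-1}W$ with $W\sim\mathcal{N}(0,\Sigma_{xx}^\infty)$, and Slutsky's theorem then yields $M_{\htX}\to_d W^\T(\Sigma_{xx}^\infty)^{-1}W$.

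Finally I would identify the limiting quadratic form: writing $W=(\Sigma_{xx}^\infty)^{1/2}U$ with $U\sim\mathcal{N}(0,I_p)$ shows $W^\T(\Sigma_{xx}^\infty)^{-1}W=U^\T U\sim\chi^2_p$. Because the $\chi^2_p$ distribution function is continuous at every point, convergence in distribution upgrades to convergence of the acceptance probability, $\pr(M_{\htX}<a)\to\pr(\chi^2_p<a)=p_a$. I expect the only delicate point to be the bookkeeping in the combined Slutsky and continuous-mapping step, namely justifying that the deterministic convergence $\Sigma_{xx}\to\Sigma_{xx}^\infty$ interacts correctly with the distributional convergence of $V_n$; the remaining steps are standard. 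The positive definiteness of $\Sigma_{xx}^\infty$ inherited from $\Sigma^\infty$ is precisely what legitimizes both the inverse and the chi-square identification.
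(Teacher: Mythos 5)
Your proposal is correct and follows essentially the same route as the paper's proof: apply Corollary \ref{cor1} to get $n^{1/2}\htX\ \dot{\sim}\ \mathcal{N}(0,\Sigma_{xx})$, conclude $M_{\htX}\ \dot{\sim}\ \chi^2_{\p}$, and pass to the acceptance probability via continuity of the $\chi^2_{\p}$ distribution function. Your explicit Slutsky argument handling the finite-$n$ matrix $\Sigma_{xx}$ versus its limit $\Sigma_{xx}^\infty$ is a careful elaboration of a step the paper leaves implicit, not a different method.
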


The asymptotic distribution of $\ntau\mid\Mtau$ can be derived from Corollary \ref{cor1}. Let $R^2=\cov(\hat\tau,\htX)\var(\htX)^{-1}\cov(\htX,\hat\tau)/\var(\hat\tau)=\Sigma_{\tau x}\Sigma_{xx}^{-1}\Sigma_{x\tau}/\Sigma_{\tau\tau}$ be the squared multiple correlation between $\hat\tau$ and $\htX$ under stratified randomization. Let us denote independent random variables as $\epsilon_0\sim\mathcal{N}(0,1)$ and $L_{\p,a}\sim(D_1\mid D^\T D<a)$, where $D=(D_1,\ldots,D_{p})^\T$ is a  $\p$-dimensional $\mathcal{N}(0,I)$ distributed random vector.
In what follows, the notation $\dot{\sim}$ will be used for two sequences of random vectors converging to the same distribution as $n$ tends to infinity.

\begin{theorem}
	\label{thm2}
	Under SRRoM, if Conditions \ref{cond1pk}, \ref{cond4}, and \ref{cond5} hold, then
	\begin{equation}
	n^{\frac{1}{2}}(\hat{\tau}-\tau)\mid\Mtau\ \dot{\sim}\  \Sigma_{\tau\tau}^{\frac{1}{2}}\big\{(1-R^2)^{\frac{1}{2}}\epsilon_0+(R^2)^{\frac{1}{2}}L_{\p,a}\big\}. \nonumber
	\label{thm2rv}
	\end{equation}
\end{theorem}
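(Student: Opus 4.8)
The plan is to combine the joint central limit theorem of Corollary~\ref{cor1} with the Gaussian conditioning calculus of \citet{Li2018}, and then to argue that unconditional weak convergence transfers to convergence of the conditional law of $\ntau$ given $\Mtau$. To set up, write $V=n^{1/2}\htX$; the matrix $\Sigma_{xx}$ is a deterministic design-stage quantity converging, by Condition~\ref{cond5}, to a positive definite limit $\Sigma^\infty_{xx}$, and since $M_{\htX}=V^\T\Sigma_{xx}^{-1}V$ the acceptance event is $\Mtau=\{V^\T\Sigma_{xx}^{-1}V<a\}$. By Corollary~\ref{cor1} the pair $(\ntau,V)$ converges jointly to a centred Gaussian $(T^\infty,V^\infty)$ with covariance $\Sigma^\infty$, whose blocks I denote $\Sigma^\infty_{\tau\tau},\Sigma^\infty_{\tau x},\Sigma^\infty_{xx}$, and correspondingly $R^2\to R^2_\infty:=\Sigma^\infty_{\tau x}(\Sigma^\infty_{xx})^{-1}\Sigma^\infty_{x\tau}/\Sigma^\infty_{\tau\tau}$.

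Next I would decompose the Gaussian limit. Standardising, put $D=(\Sigma^\infty_{xx})^{-1/2}V^\infty$, so $D\sim\mathcal{N}(0,I_{\p})$ and the limiting constraint reads $D^\T D<a$; this is exactly the vector used to define $L_{\p,a}$. Projecting $T^\infty$ onto $D$ gives the orthogonal decomposition $T^\infty=\gamma^\T D+\xi$, with $\gamma=(\Sigma^\infty_{xx})^{-1/2}\Sigma^\infty_{x\tau}$ and a Gaussian residual $\xi$ independent of $D$. A short computation gives $\gamma^\T\gamma=\Sigma^\infty_{\tau x}(\Sigma^\infty_{xx})^{-1}\Sigma^\infty_{x\tau}=R^2_\infty\Sigma^\infty_{\tau\tau}$ and $\var(\xi)=(1-R^2_\infty)\Sigma^\infty_{\tau\tau}$, so I may write $\xi=(\Sigma^\infty_{\tau\tau})^{1/2}(1-R^2_\infty)^{1/2}\epsilon_0$ with $\epsilon_0\sim\mathcal{N}(0,1)$.

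I would then condition on $\{D^\T D<a\}$. As $\xi$ is independent of $D$, its law is unchanged, contributing $(\Sigma^\infty_{\tau\tau})^{1/2}(1-R^2_\infty)^{1/2}\epsilon_0$. For the projection, the rotational invariance of both the law of $D$ and the region $\{D^\T D<a\}$ shows that $\gamma^\T D/\|\gamma\|$ given $D^\T D<a$ has the same law as $D_1$ given $D^\T D<a$, namely $L_{\p,a}$; hence $\gamma^\T D\mid\{D^\T D<a\}\overset{d}{=}\|\gamma\|\,L_{\p,a}=(\Sigma^\infty_{\tau\tau})^{1/2}(R^2_\infty)^{1/2}L_{\p,a}$, independent of $\epsilon_0$. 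Adding the two pieces reproduces the right-hand side of~\eqref{thm2rv}, once $\Sigma^\infty_{\tau\tau},R^2_\infty$ are replaced by the finite-$n$ quantities $\Sigma_{\tau\tau},R^2$ to which they converge.

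The step I expect to be the main obstacle is showing that the unconditional weak convergence actually delivers convergence of the conditional distributions. I would make this rigorous by writing, for each continuity point $t$,
\begin{equation*}
\pr(\ntau\le t\mid\Mtau)=\frac{\pr(\ntau\le t,\ V^\T\Sigma_{xx}^{-1}V<a)}{\pr(V^\T\Sigma_{xx}^{-1}V<a)}.
\end{equation*}
By Proposition~\ref{prop1pa} the denominator converges to $p_a=\pr(\chi^2_{\p}<a)>0$, hence is bounded away from zero. For the numerator I would invoke the extended continuous mapping theorem, combining $(\ntau,V)\xrightarrow{d}(T^\infty,V^\infty)$ with the deterministic limit $\Sigma_{xx}\to\Sigma^\infty_{xx}$ to obtain $(\ntau,V^\T\Sigma_{xx}^{-1}V)\xrightarrow{d}(T^\infty,(V^\infty)^\T(\Sigma^\infty_{xx})^{-1}V^\infty)$; the set $\{(s,w):s\le t,\ w<a\}$ is a continuity set of this limit, because its boundary $\{s=t\}\cup\{w=a\}$ carries no mass ($T^\infty$ is Gaussian and $(V^\infty)^\T(\Sigma^\infty_{xx})^{-1}V^\infty\sim\chi^2_{\p}$ are both continuously distributed). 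The portmanteau theorem then identifies the numerator's limit, which together with the denominator yields the distribution function on the right-hand side of~\eqref{thm2rv}, completing the argument.
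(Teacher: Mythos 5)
Your proposal is correct and follows essentially the same route as the paper: the same Gaussian projection decomposition of the limit ($A=\Sigma_{\tau x}\Sigma_{xx}^{-1}B+\epsilon$ with $\epsilon$ independent of $B$), the same identification of the projected component with $(\Sigma_{\tau\tau}R^2)^{1/2}L_{\p,a}$ via rotational invariance (Lemma~\ref{lemA1}), and the same reliance on Corollary~\ref{cor1} for the joint asymptotic normality. The only difference is that where the paper transfers unconditional weak convergence to the conditional law given $\Mtau$ by citing Corollary A1 of \citet{Li2018} (stated as Lemma~\ref{corA1}), you give a self-contained ratio-plus-portmanteau argument, using Proposition~\ref{prop1pa} to keep the denominator bounded away from zero; both are valid.
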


When the number of strata is fixed with their sizes tending to infinity, Theorem~\ref{thm2} becomes a direct extension of the asymptotic result of rerandomization in completely randomized experiments \citep{Li2018}, and can also be obtained from the asymptotic theory of rerandomization for tiers of covariates \citep{morgan2015rerandomization,Li2018}. The novelty of this theorem lies in the fact that it makes few restrictions on the number of strata and their sizes, allowing the number of strata to tend to infinity with their sizes fixed. According to Theorem~\ref{thm2}, the asymptotic distribution of the stratified estimator under SRRoM is a truncated-normal, which has the same formula as that of the difference-in-means estimator under merely rerandomization; however, both $\Sigma_{\tau\tau}$ and $R^2$ have distinct meanings due to different sources of randomness.

Theorem \ref{thm2} implies the asymptotic unbiasedness and improvement in the efficiency of stratified rerandomization, as summarized in the next corollary. Let $v_{\p,a}=\pr(\chi^2_{\p+2}\le a)/\pr(\chi^2_{\p}\le a)$ denote the variance of $L_{\p,a}$.

\begin{corollary}
	\label{cor3}
	Under SRRoM, if Conditions \ref{cond1pk}, \ref{cond4}, and \ref{cond5} hold, then $\hat\tau$ is an asymptotically unbiased estimator of  $\tau$. The asymptotic variance of $\ntau$ under SRRoM is the limit of $\Sigma_{\tau\tau}\big\{1-(1-v_{\p,a})R^2\big\}$, whereas the percentage of reduction in asymptotic variance compared to stratified randomization is the limit of $(1-v_{\p,a})R^2$.
\end{corollary}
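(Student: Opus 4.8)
The plan is to derive both conclusions as a direct computation of the first two moments of the limiting random variable supplied by Theorem~\ref{thm2}. Write
\[
T \;=\; \Sigma_{\tau\tau}^{1/2}\big\{(1-R^2)^{1/2}\epsilon_0 + (R^2)^{1/2}L_{\p,a}\big\},
\]
so that $\ntau\mid\Mtau\ \dot{\sim}\ T$, and then compare $\Var(T)$ against the stratified-randomization benchmark, which by Proposition~\ref{prop::covariance} and Corollary~\ref{cor1} is the limit of $\Sigma_{\tau\tau}$.

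First I would show $E[T]=0$, which yields asymptotic unbiasedness. Since $\epsilon_0\sim\mathcal{N}(0,1)$ we have $E[\epsilon_0]=0$. For $L_{\p,a}\sim(D_1\mid D^\T D<a)$ the key point is a symmetry argument: the acceptance region $\{D^\T D<a\}$ is invariant under the reflection $D_1\mapsto -D_1$, and the density of $D\sim\mathcal{N}(0,I)$ is even in $D_1$, so the conditional law of $L_{\p,a}$ is symmetric about the origin and $E[L_{\p,a}]=0$. Hence $E[T]=0$.

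Next I would compute $\Var(T)$. Because $\epsilon_0$ and $L_{\p,a}$ are independent with mean zero, the cross term vanishes, and using $\Var(\epsilon_0)=1$ together with the stated identity $\Var(L_{\p,a})=v_{\p,a}=\pr(\chi^2_{\p+2}\le a)/\pr(\chi^2_{\p}\le a)$,
\[
\Var(T)=\Sigma_{\tau\tau}\big\{(1-R^2)+R^2 v_{\p,a}\big\}=\Sigma_{\tau\tau}\{1-(1-v_{\p,a})R^2\}.
\]
Taking limits gives the claimed asymptotic variance. Since the asymptotic variance under stratified randomization is the limit of $\Sigma_{\tau\tau}$, the fractional reduction is $\{\Sigma_{\tau\tau}-\Sigma_{\tau\tau}[1-(1-v_{\p,a})R^2]\}/\Sigma_{\tau\tau}=(1-v_{\p,a})R^2$, whose limit is the stated percentage reduction.

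The computation itself is mechanical given Theorem~\ref{thm2}; the only genuinely mathematical inputs are the symmetry argument yielding $E[L_{\p,a}]=0$ and the truncated-normal variance identity $\Var(L_{\p,a})=v_{\p,a}$, both elementary properties of the conditioned Gaussian vector (as in \citet{Li2018}). The one point requiring care is that Theorem~\ref{thm2} provides convergence in distribution, whereas the statement concerns limiting moments; I would either read the ``asymptotic variance'' as the variance of the weak limit $T$ (in which case the display above is conclusive), or, to pass from distributional convergence to convergence of the actual moments of $\ntau\mid\Mtau$, invoke uniform integrability, which holds because the rerandomization truncation and the bounded-second-moment Conditions~\ref{cond4}--\ref{cond5} keep the relevant second moments bounded. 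This is the main, and essentially the only, obstacle.
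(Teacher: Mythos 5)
Your proposal is correct and follows essentially the same route as the paper's proof: compute the mean and variance of the truncated-normal limit from Theorem~\ref{thm2} using $E(\epsilon_0)=E(L_{\p,a})=0$, $\var(\epsilon_0)=1$, and $\var(L_{\p,a})=v_{\p,a}$, then compare against the benchmark $\Sigma_{\tau\tau}$ from Proposition~\ref{prop::covariance}. The paper implicitly interprets the asymptotic moments as the moments of the weak limit, which is the first of the two readings you offer, so your additional remarks on symmetry and uniform integrability are careful elaborations rather than a different argument.
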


\begin{remark}
\citet{Schultzberg2019} proposed a stratified rerandomization strategy using the Mahalanobis distance  $ M_{\tilde \tau_X} = (\tilde \tau_X )^\T\cov(\htX)^{-1} \tilde \tau_X  $, where
$
\tilde \tau_X = ( 1 / \nt ) { \sum_{i=1}^{n} \Zi \Xi} -  ( 1 / \nc ) \sum_{i=1}^{n} ( 1 - \Zi ) \Xi
$
is the difference-in-means estimator of the covariates. They showed that there is no guarantee that the overall difference-in-mean estimator is unbiased (non-asymptotically) unless $\nkt=\nkc$ in each stratum. We generalize this result using asymptotic theory in the Supplementary Material. We show that the overall difference-in-means estimator can be asymptotically biased if the propensity scores differ across strata. The aim of \citet{Schultzberg2019} was to show that substantial computational and efficiency gains  can be obtained by first stratifying and then finding the ``optimal" allocations within each stratum. For this procedure, the asymptotic theory in \citet{Li2018} has not yet been proved to be valid, which is why \citet{Schultzberg2019} suggested a Fisher randomization test for inferences.
\end{remark}

As the threshold $a$ tends  to $0$, the asymptotic variance $\Sigma_{\tau\tau}\big\{1-(1-v_{\p,a})R^2\big\}$ tends to its minimum value 
$\Sigma_{\tau\tau} ( 1 - R^2) = n \times \min_{\gamma} E\left( \hat \tau - \tau - \htX ^\T \gamma   \right)^2,$
which is equal to the variance of the errors in the linear projection of $\sqrt{ n} (\hat \tau - \tau) $ onto $ \sqrt{n} \htX$. Let us define the projection coefficient vector as
$
\gamma_{\opt} = \argmin_{\gamma} E\left( \hat \tau - \tau - \htX ^\T \gamma   \right)^2 =  \Sigma_{xx}^{-1}\Sigma_{x\tau}.
$
This motivates us to consider a covariate-adjusted estimator $\hat \tau_{\adj} = \hat \tau -   \htX ^\T \gamma_{\opt} $, which has the smallest asymptotic variance among the adjusted estimators of the same form. Asymptotically, the efficiency of $\hat \tau_{\adj}$ under stratified randomization is equal to that of $\hat \tau$ under SRRoM with a threshold $a \rightarrow 0$. That is, stratified rerandomization can be viewed as covariate/regression adjustment in design. In practice, however, $\gamma_{\opt}$ is usually unknown because it depends on the unknown potential outcomes. We need to derive a consistent estimator of $\gamma_{\opt}$. Define $s_{[k]XY}(1)$ as the sample covariance between $X_{i}$ and $Y_{i}(1)$ in stratum $k$ under the treatment  when $\nkt \geq 2$ and define it as $ \{ \nk /  (\nk - 1)   \} \sumk   Z_i   ( X_i -  \bar{X}_{[k]} )  Y_i^\obs  $ when $\nkt = 1$. The intuition comes from the fact that, when $\nkt = 1$,
$$
E \bigg\{ \frac{\nk}{\nk - 1} \cdot \sumk Z_i (X_i-\bXk) Y_i^\obs \bigg\} =  \frac{1}{\nk -1} \sumk (X_i-\bXk) Y_i(1) = \SkXY(1) .
$$
This relies on our knowledge of $\bXk$. Similarly, we can define $s_{[k]XY}(0)$. Then a consistent estimator of $\gamma_{\opt}$ is
$$
\hat \gamma_{\opt} =  \Sigma_{xx}^{-1} \skS \pik \left(  \frac{ s_{[k]XY}(1) }{\pk} +  \frac{ s_{[k]XY}(0) }{ 1 - \pk}   \right).
$$

\begin{condition}
\label{cond::small-strata}
There exists a constant $C$ such that $n^{-1}\sum_{i=1}^n \Yi^2(z) \leq C,\ z=0,1 $.
\end{condition}
\begin{proposition}
\label{prop::gamma}
Under stratified randomization or SRRoM, if Conditions \ref{cond1pk} and \ref{cond4}--\ref{cond::small-strata} hold, then $\hat \gamma_{\opt}  - \gamma_{\opt}$ converges to zero in probability.
\end{proposition}

\begin{remark}
When there are at least two treated and two control units in each stratum, Proposition~\ref{prop::gamma} still holds without Condition~\ref{cond::small-strata}.
\end{remark}

\begin{remark}
Let  $I_{i \in [k]}$ be the stratification indicator. With unequal propensity scores, we cannot obtain $\hat \tau_{\adj}$ by running  a single regression of $Y_i$ on $Z_i$, $I_{i \in [k]} $, $\Xi$, and/or their interactions, while with equal propensity scores, let $\pk = e$,  we have
$
\gamma_{\opt}   =  (1 - e ) \beta(1) + e \beta(0), 
$
where $\beta(z) =   ( \skS \pik S_{[k] XX}  ) ^{-1} \skS \pik S_{[k]XY}(z)$ is the projection coefficient of $\Xi$ in the weighted projection of $Y_i(z)$ onto $I_{i \in [k]}  $ and  $\Xi$ with weights $\pik / (\nk - 1)$. When $n_{[k]z} \geq 2$ for all $k$ and $z$, we can estimate $\beta(z)$ using the sample, i.e., by running a weighted regression of $Y_i(z)$ on $I_{i \in [k]}  $ and $\Xi$ under  treatment arm $z$ with weights $\pik / (n_{[k]z} - 1)$. Let $\hat \beta(z)$ be  the OLS estimator of the coefficient of $\Xi$. Then we can derive another consistent estimator of $ \gamma_{\opt}  $:
$ \tilde \gamma_{\opt} =   (1 - e )  \hat \beta(1) + e \hat \beta(0). $
The resulting regression-adjusted estimator is
$$
\hat \tau_{\adj}   = \hat \tau -   \htX ^\T \tilde \gamma_{\opt} = \skS \pik  \Big[  \big\{ \Yobs_{[k]1}  -  (  \Xobs_{[k]1} - \bXk ) ^\T \hat \beta(1) \big\}   -  \big\{ \Yobs_{[k]0}  -  (  \Xobs_{[k]0} - \bXk ) ^\T \hat \beta(0) \big\}  \Big].
$$
As shown by \citet{Liu2019}, $\hat \tau_{\adj} $ is the OLS estimator of $Z_i$ in a weighted regression of $Y_i$ on $Z_i$, $I_{i \in [k]}  $, $\Xi$,  $Z_i \times ( I_{i \in [k]} - \pik ) $, and $Z_i \times ( \Xi - \bXk ) $.
\end{remark}

\begin{remark}
In completely randomized experiments, rerandomization followed by regression adjustment can further improve the estimation efficiency if the analyzer has access to more covariates than the designer \citep{Li2020}. This property is inherited in stratified randomized experiments. Under SRRoM, the regression-adjusted estimator $\hat \tau_{\adj} $  using all the covariates that are used in the design stage and additional covariates that are collected in the analysis stage can further improve the efficiency.  
\end{remark}




Next, we compare the quantile ranges of $\ntau$ under SRRoM and stratified randomization. Let $\nu_\xi(R^2,p_a,p)$ be the $\xi$th quantile of the random variable $(1-R^2)^{1/2}\epsilon_0+(R^2)^{1/2}L_{\p,a}$, then under SRRoM, the asymptotic $\CI$ quantile range of $\ntau$ is the limit of 
\begin{equation*}
	\big[\Sigma_{\tau\tau}^{\frac{1}{2}}\nu_{\alpha/2}(R^2,p_a,p),\ \Sigma_{\tau\tau}^{\frac{1}{2}}\nu_{1-\alpha/2}(R^2,p_a,p)\big],
\end{equation*}
for the length of which we present the following corollary.

\begin{corollary}
	\label{thm3}
	If Conditions \ref{cond1pk}, \ref{cond4}, and \ref{cond5} hold, then the length of the $\CI$ quantile range of the asymptotic distribution of $\ntau$ under SRRoM is less than or equal to  that under stratified randomization; this length is non-increasing in $R^2$ and non-decreasing in $p_a$ and $\p$.
\end{corollary}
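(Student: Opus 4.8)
The plan is to recast the statement as a set of \emph{peakedness} comparisons among symmetric random variables and to read off the quantile-range lengths from them. Write $V(R^2,p_a,\p)=(1-R^2)^{1/2}\epsilon_0+(R^2)^{1/2}L_{\p,a}$ for the limiting random variable in Theorem~\ref{thm2}, so that the half-length of the asymptotic $\CI$ quantile range of $\ntau$ equals $\Sigma_{\tau\tau}^{1/2}\nu_{1-\alpha/2}(R^2,p_a,\p)$. Both $\epsilon_0$ and $L_{\p,a}$ are symmetric about $0$ (the latter because the event $D^\T D<a$ is invariant under $D_1\mapsto-D_1$), hence so is $V$; therefore $\nu_{\alpha/2}=-\nu_{1-\alpha/2}$ and the length is $2\Sigma_{\tau\tau}^{1/2}\nu_{1-\alpha/2}$. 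For symmetric random variables $X,Y$ I say that $X$ is \emph{more peaked} than $Y$, written $X\succeq Y$, if $\pr(|X|\le t)\ge\pr(|Y|\le t)$ for all $t\ge0$; for such $X,Y$ the central $\CI$ quantile-range length of $X$ is at most that of $Y$, for every $\alpha$. Thus it suffices to prove three peakedness statements: $V\succeq\mathcal N(0,1)$ (the first claim, the standard normal being the limit of $\ntau/\Sigma_{\tau\tau}^{1/2}$ under stratified randomization by Corollary~\ref{cor1}); $V$ is $\succeq$-increasing as $R^2$ increases; and $V$ is $\succeq$-decreasing as $p_a$ or $\p$ increases.

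The first claim and the monotonicity in $p_a$ and $\p$ reduce to comparisons of the single building block $L_{\p,a}$ together with a convolution-stability property of $\succeq$: if $X_1\succeq X_2$ are symmetric unimodal and $W$ is symmetric unimodal and independent of both, then $X_1+W\succeq X_2+W$ (Birnbaum's lemma). The required comparisons of $L_{\p,a}$ follow from its explicit density, proportional to $\phi(x)\,\pr(\chi^2_{\p-1}<a-x^2)$ on $|x|<a^{1/2}$ (which is non-increasing in $|x|$, so $L_{\p,a}$ is symmetric unimodal): (i) $L_{\p,a}\succeq\epsilon_0$, since truncating the Gaussian norm concentrates the first coordinate; (ii) for $a_1<a_2$, $L_{\p,a_1}\succeq L_{\p,a_2}$, a weaker norm constraint being less concentrating, with $L_{\p,a}\to\epsilon_0$ as $a\to\infty$; and (iii) at fixed acceptance probability $p_a=\pr(\chi^2_\p<a)$, $L_{\p_1,a_1}\succeq L_{\p_2,a_2}$ whenever $\p_1<\p_2$ (so $a_1<a_2$), i.e.\ spreading the same balancing budget over more covariates loosens each coordinate. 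Given (i)--(iii) I would lift to $V$ as follows. For the first claim, take $\epsilon_0'\sim\mathcal N(0,1)$ independent so that $(1-R^2)^{1/2}\epsilon_0+(R^2)^{1/2}\epsilon_0'\sim\mathcal N(0,1)$; since $(R^2)^{1/2}L_{\p,a}\succeq(R^2)^{1/2}\epsilon_0'$ by (i) and scaling, Birnbaum's lemma with common factor $(1-R^2)^{1/2}\epsilon_0$ gives $V\succeq\mathcal N(0,1)$. For $p_a$ and $\p$, apply Birnbaum's lemma with the same common factor to (ii) and (iii), using that scaling by $(R^2)^{1/2}$ preserves $\succeq$.

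The genuinely hard part is monotonicity in $R^2$, where Birnbaum's lemma is unavailable: increasing $R^2$ changes the weight on $L_{\p,a}$ itself, and $L_{\p,a}$ is not infinitely divisible, so $(R^2)^{1/2}L_{\p,a}$ cannot be split off a smaller copy. I would instead prove the self-contained lemma that, for any symmetric $T$ with $T\succeq\epsilon_0$ and independent $Z\sim\mathcal N(0,1)$, the Gaussian location mixture $\rho T+(1-\rho^2)^{1/2}Z$ is $\succeq$-increasing in $\rho\in[0,1]$, and then apply it with $T=L_{\p,a}$ and $\rho=(R^2)^{1/2}$. For the lemma, condition on $T$ to write $\pr(|\rho T+(1-\rho^2)^{1/2}Z|\le t)=\mathbb E\,g_\rho(T)$ with $g_\rho(s)=\Phi\{(t-\rho s)/(1-\rho^2)^{1/2}\}-\Phi\{(-t-\rho s)/(1-\rho^2)^{1/2}\}$, and show $\tfrac{\partial}{\partial\rho}\mathbb E\,g_\rho(T)\ge0$; a Gaussian integration-by-parts (Stein) identity turns the $Z$-expectation into a form whose sign is controlled by the peakedness of $T$ relative to the Gaussian. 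Equivalently, because $V(R^2,p_a,\p)$ has exactly the functional form of the limiting law in completely randomized rerandomization, these three monotonicities coincide with the corresponding properties of $\nu_\xi$ established by \citet{Li2018} and may be cited directly. I expect signing the derivative of this mixed Gaussian tail probability to be the principal obstacle; the remaining pieces are routine given Birnbaum's lemma and the explicit density of $L_{\p,a}$.
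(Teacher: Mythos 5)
Your proposal is correct and ultimately rests on the same observation as the paper: since the limiting law in Theorem~\ref{thm2} has exactly the functional form of the rerandomization limit in \citet{Li2018}, the three monotonicity properties of the quantile range follow directly from their Theorem~2 (equivalently, from the peakedness lemmas the paper quotes as Lemmas~\ref{lemA4}--\ref{lemA7}), which is precisely the paper's two-line proof. Your additional from-scratch scaffolding reproduces those lemmas' content, with the $R^2$-monotonicity step (the paper's Lemma~\ref{lemA4}) only sketched rather than completed, but the citation route you name at the end already closes the argument.
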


\begin{remark}
This result is similar with Theorem 2 in \citet{Li2018} where they compared the lengths of the quantile ranges of $\ntau$ under rerandomization using Mahalanobis distance and complete randomization, while we compare the lengths under SRRoM and stratified randomization. 
Corollary~\ref{thm3} suggests that a smaller value of $p_a$ leads to better improvement; however, setting $p_a$ to a very small value can be problematic if   very few assignments are acceptable, which renders little power to   randomization inference. Thus, how to choose the value of $p_a$ remains an open issue and should be investigated in the future.   In practice, we suggest to choose a small value of $p_a$, for example, $p_a = 0.001$.
\end{remark}

As the experimental results yield only part of the potential outcomes, the precise variance of $\hat\tau$ and the theoretical confidence interval of $\tau$ are unknown; however,  we can construct asymptotically conservative estimators. According to Proposition~\ref{prop::gamma}, a consistent estimator of $  \Sigma_{x\tau} $ is
\begin{equation*}
	\hat \Sigma_{x\tau}=\hat \Sigma_{\tau x}^\T=\skS\pik\Big\{\frac{\skXY(1)}{\pk}+\frac{\skXY(0)}{1-\pk}\Big\}. 
\end{equation*}
Applying Theorem~\ref{thm:cov-est} to $Y_i(z)$, we can estimate  $\Sigma_{\tau\tau}$ as follows. 
Define $s^2_{[k]Y}(z)$ as the sample variance of $Y_{i}$'s in stratum $k$ under treatment arm $z$ when $n_{[k]z} \ge 2$, $z=0,1$. Let $\hat \tau_{ss} = \sum_{k \in A_{ss}} (\nk / n_{ss} ) \htk$. Then a conservative estimator of $\Sigma_{\tau\tau}$ is 
{\footnotesize
\begin{eqnarray}
\hat \Sigma_{\tau\tau} &= &   \sum_{k \notin A_{ss} } \pik\Big\{\frac{\skY(1)}{\pk}+\frac{\skY(0)}{1-\pk}\Big\} +  \left(  \frac{n_{ss} }{n} \right)^2  \sum_{k \in A_{ss} }  \frac{ n \nk^2 } { ( n_{ss} - 2 \nk  ) \Big( n_{ss} + \sum_{h \in A_{ss}} \frac{ n_{[h]}^2  }{ n_{ss} - 2  n_{[h]} }  \Big)  }  ( \htk -  \hat \tau_{ss} )^2.\nonumber
\end{eqnarray}
}
Let $\hat{R}^2=\hat{\Sigma}_{\tau x}\Sigma_{xx}^{-1}\hat \Sigma_{x\tau}/\hat \Sigma_{\tau\tau}$ and  $\nu_\xi(\hat R^2,p_a,p)$ be the $\xi$th quantile of $(1-\hat R^2)^{1/2}\epsilon_0+(\hat R^2)^{1/2}L_{\p,a}$. 
\begin{theorem}
	\label{thmvaro}
	Under SRRoM, if Conditions \ref{cond1pk} and \ref{cond4}--\ref{cond::small-strata} hold, then $\hat \Sigma_{\tau\tau}\{1-(1-v_{\p,a})\hat{R}^2\}$ is an asymptotically conservative estimator for the asymptotic variance of $\ntau$ and
	\begin{equation*}
		\big[\hat\tau-(\hat \Sigma_{\tau\tau}/n)^{\frac{1}{2}}\nu_{1-\alpha/2}(\hat R^2,p_a,p),\ \hat\tau-(\hat \Sigma_{\tau\tau}/n)^{\frac{1}{2}} \nu_{\alpha/2}(\hat R^2,p_a,p)\big]
	\end{equation*}
	is an asymptotically conservative $\CI$ confidence interval of $\tau$.
\end{theorem}
\begin{remark}
Based on \citet{Pashley2017}, we obtain another conservative variance estimator, which is applicable when there exist at least two strata of each (small) stratum size; see the  Supplementary Material for detailed discussions.
\end{remark}
\begin{remark}
The above results can be applied to paired randomized experiments, finely stratified randomized experiments \citep{fogarty2018finely}, and threshold blocking designs \citep{Higgins2016}. Moreover, when there are at least two treated and two control units in each stratum, we do not require Condition~\ref{cond::small-strata}.
\end{remark}



\subsection{Stratified rerandomization based on the stratum-specific Mahalanobis distance}

In the special case where $\S$ is fixed and all $\nk$'s tend to infinity, we can rerandomize in each stratum separately and independently.  Let $\MI = \{ (Z_1,\dots,Z_n): \  \Mk < a_k, \ k=1,\dots,\S   \}$ denote an event in which an assignment is accepted under the \underline{s}tratified \underline{r}e\underline{r}andomization based on the \underline{s}tratum-specific \underline{M}ahalanobis distance $\Mk$,  which is abbreviated as SRRsM. In this section, we assume that $\S$ is fixed and $\nk\to\infty,$ $k=1,\ldots,\S$ as $n\to\infty$ unless stated otherwise.

Since $ \ntau=\skS\pik^{1/2}\nk ^{1/2}(\htk-\tauk)$, and each stratum is rerandomized independently under SRRsM, we can simply apply the asymptotic distribution of $\nk ^{1/2}(\htk-\tauk)$ under complete rerandomization \citep{Li2018} to derive the asymptotic distribution of $\ntau$. 

\begin{condition}
	\label{cond7}
	For each $k=1,\ldots,\S$, as $\nk\to\infty$, $\SkY(z)\ (z=0,1)$ and $\Skt$ have finite limits; the limit of $\var\{ \nk ^{1/2}(\htk-\tauk) \}$ is positive; $\SkX$ converges to a (strictly) positive definite matrix; and $\SkXY(z)\ (z=0,1)$ converges to finite limits.
\end{condition}

\begin{proposition}
	\label{prop3pa}
	Under SRRsM, if Conditions \ref{cond1pk}, \ref{cond4}--\ref{cond5}, and \ref{cond7} hold, then the asymptotic probability of accepting a random assignment is $\prod_{k=1}^\S p_{a_k} = \prod_{k=1}^\S\pr(\chi^2_{\p}<a_k)$.
\end{proposition}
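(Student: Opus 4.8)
The plan is to exploit the independence of the within-stratum randomizations. Because under SRRsM each stratum is rerandomized separately and independently, the acceptance events $\{\Mk<a_k\}$, $k=1,\dots,\S$, are mutually independent, so for every finite $n$ the joint acceptance probability factorizes exactly, $\pr(\MI)=\prod_{k=1}^{\S}\pr(\Mk<a_k)$. Since $\S$ is held fixed and every $\nk\to\infty$, it then suffices to show that each factor satisfies $\pr(\Mk<a_k)\to\pr(\chi^2_{\p}<a_k)=p_{a_k}$ and to pass the limit through the finite product. This mirrors the structure of Proposition~\ref{prop1pa}, but now carried out within a single stratum.

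For a single stratum the first step is to rewrite $\Mk$ as an explicit quadratic form. Under complete randomization within stratum $k$ the covariates behave as potential outcomes with zero treatment effect, so $\cov(\htkX)=\SkX/\{\nk\pk(1-\pk)\}$ and hence
\[
\Mk=(\htkX)^\T\cov(\htkX)^{-1}\htkX=\nk\,(\htkX)^\T\Big[\frac{\SkX}{\pk(1-\pk)}\Big]^{-1}\htkX
=\big(\nk^{1/2}\htkX\big)^\T\Big[\frac{\SkX}{\pk(1-\pk)}\Big]^{-1}\big(\nk^{1/2}\htkX\big).
\]
Next I would apply the finite-population central limit theorem to the covariates within stratum $k$, i.e. Theorem~\ref{thm1CLT} (equivalently Corollary~\ref{cor1}) specialized to a single-stratum complete randomization, whose hypotheses are supplied by Conditions~\ref{cond1pk}, \ref{cond4}, and \ref{cond7}. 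This yields $\nk^{1/2}\htkX\xrightarrow{d}\mathcal{N}(0,V_{[k]}^\infty)$, where $V_{[k]}^\infty$ denotes the limit of $\SkX/\{\pk(1-\pk)\}$, which is strictly positive definite by Conditions~\ref{cond1pk} and \ref{cond7}.

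The final step combines these two ingredients. The normalizing matrix $[\SkX/\{\pk(1-\pk)\}]^{-1}$ is deterministic and converges to $(V_{[k]}^\infty)^{-1}$, so by Slutsky's theorem and the continuous-mapping theorem $\Mk\xrightarrow{d}Z^\T(V_{[k]}^\infty)^{-1}Z$ with $Z\sim\mathcal{N}(0,V_{[k]}^\infty)$; whitening via $(V_{[k]}^\infty)^{-1/2}$ shows this limit is distributed as $\chi^2_{\p}$. Because the $\chi^2_{\p}$ distribution is continuous, $a_k$ is a continuity point and $\pr(\Mk<a_k)\to\pr(\chi^2_{\p}<a_k)=p_{a_k}$; multiplying over the fixed, finite collection of strata gives $\pr(\MI)\to\prod_{k=1}^{\S}\pr(\chi^2_{\p}<a_k)$, as claimed. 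The argument is largely routine given the earlier machinery; the only points needing care are that the Mahalanobis normalization uses the \emph{exact} finite-sample covariance $\cov(\htkX)$, so that the scales match and the limit is a standard (unrescaled) chi-square, and that invertibility of the limiting matrix is guaranteed by the positive-definiteness in Condition~\ref{cond7}. The cross-stratum factorization is immediate from the design and poses no genuine obstacle, so the main substantive content is the per-stratum CLT together with the quadratic-form limit.
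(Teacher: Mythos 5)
Your proposal is correct and follows essentially the same route as the paper: factorize the acceptance probability across strata by the independence of the within-stratum rerandomizations, and show each factor converges to $\pr(\chi^2_{\p}<a_k)$. The paper simply cites \citet{Li2018} for the per-stratum chi-square limit, whereas you spell it out via the within-stratum CLT and the continuous-mapping/Slutsky argument (essentially Proposition~\ref{prop1pa} with $K=1$), which is the same substance in more detail.
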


Let us denote the covariance matrix of $\nk^{1/2}\{ \htk-\tauk, (\htkX - \taukX)^\T \}^\T$ as
\begin{equation}
	\label{cov2}
	\left(
	\begin{array}{cc}
	\Vktt&\Vktx\\
	\Vkxt&\Vkxx\\
	\end{array}\right)
	=\left(
	\begin{array}{cc}
	\frac{\SkY(1)}{\pk}+\frac{\SkY(0)}{1-\pk}-\Skt & \frac{\SkXY^\T(1)}{\pk}+\frac{\SkXY^\T(0)}{1-\pk} \\
	\frac{\SkXY(1)}{\pk}+\frac{\SkXY(0)}{1-\pk} &	\frac{\SkX}{\pk(1-\pk)}\\
	\end{array}
	\right),
\end{equation}
and let $\Rk=\Vktx\Vkxx^{-1}\Vkxt/\Vktt$ be the squared correlation between $\htk$ and $\htkX$ under stratified randomization. Let $\epsilon_0$ be a $\mathcal{N}(0,1)$ distributed random variable and let $L_{\p,a_1}^1,\ldots,L_{\p,a_\S}^\S$ be independent and $L_{\p,a_k}^k\sim L_{\p,a_k}$ for $k=1,\ldots,\S$, where $L_{\p,a_k}$ is defined in Section \ref{SRRoM}. Suppose that  $\epsilon_0$ and $L_{p,a_1}^1,\ldots,L_{p,a_\S}^\S$ are independent.

\begin{theorem}
	\label{thm5}
	Under SRRsM, if Conditions \ref{cond1pk}, \ref{cond4}--\ref{cond5}, and \ref{cond7} hold, then
	\begin{equation}
		\label{eqRMR}
		n^{\frac{1}{2}}(\hat\tau-\tau)\mid\MI\ \dot{\sim}\ \Big\{\skS\pik\Vktt(1-\Rk)\Big\}^{\frac{1}{2}}\epsilon_0+\skS(\pik\Vktt \Rk)^{\frac{1}{2}}L_{p,a_k}^k.
	\end{equation}
	
\end{theorem}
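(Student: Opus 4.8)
The plan is to exploit the cross-stratum independence of the treatment assignments to reduce the joint conditional limit to a product of single-stratum rerandomization limits, each of which is already delivered by Theorem~\ref{thm2} (equivalently \citet{Li2018}) specialized to one stratum. Starting from the decomposition $\ntau=\skS\pik^{1/2}\nk^{1/2}(\htk-\tauk)$ noted above, the key structural observation is that under stratified randomization the assignments in distinct strata are mutually independent, and the acceptance event $\MI=\bigcap_{k=1}^{K}\{\Mk<a_k\}$ is an intersection of events $\{\Mk<a_k\}$, each measurable with respect to the assignments within a single stratum only. Consequently $\MI$ is a product event, the stratum-specific estimators $\nk^{1/2}(\htk-\tauk)$ stay mutually independent after conditioning on $\MI$, and the whole conditional law factorizes across $k$.

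First I would establish the marginal conditional limit within each stratum. Because $K$ is fixed and $\nk\to\infty$, Conditions~\ref{cond1pk}, \ref{cond4}, and \ref{cond7} specialize, for each fixed $k$, to the single-stratum hypotheses required by Corollary~\ref{cor1}; this yields joint asymptotic normality of $\nk^{1/2}\{\htk-\tauk,(\htkX-\taukX)^\T\}^\T$ with the covariance matrix in \eqref{cov2}. Applying the single-stratum rerandomization result (Theorem~\ref{thm2} with one stratum, or \citet{Li2018}) then gives
\begin{equation*}
\nk^{1/2}(\htk-\tauk)\mid\{\Mk<a_k\}\ \dot{\sim}\ \Vktt^{1/2}\big\{(1-\Rk)^{1/2}\epsilon_0^{k}+(\Rk)^{1/2}L_{p,a_k}^{k}\big\},
\end{equation*}
where $\epsilon_0^{k}\sim\mathcal{N}(0,1)$ is independent of $L_{p,a_k}^{k}$, and $\Rk=\Vktx\Vkxx^{-1}\Vkxt/\Vktt$.

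Next I would assemble the joint limit. By the cross-stratum independence just described, the vector $\big(\nk^{1/2}(\htk-\tauk)\big)_{k=1}^{K}$ conditioned on $\MI$ converges in distribution to the product of the $K$ marginal limits, with $\epsilon_0^{1},\dots,\epsilon_0^{K},L_{p,a_1}^{1},\dots,L_{p,a_K}^{K}$ all mutually independent. Since $K$ is fixed and each $\pik\to\pik^{\infty}$ (with $\Vktt$ and $\Rk$ converging by Condition~\ref{cond7}), the continuous mapping theorem applied to the finite linear combination $\skS\pik^{1/2}(\cdot)$ transfers this to
\begin{equation*}
\ntau\mid\MI\ \dot{\sim}\ \skS\pik^{1/2}\Vktt^{1/2}\big\{(1-\Rk)^{1/2}\epsilon_0^{k}+(\Rk)^{1/2}L_{p,a_k}^{k}\big\}.
\end{equation*}
Finally, the Gaussian terms $\skS\pik^{1/2}\Vktt^{1/2}(1-\Rk)^{1/2}\epsilon_0^{k}$ form a sum of independent centered normals and collapse into $\{\skS\pik\Vktt(1-\Rk)\}^{1/2}\epsilon_0$ with $\epsilon_0\sim\mathcal{N}(0,1)$; because this $\epsilon_0$ is a function of the $\epsilon_0^{k}$ alone, it is independent of every $L_{p,a_k}^{k}$, matching the independence asserted in the theorem. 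The truncated terms $\skS(\pik\Vktt\Rk)^{1/2}L_{p,a_k}^{k}$ do not merge and remain as stated, yielding \eqref{eqRMR}.

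The step I expect to require the most care is the rigorous justification that conditioning on the product event $\MI$ preserves both the cross-stratum independence and the marginal rerandomization limits simultaneously. The cleanest route is to work at the level of conditional characteristic functions: by independence of $\big(\nk^{1/2}(\htk-\tauk),\mathbf{1}\{\Mk<a_k\}\big)$ across $k$, the conditional characteristic function of $\ntau$ factorizes exactly (at finite $n$) into a product of per-stratum conditional characteristic functions, each of which converges by the single-stratum analysis above. One must additionally verify that $\pr(\Mk<a_k)$ stays bounded away from zero, which is guaranteed by Proposition~\ref{prop3pa} (acceptance probability $\prod_{k}p_{a_k}>0$), so that the conditioning is asymptotically non-degenerate and the factorized limit passes cleanly to the product of the single-stratum truncated-normal laws.
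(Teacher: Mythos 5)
Your proposal is correct and follows essentially the same route as the paper's proof: decompose $\ntau=\skS\pik^{1/2}\nk^{1/2}(\htk-\tauk)$, apply the single-stratum rerandomization limit of \citet{Li2018} within each stratum, use the cross-stratum independence of the assignments and of the acceptance events to factorize the conditional law, and collapse the independent Gaussian components into a single $\epsilon_0$. Your added care about justifying the factorization via conditional characteristic functions and the non-degeneracy of $\pr(\Mk<a_k)$ is a welcome tightening of a step the paper treats as immediate, but it does not constitute a different argument.
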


The asymptotic unbiasedness of $\hat\tau$, asymptotic variance of $\ntau$ under SRRsM, and variance reduction are summarized in the following corollary. 

\begin{corollary}
	\label{cor4}
	Under SRRsM, if Conditions \ref{cond1pk}, \ref{cond4}--\ref{cond5}, and \ref{cond7} hold, then $\hat\tau$ is an asymptotically unbiased estimator of $\tau$, and the asymptotic variance of $\ntau$ is the limit of $\skS\pik\Vktt\big\{1-(1-v_{p,a_k})\Rk\big\}$, and the percentage of reduction in asymptotic variance compared to stratified randomization is the limit of $\skS\pik\Vktt(1-v_{\p,a_k})\Rk / \Sigma_{\tau\tau}$.
\end{corollary}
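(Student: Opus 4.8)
The plan is to read both conclusions directly off the limiting representation supplied by Theorem~\ref{thm5}, treating the asymptotic mean and variance of $\ntau\mid\MI$ as the mean and variance of the limiting variable
\[
G \;=\; \Big\{\skS\pik\Vktt(1-\Rk)\Big\}^{1/2}\epsilon_0 \;+\; \skS(\pik\Vktt\Rk)^{1/2}L_{p,a_k}^k .
\]
The computation rests on two moment facts about the truncated-normal blocks. Since $L_{p,a_k}\sim(D_1\mid D^\T D<a_k)$ with $D\sim\mathcal{N}(0,I_p)$, and the acceptance event $\{D^\T D<a_k\}$ is invariant under $D_1\mapsto -D_1$, the law of $L_{p,a_k}$ is symmetric about $0$; hence $E(L_{p,a_k}^k)=0$, while its variance is $\var(L_{p,a_k}^k)=v_{p,a_k}=\pr(\chi^2_{p+2}\le a_k)/\pr(\chi^2_{p}\le a_k)$, as already recorded before Corollary~\ref{cor3}. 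Also $E(\epsilon_0)=0$ and $\var(\epsilon_0)=1$.

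For asymptotic unbiasedness I would simply note that every summand in $G$ is a mean-zero variable, so $E(G)=0$, and Theorem~\ref{thm5} then gives that $\ntau$ has asymptotic mean zero, i.e.\ $\hat\tau$ is asymptotically unbiased for $\tau$. For the variance I would invoke the mutual independence of $\epsilon_0,L_{p,a_1}^1,\dots,L_{p,a_K}^K$ asserted in Theorem~\ref{thm5} and split $\var(G)$ across the $K+1$ independent terms:
\[
\var(G)=\skS\pik\Vktt(1-\Rk)\cdot 1+\skS\pik\Vktt\Rk\cdot v_{p,a_k}
=\skS\pik\Vktt\big\{1-(1-v_{p,a_k})\Rk\big\},
\]
which is exactly the claimed asymptotic variance. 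Finally, by Proposition~\ref{prop::covariance} together with the definition of $\Vktt$ in \eqref{cov2} we have the identity $\Sigma_{\tau\tau}=\skS\pik\Vktt$, which is the asymptotic variance of $\ntau$ under stratified randomization; subtracting the display above yields the absolute reduction $\skS\pik\Vktt(1-v_{p,a_k})\Rk$, and dividing by $\Sigma_{\tau\tau}$ gives the stated percentage reduction. Taking limits in $n$ throughout completes the argument.

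The one genuinely delicate point, which I would treat most carefully, is the passage from convergence in distribution ($\dot{\sim}$) to convergence of the first two moments, so that the ``asymptotic mean'' and ``asymptotic variance'' are legitimately the mean and variance of $G$ rather than merely the moments of a distributional limit. I would handle this exactly as in the proof of Corollary~\ref{cor3}: under Conditions~\ref{cond4}--\ref{cond7} the finite-population second moments entering $\ntau$ are uniformly bounded, and the truncated components $L_{p,a_k}^k$ have uniformly light tails, which together furnish uniform integrability of $\ntau\mid\MI$ and of its square and hence the required moment convergence. Apart from this uniform-integrability bookkeeping, everything reduces to the elementary mean and variance computation on the independent mixture $G$.
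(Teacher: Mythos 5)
Your proposal is correct and follows essentially the same route as the paper: both read the asymptotic mean and variance directly off the limiting variable in Theorem~\ref{thm5}, use $E(L_{p,a_k}^k)=0$, $\var(L_{p,a_k}^k)=v_{p,a_k}$, and the mutual independence of $\epsilon_0, L_{p,a_1}^1,\dots,L_{p,a_K}^K$ to get $\skS\pik\Vktt\{1-(1-v_{p,a_k})\Rk\}$, and then divide the difference from $\Sigma_{\tau\tau}=\skS\pik\Vktt$ to obtain the percentage reduction. Your added remarks on uniform integrability are not needed in the paper's framework, where ``asymptotic variance'' is by convention the variance of the limiting distribution, but they do no harm.
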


SRRoM is also applicable in this case, whereas intuitively, SRRsM achieves better covariance balance because it balances covariates in each stratum.
According to Propositions \ref{prop1pa} and \ref{prop3pa}, asymptotically, the proportions  of all possible assignments $p_a$ and $\prod_{k=1}^\S p_{a_k}$ are acceptable under SRRoM and SRRsM, respectively. Therefore, if we use identical thresholds, that is, $a_1=\cdots=a_\S=a$, SRRsM appears stricter than SRRoM because $\prod_{k=1}^\S p_{a_k}=(p_a)^\S<p_a$. 

\begin{theorem}
	\label{prop3}
	When the thresholds $a_1,\ldots,a_\S$ and $a$ are identical or tend to $0$, the asymptotic variance of $\ntau$ under SRRsM is smaller than or equal to that under SRRoM. Particularly, $\skS\pik\Vktt\big\{1-(1-v_{\p,a})\Rk\big\}\le \Sigma_{\tau\tau}\big\{1-(1-v_{\p,a})R^2\big\}$, where the equality holds if and only if $\Vkxx^{-1}\Vkxt=\Sigma_{xx}^{-1}\Sigma_{x\tau}$ for $k=1,\ldots,\S$.  
\end{theorem}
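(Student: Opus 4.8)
The plan is to collapse both scenarios of the statement onto a single matrix inequality and then recognize that inequality as a weighted matrix Cauchy--Schwarz bound. First I would record the block-additivity relations linking the stratum-specific covariances in \eqref{cov2} to the overall covariances in \eqref{cov1}: comparing the two displays entrywise and using $\pik=\nk/n$ gives $\Sigma_{\tau\tau}=\skS\pik\Vktt$, $\Sigma_{\tau x}=\skS\pik\Vktx$, $\Sigma_{x\tau}=\skS\pik\Vkxt$, and $\Sigma_{xx}=\skS\pik\Vkxx$. Next I would invoke the asymptotic-variance formulas of Corollaries~\ref{cor3} and \ref{cor4}. Since $\chi^2_{\p+2}$ stochastically dominates $\chi^2_\p$ we have $v_{\p,a}\le1$, so $1-v_{\p,a}\ge0$. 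In the identical-threshold case $a_1=\cdots=a_\S=a$ the leading terms $\skS\pik\Vktt$ and $\Sigma_{\tau\tau}$ coincide, so after cancellation the claimed bound is equivalent to $(1-v_{\p,a})\{\skS\pik\Vktt\Rk-\Sigma_{\tau\tau}R^2\}\ge0$, which given $1-v_{\p,a}\ge0$ follows from
\begin{equation*}
\skS\pik\Vktt\Rk \;\ge\; \Sigma_{\tau\tau}R^2.
\end{equation*}
In the case where all thresholds tend to $0$, the small-$a$ expansion $\pr(\chi^2_d\le a)\propto a^{d/2}$ gives $v_{\p,a_k}\to0$ and $v_{\p,a}\to0$, so both variances converge to $\skS\pik\Vktt(1-\Rk)$ and $\Sigma_{\tau\tau}(1-R^2)$ respectively, and the comparison again reduces to the same display.

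Rewriting both sides through the definitions of $\Rk$ and $R^2$, the target becomes
\begin{equation*}
\skS\pik\,\Vktx\Vkxx^{-1}\Vkxt \;\ge\; \Sigma_{\tau x}\Sigma_{xx}^{-1}\Sigma_{x\tau}.
\end{equation*}
I would prove this by an exact-decomposition argument. Writing $\beta_k=\Vkxx^{-1}\Vkxt$ for the stratum-specific ``regression slope'' and $\bar\beta=\Sigma_{xx}^{-1}\Sigma_{x\tau}=\big(\skS\pik\Vkxx\big)^{-1}\big(\skS\pik\Vkxt\big)$ for the pooled slope, the definition of $\bar\beta$ yields the cancellation $\skS\pik\Vkxx\beta_k=\Sigma_{x\tau}=\Sigma_{xx}\bar\beta$. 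Using $\Vktx\Vkxx^{-1}\Vkxt=\beta_k^\T\Vkxx\beta_k$ and $\Sigma_{\tau x}\Sigma_{xx}^{-1}\Sigma_{x\tau}=\bar\beta^\T\Sigma_{xx}\bar\beta$, expanding the quadratic form and applying this cancellation to the cross term gives the identity
\begin{equation*}
\skS\pik\,\Vktx\Vkxx^{-1}\Vkxt-\Sigma_{\tau x}\Sigma_{xx}^{-1}\Sigma_{x\tau}=\skS\pik(\beta_k-\bar\beta)^\T\Vkxx(\beta_k-\bar\beta).
\end{equation*}
Since each $\Vkxx$ is positive definite (Condition~\ref{cond7}) and each $\pik>0$, the right-hand side is nonnegative, establishing the inequality; and it vanishes exactly when $\beta_k=\bar\beta$ for every $k$, i.e.\ $\Vkxx^{-1}\Vkxt=\Sigma_{xx}^{-1}\Sigma_{x\tau}$ for $k=1,\ldots,\S$, which is the stated equality condition. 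Equivalently, one can read the inequality off the Schur complement of the positive semidefinite matrix $\skS\pik\left(\begin{smallmatrix}\Vkxx & \Vkxt\\ \Vktx & \Vktx\Vkxx^{-1}\Vkxt\end{smallmatrix}\right)$.

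The block-additivity bookkeeping and the algebra of the decomposition are routine. The main obstacle is the equality characterization: the matrix-weighted quadratic form must be handled carefully to conclude that its nonnegativity degenerates to zero only when all slopes coincide. This relies on the strict positive definiteness of each $\Vkxx$ (so that $(\beta_k-\bar\beta)^\T\Vkxx(\beta_k-\bar\beta)=0$ forces $\beta_k=\bar\beta$) and on the cross-term cancellation supplied by the defining relation of the pooled slope $\bar\beta$; keeping these two ingredients explicit is what makes the ``if and only if'' rigorous rather than merely the one-sided bound.
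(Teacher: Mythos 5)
Your proof is correct, and it reduces the theorem to the same key inequality as the paper, namely $\skS\pik\,\Vktx\Vkxx^{-1}\Vkxt \ge \Sigma_{\tau x}\Sigma_{xx}^{-1}\Sigma_{x\tau}$, via the same bookkeeping ($\Sigma_{\tau\tau}=\skS\pik\Vktt$, $\Sigma_{xx}=\skS\pik\Vkxx$, $\Sigma_{x\tau}=\skS\pik\Vkxt$, $v_{\p,a}\le 1$, and $v_{\p,a}\to0$ as $a\to0$). Where you diverge is in how that inequality is established. The paper stacks the vectors $b=\big((\pi_{[1]}^{1/2}\Sigma_{[1]xx}^{-1/2}\Sigma_{[1]x\tau})^\T,\ldots\big)^\T$ and $d=\big((\pi_{[1]}^{1/2}\Sigma_{[1]xx}^{1/2}\Sigma_{xx}^{-1}\Sigma_{x\tau})^\T,\ldots\big)^\T$ and applies the Cauchy--Schwarz inequality $(b^\T b)(d^\T d)\ge(b^\T d)^2$, which yields the claim after dividing by $\Sigma_{\tau x}\Sigma_{xx}^{-1}\Sigma_{x\tau}\ge0$ and then requires an extra step in the equality analysis (showing $b=\lambda d$ forces $\lambda=1$ by summing $\Vkxt=\lambda\Vkxx\Sigma_{xx}^{-1}\Sigma_{x\tau}$ over $k$). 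Your exact decomposition $\skS\pik\,\Vktx\Vkxx^{-1}\Vkxt-\Sigma_{\tau x}\Sigma_{xx}^{-1}\Sigma_{x\tau}=\skS\pik(\beta_k-\bar\beta)^\T\Vkxx(\beta_k-\bar\beta)$ is an ANOVA-type identity that delivers the nonnegativity and the equality characterization $\beta_k=\bar\beta$ for all $k$ in a single stroke, with no proportionality constant to pin down and no division by a possibly zero quantity; the trade-off is that you must verify the cross-term cancellation $\skS\pik\Vkxx\beta_k=\Sigma_{x\tau}=\Sigma_{xx}\bar\beta$ explicitly, which you do. Both arguments are rigorous; yours is arguably the more transparent for the ``if and only if'' clause, since it isolates exactly where strict positive definiteness of each $\Vkxx$ (guaranteed by Condition~\ref{cond7} together with Condition~\ref{cond::propensity}) is used.
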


Theorem~\ref{prop3} implies that SRRsM  improves the efficiency of SRRoM in the situation where there are only a few large strata and the thresholds $a_1,\ldots,a_\S$ and $a$ are identical or tend to $0$. The only exception (that is, they have the same efficiency) is the case that the strata are homogeneous in the sense that the stratum-specific projection coefficients $\Vkxx^{-1}\Vkxt$ ($k=1,\dots,K$) are the same as the overall projection coefficients $\Sigma_{xx}^{-1}\Sigma_{x\tau}$, when projecting the treatment effect onto the covariates. In other situations, the relative reduction in asymptotic variance is related, in a complicated form, to $v_{\p,a}$, $v_{\p,a_k}$'s, and the covariance matrices defined in \eqref{cov1} and \eqref{cov2}. In our simulation study,  the SRRsM with $ p_{a_k} = (p_a)^{1/\S}$ (which ensures the same acceptance probabilities) performs better than the SRRoM when there are a few heterogeneous strata. In contrast,  when there exist many small strata, SRRsM performs worse than SRRoM, even with $ p_{a_k} = p_a$.

Now we  compare the quantile ranges of $\ntau$ under SRRsM and stratified randomization. Denote $q_\xi(R_{[1]}^2,\ldots,R_{[\S]}^2,p_{a_1},\ldots,p_{a_\S},\p)$ as the $\xi$th quantile of the random variable on the right hand side of (\ref{eqRMR}), then the asymptotic $\CI$ quantile range of $\ntau$ under SRRsM is the limit of
\[ [q_{\alpha/2}(R_{[1]}^2,\ldots,R_{[\S]}^2,p_{a_1},\ldots,p_{a_\S},\p),\ q_{1-\alpha/2}(R_{[1]}^2,\ldots,R_{[\S]}^2,p_{a_1},\ldots,p_{a_\S},\p)].\]

\begin{corollary}
	\label{thm6}
	Under SRRsM, if Conditions \ref{cond1pk}, \ref{cond4}--\ref{cond5}, and \ref{cond7} hold, then the length of the $\CI$ quantile range of the asymptotic distribution of $\ntau$ is less than or equal to that under stratified randomization, with the length non-increasing in $R_{[1]}^2,\ldots,R_{[\S]}^2$ and non-decreasing in $p_{a_1},\ldots,p_{a_\S}$ and $\p$.
\end{corollary}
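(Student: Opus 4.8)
The plan is to work directly with the limiting random variable on the right-hand side of \eqref{eqRMR}, which I abbreviate as $V=\{\skS\pik\Vktt(1-\Rk)\}^{1/2}\epsilon_0+\skS(\pik\Vktt\Rk)^{1/2}L_{\p,a_k}^k$. Since $\epsilon_0$ and the $L_{\p,a_k}^k$ are symmetric about zero and mutually independent, $V$ is symmetric about zero, so its $\CI$ quantile range has length $2q_{1-\alpha/2}$ and the entire statement reduces to comparing the \emph{peakedness} of $V$ (I say $A$ is more peaked than $B$ when $\pr(|A|\le t)\ge\pr(|B|\le t)$ for all $t\ge0$, which forces the quantile range of $A$ to be no longer than that of $B$). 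The crucial first step is a decomposition: because $\epsilon_0$ is standard normal and the squared coefficients add correctly, I can split $\{\skS\pik\Vktt(1-\Rk)\}^{1/2}\epsilon_0=\skS\{\pik\Vktt(1-\Rk)\}^{1/2}\epsilon_{0,k}$ with i.i.d.\ $\mathcal N(0,1)$ variables $\epsilon_{0,k}$, giving
\[ V \;=\; \skS (\pik\Vktt)^{1/2} G_k, \qquad G_k=(1-\Rk)^{1/2}\epsilon_{0,k}+(\Rk)^{1/2}L_{\p,a_k}^k, \]
where the $G_k$ are \emph{independent}, each of exactly the single-stratum (SRRoM) form already analysed in Theorem~\ref{thm2} and Corollary~\ref{thm3}, and where $\skS\pik\Vktt=\Sigma_{\tau\tau}$.

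The second step assembles the peakedness toolkit. Each $L_{\p,a_k}^k$ has density proportional to $\phi(t)\,\pr(\chi^2_{\p-1}<a_k-t^2)$ on $|t|<a_k^{1/2}$, which is decreasing in $|t|$ and hence symmetric unimodal; convolving with the normal part and rescaling shows each $G_k$ is symmetric unimodal, and therefore so is $V$ (a convolution of independent symmetric unimodal variables is symmetric unimodal). With unimodality in hand I can invoke the convolution-preservation property of peakedness: if $A$ is more peaked than $B$ and $C$ is an independent symmetric unimodal variable, then $A+C$ is more peaked than $B+C$ (a Birnbaum-type result, of the kind already used in the SRRoM analysis). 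From the single-stratum peakedness comparisons underlying Corollary~\ref{thm3} I then extract three facts: $G_k$ is more peaked than $\epsilon_{0,k}$; $G_k$ becomes more peaked as $\Rk$ increases; and $G_k$ becomes less peaked as $p_{a_k}$ (equivalently $a_k$) or $\p$ increases.

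The three monotonicity claims then follow by replacing blocks one at a time. Holding all strata but $j$ fixed, the remainder $\sum_{k\ne j}(\pik\Vktt)^{1/2}G_k$ is symmetric unimodal, so replacing $G_j$ by a more peaked version makes $V$ more peaked and shortens its quantile range. Increasing $R_{[j]}^2$ or decreasing $p_{a_j}$ makes $G_j$ more peaked, yielding that the length is non-increasing in each $\Rk$ and non-decreasing in each $p_{a_k}$; for the global parameter $\p$, decreasing $\p$ makes every $G_k$ more peaked simultaneously, and successive block replacements give that the length is non-decreasing in $\p$. Finally, the comparison with stratified randomization is the special case $\Rk=0$ for all $k$: then each $G_k=\epsilon_{0,k}$ and $V\sim\mathcal N(0,\Sigma_{\tau\tau})$, which is precisely the limiting law of $\ntau$ under stratified randomization by Corollary~\ref{cor1}; monotonicity in each $\Rk$ then shows the length at any $(R_{[1]}^2,\dots,R_{[\S]}^2)$ is no larger than at the origin, proving the first assertion.

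The main obstacle is the second step rather than the bookkeeping. Peakedness is \emph{not} preserved under convolution in general, so establishing the symmetric unimodality of each $G_k$ (and hence of $V$) is essential before any Birnbaum-type comparison can be applied; and the single-stratum input must be the full peakedness ordering of $G_k$ across its parameters, not merely the quantile-range inequality recorded in Corollary~\ref{thm3}. Verifying that $L_{\p,a_k}^k$ is more peaked than a standard normal and that this peakedness is monotone in $\Rk$, $p_{a_k}$, and $\p$ is where the real content lies; once those single-block facts and the convolution-preservation lemma are in place, the multi-stratum statement is an immediate consequence of the independence decomposition $V=\skS(\pik\Vktt)^{1/2}G_k$.
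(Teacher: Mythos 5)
Your proposal is correct and follows essentially the same route as the paper: the paper's proof likewise works with the representation $\sum_{k=1}^{K}(\pik\Vktt)^{1/2}\{(1-\Rk)^{1/2}\epsilon_0^k+(\Rk)^{1/2}L_{\p,a_k}^k\}$ as a sum of independent symmetric unimodal blocks, and replaces one block at a time using the single-stratum peakedness facts (Lemmas \ref{lemA4}--\ref{lemA6}) together with the Birnbaum-type convolution-preservation result (Lemma \ref{lemA7}). The only cosmetic difference is that the paper establishes the monotonicity in $\Rk$ by an explicit convolution-integral computation (in effect reproving Lemma \ref{lemA7} for that case) rather than invoking the lemma directly as you do for all three parameters.
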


To obtain a valid inference of $\tau$ based on $\ntau$ under SRRsM, we need to estimate the asymptotic variance and quantile range. To achieve this, we follow \citet{Li2018}. Let 
\begin{equation*}
	s^2_{[k]\tau|X}=\{\skXY(1)-\skXY(0)\}^\T (\SkX)^{-1}\{\skXY(1)-\skXY(0)\}
\end{equation*}
be an estimator of the variance of the linear projection of $\tau$ on $X$ in stratum $k$. Then, $\Vktt$ is estimated by $\hVktt=\pk ^{-1}\skY(1)+(1-\pk)^{-1}\skY(0)-s^2_{[k]\tau|X}$. Let $s^2_{[k]Y|X}(z)=\skYX(z)\SkX^{-1}\skXY(z)$ be the sample variance of linear projection of $Y$ on $X$. Then, the estimator of $\Rk$ is
\begin{equation*}
	\hRk=\hVktt^{-1}\Big\{\frac{s^2_{[k]Y|X}(1)}{\pk}+\frac{s^2_{[k]Y|X}(0)}{1-\pk}-s^2_{[k]\tau|X}\Big\},
\end{equation*}
which is set to 0 if the right hand side is negative.

With the above-constructed estimators, the asymptotic distribution of $\ntau$ can be estimated conservatively by
$
	\Big\{\skS\pik\hVktt(1-\hRk)\Big\}^{\frac{1}{2}}\epsilon_0+\skS(\pik\hVktt\hRk)^{\frac{1}{2}}L_{\p,a}^k. 
$
Let $\hat q_\xi$ be the $\xi$th quantile of the above random variable.

\begin{theorem}
	\label{thmvars}
	Under SRRsM, if Conditions \ref{cond1pk}, \ref{cond4}--\ref{cond5}, and \ref{cond7} hold, then $\skS\pik\hVktt\big\{1-(1-v_{\p,a})\hRk\big\}$ is an asymptotically conservative estimator for the asymptotic variance of $\ntau$ and $ [\hat\tau-n^{-1/2}\hat q_{1-\alpha/2},\ \hat\tau-n^{-1/2}\hat q_{\alpha/2}]$ is an asymptotically conservative $\CI$ confidence interval of $\tau$.
\end{theorem}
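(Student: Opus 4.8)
The plan is to combine the consistency of the within-stratum sample moments with the algebraic decomposition already underlying Theorem~\ref{thm5}, and then to reduce the coverage claim to a peakedness comparison between two symmetric limit laws. Throughout, $\S$ is fixed and every $\nk\to\infty$, so the within-stratum arms behave like simple random samples of growing size. Under Conditions~\ref{cond4} and \ref{cond7}, a finite-population law of large numbers gives $\skY(z)\to\SkY(z)$ and $\skXY(z)\to\SkXY(z)$ in probability for $z=0,1$, while $\SkX$ is a known design-stage quantity; hence $s^2_{[k]\tau|X}$ and $s^2_{[k]Y|X}(z)$ converge to the population projection variances $S^2_{[k]\tau|X}$ and $S^2_{[k]Y|X}(z)$. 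Substituting into the definitions of $\hVktt$ and $\hRk$ yields the two probability limits
\[ \hVktt \to \Vktt + \big(\Skt - S^2_{[k]\tau|X}\big), \qquad \hVktt\,\hRk \to \Vktt\,\Rk \quad\text{in probability,} \]
with all stratum quantities understood in their $\nk\to\infty$ limits. The second identity is the decisive one: a direct expansion shows that the estimated ``explained'' part $s^2_{[k]Y|X}(1)/\pk+s^2_{[k]Y|X}(0)/(1-\pk)-s^2_{[k]\tau|X}$ converges to exactly $\Vktx\Vkxx^{-1}\Vkxt=\Vktt\Rk$, so the numerator of $\hRk$ is consistent for the true explained variance; only $\hVktt$ carries a bias, and because a projection variance never exceeds the total variance, $\Delta_k:=\Skt-S^2_{[k]\tau|X}\ge 0$.

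The conservativeness of the variance estimator then follows immediately. Using $\hVktt\{1-(1-v_{\p,a_k})\hRk\}=\hVktt-(1-v_{\p,a_k})\hVktt\hRk$ together with the two limits and $\pik\to$ its limit gives
\[ \skS\pik\hVktt\big\{1-(1-v_{\p,a_k})\hRk\big\} \to \skS\pik\Vktt\big\{1-(1-v_{\p,a_k})\Rk\big\}+\skS\pik\Delta_k , \]
whose first term is the true asymptotic variance of Corollary~\ref{cor4} and whose second term is nonnegative, so the estimator is asymptotically conservative.

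For the confidence interval I would identify the probability limit $\tilde V$ of the estimated distribution and compare it with the true limit law $V$ of Theorem~\ref{thm5}. The key structural fact is that the conservative bias lives entirely in the Gaussian component: since $\hVktt\hRk$ is consistent for $\Vktt\Rk$, the coefficients $(\pik\hVktt\hRk)^{1/2}$ of the independent truncated normals $L_{p,a_k}^k$ converge to the truth, while the coefficient $\{\skS\pik\hVktt(1-\hRk)\}^{1/2}$ of $\epsilon_0$ converges to a value inflated by $\skS\pik\Delta_k$. Hence $\tilde V \ \stackrel{d}{=}\ V+G$, where $G$ is a mean-zero Gaussian, independent of $V$, with variance $\skS\pik\Delta_k\ge0$. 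Because $\hVktt$ and $\hRk$ converge in probability to constants and $\tilde V$ has a continuous, strictly increasing distribution function, the estimated quantiles $\hat q_\xi$ converge in probability to the quantiles $\tilde q_\xi$ of $\tilde V$; combining this with $\ntau\mid\MI\ \dot{\sim}\ V$ and Slutsky's theorem, the asymptotic coverage equals $\pr(\tilde q_{\alpha/2}\le V\le \tilde q_{1-\alpha/2})$.

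It remains to show this coverage is at least $1-\alpha$, which I expect to be the main obstacle. Both $V$ and $\tilde V$ are symmetric about $0$ (each $L_{p,a_k}$ is symmetric because the acceptance region $D^\T D<a_k$ is), so $\tilde q_{1-\alpha/2}=-\tilde q_{\alpha/2}$ and the target reduces to $\pr(|V|\le \tilde q_{1-\alpha/2})\ge \pr(|\tilde V|\le \tilde q_{1-\alpha/2})=1-\alpha$. Writing $\tilde V=V+G$ and conditioning on $G=g$, this is a peakedness statement: I must show $\pr(|V+g|\le t)=\int_{-t-g}^{\,t-g}f_V\le \int_{-t}^{\,t}f_V=\pr(|V|\le t)$ for every shift $g$ and every $t>0$, after which averaging over $g$ finishes the proof. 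The inequality holds precisely because, for a symmetric \emph{unimodal} density $f_V$, the mass in any interval of fixed length $2t$ is maximized when the interval is centered at the mode. The delicate point, and the step I expect to be hardest, is therefore establishing that $V$ is unimodal: this relies on the log-concavity (hence strong unimodality) of the truncated-normal coordinate $L_{p,a_k}$ from the rerandomization theory of \citet{Li2018}, together with the preservation of log-concavity under the convolutions defining $V$, namely the independent sum $\skS(\pik\Vktt\Rk)^{1/2}L_{p,a_k}^k$ and the addition of the Gaussian term $\{\skS\pik\Vktt(1-\Rk)\}^{1/2}\epsilon_0$. With unimodality in hand the peakedness comparison closes the argument and delivers asymptotic coverage at least $1-\alpha$.
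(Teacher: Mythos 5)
Your proposal is correct, and the first half (consistency of $\hVktt\hRk$ for $\Vktt\Rk$, the upward bias $\Delta_k=\Skt-S^2_{[k]\tau|X}\ge 0$ concentrated in $\hVktt$, and hence conservativeness of the variance estimator) is exactly the route the paper takes, modulo the fact that the paper delegates the within-stratum consistency-under-rerandomization step to \citet{Li2018} rather than re-deriving it; like the paper, you should at least note that the law of large numbers must hold conditionally on the acceptance event $\MI$, which follows because the acceptance probability is bounded away from zero. Where you genuinely diverge is the coverage argument. The paper never needs unimodality of the full limit law $V$: it applies Lemma~\ref{lemA7} iteratively, $\S$ times, each time taking $\zeta_0$ to be a \emph{single} truncated-normal coordinate $(\pik\Vktt\Rk)^{1/2}L^k_{\p,a_k}$ (whose unimodality is immediate) and $\zeta_1,\zeta_2$ to be the true and inflated partial sums, propagating the tail domination $\pr(\zeta_1>c)\le\pr(\zeta_2>c)$ one convolution at a time from the two Gaussians outward. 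You instead write the inflated limit as $\tilde V\stackrel{d}{=}V+G$ with $G$ an independent mean-zero Gaussian and invoke an Anderson-type peakedness inequality, which forces you to establish unimodality of the entire convolution $V$ via Pr\'ekopa's theorem (log-concavity of the marginal of a normal truncated to a ball) plus closure of log-concavity under convolution. Both arguments are valid; yours is arguably more conceptual and isolates the conservative bias as a single additive Gaussian, at the cost of importing the log-concavity machinery, whereas the paper's iteration keeps every step elementary given Lemma~\ref{lemA7}. You correctly identified the unimodality of $V$ as the crux of your route and named the right tools to supply it.
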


\section{Simulation study}

We conduct a simulation study to evaluate the finite-sample performance of the point and interval estimators for the average treatment effect under stratified rerandomization strategies, SRRoM and SRRsM, and compare them with those under stratified randomization. Data are generated from the  model: $ \Yi(z)=\Xi^\T\beta_{z1}+\mathrm{exp}(\Xi^\T\beta_{z2})+\epsilon_{i}(z),\ i=1,\ldots,n,\ z=0,1,$
where the covariate vectors $\Xi$'s are eight-dimensional vectors drawn independently from normal distribution with mean zero and covariance matrix $\Sigma$, whose entries $\Sigma_{ij}=0.5^{|i-j|}$, $i,j=1,\dots,8$, and the disturbances $\epsilon_{i}(z)$ are normally distributed with mean zero and variance $10$. The $j$th components of the coefficients are generated independently from the distributions:
$\beta_{11,j}\sim t_3,\ \beta_{12,j}\sim0.1t_3,\ \beta_{01,j}\sim\beta_{11,j}+t_3,\ \beta_{02,j}\sim\beta_{12,j}+0.1t_3,\ j=1,\ldots,8, $
where $t_3$ denotes $t$ distribution with three degrees of freedom. 

The number of strata $\S$ and strata sizes $\nk$ are set in four cases: Case 1, there are many small strata, with $\S=25,\ 50,\ 100$ and $\nk =10$; Case 2, there are many small strata and two large strata, with $\S=10+2,\ 20+2,\ 50+2$, and $\nk =10$ for small strata and $\nk = 100$ for the two large strata; Case 3, there are two large homogeneous strata, with $\S=2$ and $\nk =100,\ 200,\ 500$; Case 4, there are two large heterogeneous strata where the coefficients $\beta_{z1}$ and $\beta_{z2}$ are generated independently for each stratum, with $\S=2$ and $\nk=100,\ 200,\ 500$.

For the given $\S$ and $\nk$, first, we generate the covariates and potential outcomes, and randomly assign $\nkt$ units in stratum $k$ to the treatment group, where the propensity scores are equal, $\pk =0.5\ (k=1,\ldots,\S)$, or unequal, $\pk =0.4\ (k\le \S/2),\ 0.6\ (k>\S/2)$. If the covariate balance criterion is not met, we generate new assignments until we find an assignment that meets the criterion. Then, based on the assignment, we compute the stratified difference-in-means estimator and the  $95\%$ confidence interval. In stratified randomization, we use $[\hat\tau-(\hat \Sigma_{\tau\tau}/n)^{1/2}z_{1-\alpha/2},\ \hat\tau-(\hat \Sigma_{\tau\tau}/n)^{1/2} z_{\alpha/2}]$ as the conservative  confidence interval of $\tau$, where $z_\xi$ is the $\xi$th quantile of a standard normal distribution.
The preceding process of allocation and computation is repeated for $10^4$ times to evaluate the bias, standard deviation, root mean squared error (RMSE), mean confidence interval length, and empirical coverage probability under stratified randomization and  stratified rerandomization. The threshold for SRRoM is set such that $p_a=0.001$, and the thresholds for SRRsM are set such that $p_{a_k}=(0.001)^{1/\S}$ for a fair comparison or $p_{a_k}=0.001$ for an unfair comparison, $k=1,\ldots,\S$. For a stratum of size ten, there are only 252 possible assignments, and SRRsM sometimes rejects all possible assignments under an unfair comparison. In this case, we perform stratified randomization instead of SRRsM.


\begin{figure}[ht]
\begin{center}
\includegraphics[width=6in]{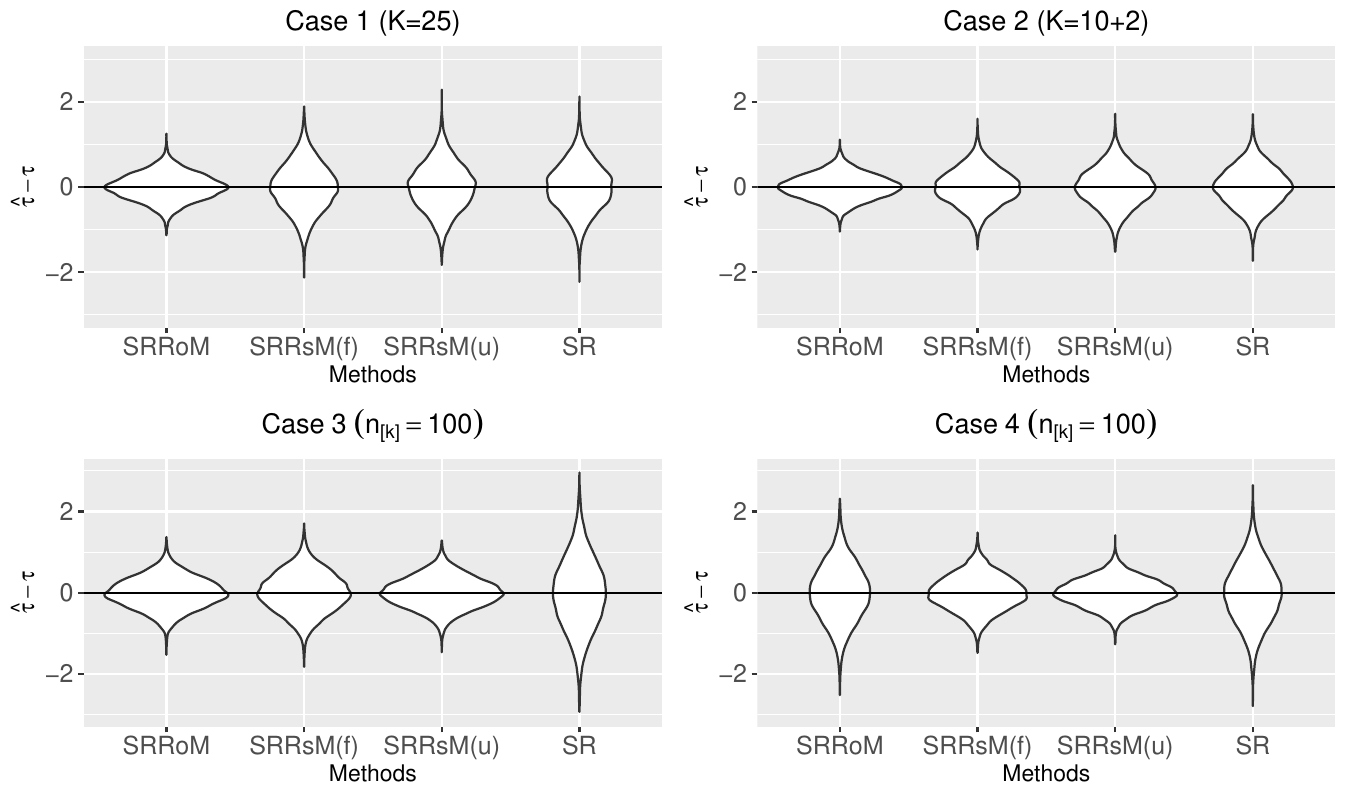}
\end{center}
\caption{Violin plot of the (centered) average treatment effect estimator, $\hat\tau-\tau$, under SRRoM, SRRsM(f) for a fair comparison, SRRsM(u) for an unfair comparison, and stratified randomization (SR). The  propensity scores are equal across strata $(\pk =0.5, \ k=1,\ldots,\S)$. \label{fig1}}
\end{figure}

Figure \ref{fig1} shows the results for equal propensity scores. Additional results are available in the Supplementary Material. Our findings are summarized as follows. First, the  treatment effect estimators under all assignment mechanisms have small finite-sample biases, which are more than ten times smaller than the standard deviations. Second, compared to stratified randomization, SRRoM always reduces the RMSEs and confidence interval lengths, regardless of the stratum numbers and sizes. The percentages of reduction are  $2.6\% - 56.0\%$ and $3.6\% - 40.7\%$, respectively. Third, when there exist small strata (Cases 1 and 2), fair SRRsM performs similarly to, or slightly better than, stratified randomization in terms of RMSE, and is less efficient than SRRoM. In this setting, fair SRRsM can still reduce the confidence interval lengths ($3.4\% - 11.6\%$) compared to stratified randomization because it uses less conservative variance estimators. Fourth, when unfair SRRsM can be implemented (Cases 3 and 4), it is generally better than fair SRRsM because it uses stricter thresholds. When there are two large homogeneous strata (Case 3), fair SRRsM is less efficient than SRRoM, and unfair SRRsM is comparable to SRRoM in terms of RMSE but gives slightly longer confidence intervals. In contrast, when there are two large heterogeneous strata (Case 4), fair SRRsM is better than SRRoM, with percentages of reduction being $27.6\% - 38.9\%$ in RMSEs and $13.5\% - 25.5\%$ in confidence interval lengths. Finally, all the interval estimators are conservative, with the empirical coverage probabilities being larger than the confidence level.

In general, we recommend SRRoM when there exist small strata and SRRsM when there are only a few large strata. We also conduct simulations for paired randomized experiments and finely stratified randomized experiments, and the results are given in  the Supplementary Material.

\section{Application}
In this section, we analyse the `Opportunity Knocks' experiment data \citep{Angrist2014} using two stratified rerandomization methods and compare them with stratified randomization. The Opportunity Knocks data are obtained from an experiment that aims at evaluating the influence of a financial incentive demonstration plan on college students' academic performance. The research subjects of this experiment included first- and second-year students of a large Canadian commuter university who applied for financial aid. Stratification was conducted according to the year, sex, and high school GPA quartile. Students were randomly assigned to the
treatment and control groups within each stratum, and those who fell in the treated group had peer advisors and received cash bonuses for attaining the given grades. Students with missing outcomes or covariates were excluded, resulting in 16 strata, a treatment group of size 382, and a control group of size 821. The propensity scores $\pk$ varied from 0.22 to 0.51.

The  outcome of interest is the average grade for the semester right after the experiment (2008 fall). From the original dataset, we cannot determine the true gains of stratified rerandomization. To evaluate the repeated sampling properties of SRRoM and SRRsM, we generate a synthetic dataset, with the missing values of the potential outcomes imputed by a linear model of regressing the observed outcomes on the  treatment indicator, average grade in 2008 spring, gender, and  treatment-by-covariate interactions. The resulting average treatment effect is $0.205$. To conduct stratified rerandomization, we select seven covariates: high school grade, average grade in 2008 spring, number of college graduates in the family, whether  the first/second question in the survey is correctly answered, whether the mother tongue is English, and credits earned in 2008 fall.  We center the covariates at their stratum-specific means. Stratified rerandomization is conducted under the same stratification and propensity scores as the original dataset, with acceptance probability $p_{a}=0.001$ for SRRoM and $p_{a_k}=(p_a)^{1/16}$ for SRRsM for a fair comparison, and  $p_{a_k}= 0.001$ for an unfair comparison, $k=1,\dots,K$. We repeat the stratified rerandomization for $10^4$ times and compute the bias, standard deviation, RMSE, mean confidence interval length, and empirical coverage probability of different methods.

\begin{table}
\caption{Results of stratified rerandomization applied to the Opportunity Knocks data \label{tab3}}
\begin{center}
\begin{tabular}{p{2cm}<{\centering}p{2cm}<{\centering}p{2cm}<{\centering}p{2cm}<{\centering}p{2cm}<{\centering}p{2cm}<{\centering}}
Method & Bias & SD & RMSE & CI length & CP (\%) \\[5pt] \midrule
	SRRoM &  0.0042 & 0.3935 & 0.3935 & 1.9739 & 98.69 \\ 
	SRRsM(f)& 0.0366 & 0.4821 & 0.4835 & 2.2240 & 97.96 \\ 
	SRRsM(u)& 0.0308 & 0.3901 & 0.3913 & 1.8099 & 97.77 \\ 
	SR& 0.0024 & 0.5283 & 0.5283 & 2.4095 & 97.58 \\  \hline
\end{tabular}
\begin{tablenotes}

Note: SRRoM, stratified rerandomization based on overall Mahalanobis distance; SRRsM, stratified rerandomization based on stratum-specific Mahalanobis distance; SRRsM(f), SRRsM for a fair comparison; SRRsM(u), SRRsM for an unfair comparison; SR, stratified randomization; SD, standard deviation; RMSE, root mean squared error; CI length, mean confidence interval length; CP, empirical coverage probability.

\end{tablenotes}	
\end{center}	
\end{table}

Table~\ref{tab3} and Figure~\ref{figrd} show the results, where the bias of each method is more than 10 times smaller than the standard deviation. Among the considered methods, 
SRRoM performs similarly to unfair SRRsM, and both of them are better than the other two methods. They reduce the RMSE of the stratified difference-in-mean estimator by approximately $26\%$ when compared to stratified randomization. Fair SRRsM is less efficient than SRRoM and it does not substantially improve efficiency compared to stratified randomization. Moreover, all confidence intervals are conservative, with the coverage probabilities being larger than the confidence level.

\begin{figure}
\begin{center}
\includegraphics[width=4in]{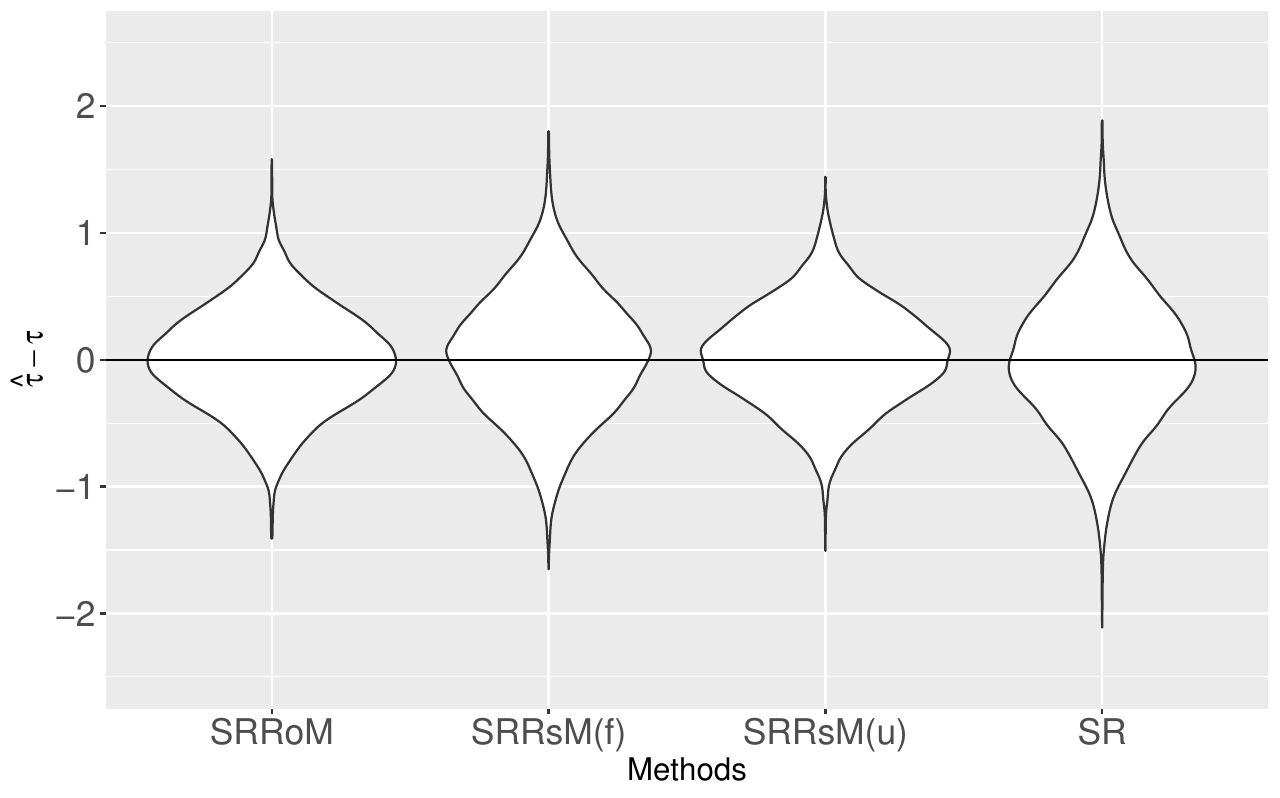}
\end{center}
\caption{Violin plot of the (centered) average treatment effect estimator applied to the Opportunity Knocks data, $\hat\tau-\tau$, under SRRoM, SRRsM(f) for fair comparison, SRRsM(u) for unfair comparison, and stratified randomization (SR). \label{figrd}}
\end{figure}

\section{Conclusion}
\label{sec:conc}

In this paper, we propose two rerandomization strategies based on the overall and stratum-specific Mahalanobis distances in stratified randomized experiments. We derive the asymptotic distributions of the stratified difference-in-means estimator under both strategies.  We demonstrate the advantages of the proposed  strategies  through theoretical investigations, a simulation study, and a real data analysis. In addition, we provide asymptotically conservative estimators for the variances and asymptotic distributions, which can be used to construct large-sample conservative confidence intervals for the average treatment effect. Our work provides statistical support for the recommendation of renowned scholars, such as R. A. Fisher and D. B. Rubin, to ``Block what you can and rerandomize what you cannot". 

In theory, when there are only a few large strata, the stratum-specific stratified rerandomization strategy is more efficient than the overall stratified rerandomization strategy. Thus, when there are a few large strata and many small strata, it might be more efficient to pool small strata into large ones, and then, use the stratum-specific stratified rerandomization. It would be interesting to study how to efficiently pool small strata together and investigate the theoretical properties of the stratum-specific stratified rerandomization after pooling. Moreover, this paper focuses on stratified rerandomization for a binary treatment and continuous potential outcomes.  It would be interesting to generalize our  results to stratified rerandomization with multiple treatments (including factorial experiments) or binary potential outcomes. In addition,  we assume that the number of the additional covariates is fixed; in practice, however, the number of the additional covariates can be large, even larger than the sample size. It is worthy of further investigation to develop stratified rerandomization method using high-dimensional covariates.


\bigskip
\begin{center}
{\large\bf SUPPLEMENTARY MATERIAL}
\end{center}

The Supplementary Material provides a detailed discussion on Schultzberg and Johansson's stratified rerandomization criterion, another conservative variance estimator, additional simulation results, and proofs.

\bibliographystyle{apalike}
\bibliography{causal}

\newpage

\begin{center}
{\LARGE{\bf Supplementary material}}
\end{center}

\appendix

\setcounter{equation}{0}
\renewcommand{\theequation}{S.\arabic{equation}}
\setcounter{theorem}{0}
\renewcommand{\thetheorem}{S.\arabic{theorem}}
\setcounter{proposition}{0}
\renewcommand{\theproposition}{S.\arabic{proposition}}
\setcounter{condition}{0}
\renewcommand{\thecondition}{S.\arabic{condition}}
\setcounter{corollary}{0}
\renewcommand{\thecorollary}{S.\arabic{corollary}}
\setcounter{remark}{0}
\renewcommand{\theremark}{S.\arabic{remark}}

\section{Stratified rerandomization based on difference-in-means}

In this section, we discuss the stratified rerandomization criterion proposed by \citet{Schultzberg2019}, which is based on  a widely used measurement of covariate imbalance $M_{\ttX}=\ttX^\T\cov(\ttX)^{-1}\ttX$, where
\begin{equation*}
	\ttX =\frac{1}{n_1}\sum_{i=1}^n\Zi\Xi-\frac{1}{n_0}\sum_{i=1}^n(1-\Zi)\Xi
\end{equation*}
is the difference-in-means estimator of the covariates.  $\ttX$ pools the treated units (and the control units) together and ignores the stratification used in the design stage. It is identical to the stratified difference-in-means estimator $\htX$ when the propensity scores are the same across strata, otherwise it is  different from  $\htX$ and is a biased estimator for the average treatment effect of the covariates $\tau_X = 0$.  In what follows,  we derive the asymptotic normality of the random vector $n^{1/2}(\hat\tau-\tau,\ttX^\T)^\T$ using Theorem \ref{thm1CLT} in the main text. Let $p_1=n_1/n$ and $p_0=n_0/n$ be the overall proportions of units in the treatment and control groups, respectively. 

\begin{proposition}
\label{props1}
Under stratified randomization, the covariance of $n^{1/2}(\hat\tau-\tau,\ttX^\T)^\T$ is 
	\begin{equation*}
		U=\left(
		\begin{array}{cc}
		U_{\tau\tau}&U_{\tau x}\\	
		U_{x\tau}&U_{xx}\\
		\end{array}\right)
		=\skS\pik \left(
		\begin{array}{cc}
		\frac{\SkY(1)}{\pk}+\frac{\SkY(0)}{1-\pk}-\Skt & \frac{(1-\pk)\SkXY^\T(1)}{p_1p_0}+\frac{\pk\SkXY^\T(0) }{p_1p_0}\\
		\frac{(1-\pk)\SkXY(1)}{p_1p_0}+\frac{\pk\SkXY(0) }{p_1p_0}&
		\frac{\pk(1-\pk)}{p_1^2p_0^2}\SkX\\
		\end{array}
		\right).
	\end{equation*}
\end{proposition}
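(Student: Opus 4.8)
The plan is to recognize that, although $\ttX$ pools units across strata, it can be rewritten as a stratified difference-in-means of a suitably rescaled covariate, so that Proposition~\ref{prop0} applies directly. First I would split the pooled sums defining $\ttX$ over the strata: writing $\frac{1}{\nt}\skS\sumk\Zi\Xi = \skS \frac{\nkt}{\nt}\Xobs_{[k]1}$ and similarly for the control arm, and using $\nkt=\pk\nk$, $\nt=p_1 n$, $\nkc=(1-\pk)\nk$, $\nc=p_0 n$, I obtain
\[
\ttX = \skS\pik\Big\{\frac{\pk}{p_1}\Xobs_{[k]1} - \frac{1-\pk}{p_0}\Xobs_{[k]0}\Big\}.
\]
This is exactly the stratified difference-in-means estimator for the artificial $p$-dimensional potential outcome $W_i(z)$ defined, for $i\in[k]$, by $W_i(1)=(\pk/p_1)\Xi$ and $W_i(0)=((1-\pk)/p_0)\Xi$, since the observed stratum means of $W_i(z)$ are precisely $\Xobs_{[k]z}$ scaled by the stratum constants.

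Next I would augment the outcome into the vector $R_i(z)=(\Yi(z),W_i(z)^\T)^\T$ and apply Proposition~\ref{prop0}. Because its stratified difference-in-means is $(\hat\tau,\ttX^\T)^\T$ and the covariance is unaffected by subtracting the constant mean vector $\tau_R$, the matrix $U$ equals the covariance given by Proposition~\ref{prop0} for $R$, and it remains only to read off the three blocks. The $(Y,Y)$ block is immediate and reproduces $U_{\tau\tau}=\Sigma_{\tau\tau}$. For the $(W,W)$ block I substitute $S^2_{[k]W}(1)=(\pk/p_1)^2\SkX$, $S^2_{[k]W}(0)=((1-\pk)/p_0)^2\SkX$, and the individual-effect variance $((\pk/p_1)-((1-\pk)/p_0))^2\SkX$ into Proposition~\ref{prop0}; the scalar coefficient of $\SkX$ then collapses to $\pk(1-\pk)/(p_1^2 p_0^2)$ after using the identity $1/p_1+1/p_0=1/(p_1 p_0)$ that follows from $p_1+p_0=1$, giving $U_{xx}$.

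The main obstacle is the off-diagonal block, because the naive substitution produces an extra individual-effect covariance term $S_{[k]\tau X}$ that does not appear in the stated formula, so I must show it cancels. Applying Proposition~\ref{prop0} to $R$ yields the off-diagonal term $\skS\pik\{p_1^{-1}\SkXY(1)+p_0^{-1}\SkXY(0)-((\pk/p_1)-((1-\pk)/p_0))S_{[k]\tau X}\}$, where $S_{[k]\tau X}$ is the stratum-$k$ covariance between $\taui$ and $\Xi$. The key step is the bilinearity identity $S_{[k]\tau X}=\SkXY(1)-\SkXY(0)$; substituting it and regrouping the coefficients of $\SkXY(1)$ and $\SkXY(0)$, the asymmetric terms recombine, again via $1/p_1+1/p_0=1/(p_1 p_0)$, into the symmetric weights $(1-\pk)/(p_1 p_0)$ and $\pk/(p_1 p_0)$, matching $U_{x\tau}$ (and $U_{\tau x}$ by transposition). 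Positive definiteness and the normality itself are not needed for this covariance statement, but would follow from Theorem~\ref{thm1CLT} applied to the same augmented vector $R$.
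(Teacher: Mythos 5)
Your proposal is correct and follows essentially the same route as the paper's own proof: both define the rescaled pseudo--potential outcomes $W_i(1)=(\pk/p_1)\Xi$, $W_i(0)=\{(1-\pk)/p_0\}\Xi$ so that $\ttX$ becomes the stratified difference-in-means of $W$, apply Proposition~\ref{prop0} to $R_i(z)=(\Yi(z),W_i(z)^\T)^\T$, and simplify the off-diagonal block via $S_{[k]\tau X}=\SkXY(1)-\SkXY(0)$ together with $\pk/p_1-(1-\pk)/p_0=(\pk-p_1)/(p_1p_0)$. All substituted stratum-specific variances and covariances match the paper's, so no gap remains.
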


\begin{condition}
\label{cond6}
	The following three matrices have finite limits:
	\begin{eqnarray*}
		\skS\frac{\pik}{\pk} \left(
		\begin{array}{cc}
		\SkY(1) & ({\pk}/{p_1})\SkXY^\T(1)\\
		({\pk}/{p_1})\SkXY(1) & ({\pk}/{p_1})^2\SkX\\
		\end{array}
		\right),
	\end{eqnarray*}
	\begin{eqnarray*}
		\skS\frac{\pik}{1-\pk} \left(
		\begin{array}{cc}
		\SkY(0) & \{({1-\pk})/p_0 \} \SkXY^\T(0)\\
		\{({1-\pk})/p_0\} \SkXY(0) & \{ ({1-\pk})^2/p_0^2\} \SkX\\
		\end{array}
		\right),
	\end{eqnarray*}
	\begin{equation*}
		\skS\frac{\pik(\pk-p_1)}{p_1p_0} \left(
		\begin{array}{cc}
		\{ {p_1p_0}/(\pk-p_1) \} \Skt & \SkXY^\T(1)-\SkXY^\T(0)\\
		\SkXY(1)-\SkXY(0)& \{ ({\pk-p_1})/(p_1p_0) \} \SkX\\
		\end{array}
		\right),
	\end{equation*}
	and the limit of $U$, denoted by $U^\infty$, is (strictly) positive definite.
\end{condition}

\begin{corollary}
	\label{cor2}
	Under stratified randomization, if Conditions \ref{cond::propensity}, \ref{cond4}, and \ref{cond6} hold and 
	$${n^{1/2}}/(p_1p_0)\skS\pik(\pk-p_1)\bXk\to\omega ,$$ 
	where $\omega\in\mathbb{R}^{\p\times1}$ is a constant vector, then $n^{1/2}(\hat{\tau}-\tau, \ttX^\T)^\T$ converges in distribution to $\mathcal{N}((0,\omega^\T)^\T,U^\infty)$ as $n$ tends to infinity.
\end{corollary}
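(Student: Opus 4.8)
The plan is to exhibit $\ttX$ as the stratified difference-in-means estimator of an auxiliary, arm-dependent potential outcome, so that Theorem~\ref{thm1CLT} applies in one stroke. Because $\pk$, $p_1=\nt/n$, $p_0=\nc/n$ and $\Xi$ are all fixed at the design stage, for $i\in[k]$ I would define the fixed $\p$-dimensional auxiliary potential outcomes
\[
G_i(1)=\frac{\pk}{p_1}\Xi,\qquad G_i(0)=\frac{1-\pk}{p_0}\Xi .
\]
Computing observed stratum means gives $\bar G^{\obs}_{[k]1}=(\pk/p_1)\Xobs_{[k]1}$ and $\bar G^{\obs}_{[k]0}=\{(1-\pk)/p_0\}\Xobs_{[k]0}$, so the stratified difference-in-means of $G$ equals $\skS\pik\{(\pk/p_1)\Xobs_{[k]1}-((1-\pk)/p_0)\Xobs_{[k]0}\}=\ttX$ exactly, while its estimand is
\[
\tau_G=\skS\pik\Big(\frac{\pk}{p_1}-\frac{1-\pk}{p_0}\Big)\bXk=\frac{1}{p_1p_0}\skS\pik(\pk-p_1)\bXk,
\]
using $p_0+p_1=1$. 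Hence the hypothesis $n^{1/2}\tau_G\to\omega$ is precisely the stated limit condition, and $\omega$ is simply the asymptotic bias of $\ttX$ as an estimator of $\tau_X=0$.

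With $G$ in hand I would apply Theorem~\ref{thm1CLT} to the augmented outcome $\Ymul_i(z)=(\Yi(z),G_i(z)^\T)^\T$, whose first coordinate reproduces $\hat\tau$ and whose remaining coordinates reproduce $\ttX$. This yields that $n^{1/2}(\hat\tau-\tau,\ \ttX^\T-\tau_G^\T)^\T$ converges in distribution to $\mathcal{N}(0,U^\infty)$. The covariance is identified as $U^\infty$ because, by Proposition~\ref{prop0} applied to this $\Ymul_i(z)$, the finite-$n$ covariance of $n^{1/2}(\hat\tau-\tau,\ \ttX^\T-\tau_G^\T)^\T$ equals that of $n^{1/2}(\hat\tau-\tau,\ \ttX^\T)^\T$ (a constant shift does not change covariances), which Proposition~\ref{props1} evaluates to $U$. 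The proof then closes by Slutsky: writing
\[
n^{1/2}(\hat\tau-\tau,\ \ttX^\T)^\T=n^{1/2}(\hat\tau-\tau,\ \ttX^\T-\tau_G^\T)^\T+(0,\ n^{1/2}\tau_G^\T)^\T
\]
and using $n^{1/2}\tau_G\to\omega$ gives convergence to $\mathcal{N}((0,\omega^\T)^\T,U^\infty)$.

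The substance of the argument is checking that the hypotheses (Conditions~\ref{cond::propensity}, \ref{cond4}, \ref{cond6}) are exactly the instances of Conditions~\ref{cond::max} and \ref{cond::covariance} needed to invoke Theorem~\ref{thm1CLT} for this $\Ymul_i(z)$. Condition~\ref{cond::max} transfers directly: its $Y$-part is the first half of Condition~\ref{cond4}, and since $G_i(z)-\bar G_{[k]}(z)$ is a fixed multiple of $\Xi-\bXk$ with $\pk/p_1$ and $(1-\pk)/p_0$ bounded (Condition~\ref{cond::propensity} keeps each $\pk$, and hence $p_1=\skS\pik\pk$, bounded away from $0$ and $1$), its $G$-part follows from the covariate half of Condition~\ref{cond4}.

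The main obstacle is the block-by-block algebra showing that the three matrices displayed in Condition~\ref{cond6} are literally $\skS(\pik/\pk)S^2_{[k]\Ymul}(1)$, $\skS\{\pik/(1-\pk)\}S^2_{[k]\Ymul}(0)$ and $\skS\pik S^2_{[k]\tau_{\Ymul}}$, so that their convergence and the positive definiteness of $U^\infty$ are exactly Condition~\ref{cond::covariance}. The $YY$ entries and the first cross-terms are immediate, but the lower-right ($xx$) block requires collecting $\tfrac{\pk}{p_1^2}+\tfrac{1-\pk}{p_0^2}-\tfrac{(\pk-p_1)^2}{p_1^2p_0^2}=\tfrac{\pk(1-\pk)}{p_1^2p_0^2}$, which recovers $U_{xx}$, and the cross ($\tau x$) block requires combining the $(\pk/p_1)\SkXY(1)$, $\{(1-\pk)/p_0\}\SkXY(0)$ and $\{(\pk-p_1)/(p_1p_0)\}\{\SkXY(1)-\SkXY(0)\}$ contributions into $\{(1-\pk)\SkXY^\T(1)+\pk\SkXY^\T(0)\}/(p_1p_0)$, recovering $U_{\tau x}$. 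Once these identities are in place, the conclusion follows as above.
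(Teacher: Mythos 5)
Your proposal is correct and follows essentially the same route as the paper: the paper likewise introduces the pseudo potential outcomes $W_i(1)=\pk\Xi/p_1$, $W_i(0)=(1-\pk)\Xi/p_0$ (your $G_i(z)$) in the proof of Proposition~\ref{props1}, identifies $\ttX$ as the stratified difference-in-means estimator of $\tau_W=(p_1p_0)^{-1}\skS\pik(\pk-p_1)\bXk$, applies Theorem~\ref{thm1CLT} to $(\Yi(z),W_i(z)^\T)^\T$ after deducing Conditions~\ref{cond::max} and \ref{cond::covariance} from Conditions~\ref{cond4} and \ref{cond6}, and concludes via the deterministic shift $n^{1/2}\tau_W\to\omega$. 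Your block-by-block verification of Condition~\ref{cond6} and of $U$ is the content of the paper's Proposition~\ref{props1}, and your computations there are accurate.
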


As the covariance matrix of the difference-in-means of the covariates is
\begin{equation*}
	\cov(\ttX)=\frac{1}{np_1^2p_0^2}\skS\pik\pk(1-\pk)\SkX=\frac{1}{n}U_{xx},
\end{equation*}
the Mahalanobis distance based on $\ttX$ is
\begin{equation*}
\begin{split}
	M_{\ttX}=\ttX^\T\cov(\ttX)^{-1}\ttX
	=(n^{1/2}\ttX)^\T U_{xx}^{-1}(n^{1/2}\ttX).
\end{split}
\end{equation*}
The rerandomization criterion based on $M_\ttX$ accepts an assignment only when $M_\ttX<a$, where $a  $  is a predetermined threshold. Let us denote $\Md = \{(Z_1,\ldots,Z_n):\ M_{\ttX}<a\}$ as an event in which an assignment is accepted under the \underline{s}tratified \underline{r}e\underline{r}andomization based on the \underline{d}ifference-in-means \underline{M}ahalanobis distance $M_{\ttX}$, which is abbreviated as SRRdM.

\begin{proposition}
	\label{prop2pa}
	Under SRRdM, if the conditions in Corollary \ref{cor2} hold, then the asymptotic probabiliy of accepting a random assignment is $p'_a=\pr\{\chi^2_{\p}(\omega^\T U_{xx}^{-1}\omega)<a\}$, where $\chi^2_\p(\lambda)$ represents a noncentral chi-square distribution with $\p$ degrees of freedom and noncentrality parameter $\lambda$.
\end{proposition}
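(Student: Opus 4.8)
The plan is to push the joint asymptotic normality of Corollary~\ref{cor2} through the continuous mapping theorem and then recognize the limiting quadratic form as a noncentral chi-square. First I would extract the marginal limit of $n^{1/2}\ttX$ from Corollary~\ref{cor2}: since $n^{1/2}(\hat\tau-\tau,\ttX^\T)^\T$ converges in distribution to $\mathcal{N}((0,\omega^\T)^\T,U^\infty)$, its covariate sub-vector satisfies $n^{1/2}\ttX\to W\sim\mathcal{N}(\omega,U_{xx}^\infty)$, where $U_{xx}^\infty$ denotes the lower-right block of $U^\infty$ and is positive definite because $U^\infty$ is.

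Next I would deal with the design-stage normalizing matrix $U_{xx}$ in $M_{\ttX}=(n^{1/2}\ttX)^\T U_{xx}^{-1}(n^{1/2}\ttX)$. This matrix is nonrandom and converges to the positive definite limit $U_{xx}^\infty$, so matrix inversion---continuous near $U_{xx}^\infty$---gives $U_{xx}^{-1}\to(U_{xx}^\infty)^{-1}$. Writing $M_{\ttX}=g_n(n^{1/2}\ttX)$ with $g_n(v)=v^\T U_{xx}^{-1}v$ and $g(v)=v^\T(U_{xx}^\infty)^{-1}v$, the maps $g_n$ converge to $g$ uniformly on compact sets; combined with $n^{1/2}\ttX\to W$, the extended continuous mapping theorem yields $M_{\ttX}\to W^\T(U_{xx}^\infty)^{-1}W$ in distribution. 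I would then identify this limit by whitening: with $\mu=(U_{xx}^\infty)^{-1/2}\omega$ the vector $(U_{xx}^\infty)^{-1/2}W\sim\mathcal{N}(\mu,I_\p)$, so $W^\T(U_{xx}^\infty)^{-1}W=\|(U_{xx}^\infty)^{-1/2}W\|^2$ is by definition $\chi^2_\p(\lambda)$ with $\lambda=\mu^\T\mu=\omega^\T(U_{xx}^\infty)^{-1}\omega$. Since the noncentral chi-square distribution function is continuous, $a$ is a continuity point and $\pr(\Md)=\pr(M_{\ttX}<a)\to\pr\{\chi^2_\p(\omega^\T U_{xx}^{-1}\omega)<a\}$, with $U_{xx}$ understood as its limit.

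The only genuinely delicate point is that the normalizing matrix in the quadratic form varies with $n$, so the plain continuous mapping theorem does not apply directly; I would address this by exploiting that $U_{xx}$ is deterministic, which lets its convergence be combined with the weak limit of $n^{1/2}\ttX$ via Slutsky's theorem (the extended continuous mapping theorem) without any joint-distribution subtleties. Everything else---the marginal read-off from Corollary~\ref{cor2}, the whitening, and the continuity of the limiting CDF---is routine.
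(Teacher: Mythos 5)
Your proposal is correct and follows essentially the same route as the paper's own proof: read off the marginal limit $n^{1/2}\ttX\ \dot\sim\ \mathcal{N}(\omega,U_{xx})$ from Corollary~\ref{cor2}, recognize the quadratic form $(n^{1/2}\ttX)^\T U_{xx}^{-1}(n^{1/2}\ttX)$ as asymptotically $\chi^2_{\p}(\omega^\T U_{xx}^{-1}\omega)$ after whitening, and conclude via continuity of the limiting distribution function at $a$. The extra care you take with the $n$-dependence of the deterministic normalizing matrix $U_{xx}$ and with the continuity point is a welcome tightening of a step the paper passes over implicitly, but it does not change the argument.
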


The asymptotic biasedness of $\ntau$ under SRRdM can be derived from Corollary \ref{cor2}. Let $D_\omega\sim\mathcal{N}(U_{xx}^{-1/2}\omega,I)$ be a $\p$-dimensional normal random vector with mean $U_{xx}^{-1/2}\omega$ and identity covariance matrix $I$.

\begin{theorem}
	\label{thm4}
	Under SRRdM, if the conditions in Corollary \ref{cor2} hold, then the asymptotic expectation of $\ntau$ is $U_{\tau x}U_{xx}^{-1/2}\big\{U_{xx}^{-1/2}\omega+E(D_\omega\mid D_\omega^\T D_\omega<a)\big\}$. Moreover, when the propensity scores are the same across strata, SRRdM is equivalent to SRRoM if we use the same threshold for these two criteria.
\end{theorem}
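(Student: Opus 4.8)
The plan is to establish the two claims in turn: deriving the asymptotic bias from the joint central limit theorem, and then verifying the equivalence by a direct weighting identity.

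For the asymptotic expectation, I would start from Corollary~\ref{cor2}, which gives $n^{1/2}(\hat\tau-\tau,\ttX^\T)^\T\xrightarrow{d}\mathcal{N}((0,\omega^\T)^\T,U^\infty)$. Writing $A=\ntau$ and $B=n^{1/2}\ttX$, the first observation is that the acceptance event $\Md=\{M_{\ttX}<a\}$ depends on $B$ only, since $M_{\ttX}=B^\T U_{xx}^{-1}B$ (with $U_{xx}$ replaced by its limit via Slutsky). In the Gaussian limit I would decompose $A$ along its linear projection onto $B$, writing $A=U_{\tau x}U_{xx}^{-1}(B-\omega)+e$, where $E(e)=0$ and, by joint normality, $\cov(e,B)=U_{\tau x}-U_{\tau x}U_{xx}^{-1}U_{xx}=0$, so that $e$ is independent of $B$ and hence of $\Md$. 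Conditioning then gives $E(A\mid\Md)=U_{\tau x}U_{xx}^{-1}\{E(B\mid\Md)-\omega\}$.

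Next I would evaluate $E(B\mid\Md)$ by standardizing: setting $D=U_{xx}^{-1/2}B$ gives $D\sim\mathcal{N}(U_{xx}^{-1/2}\omega,I)$, which is the law of $D_\omega$, and the acceptance region becomes the centered ball $\{D^\T D<a\}$. Hence $E(B\mid\Md)=U_{xx}^{1/2}E(D_\omega\mid D_\omega^\T D_\omega<a)$, and substituting back produces $E(A\mid\Md)=U_{\tau x}U_{xx}^{-1/2}\{E(D_\omega\mid D_\omega^\T D_\omega<a)-U_{xx}^{-1/2}\omega\}$, the asymptotic expectation of $\ntau$. As a check, letting $a\to\infty$ forces $E(D_\omega\mid\cdot)\to U_{xx}^{-1/2}\omega$, so the bias vanishes and one recovers the unbiasedness of $\hat\tau$ under stratified randomization.

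The technical crux is upgrading the weak convergence of $(A,B)$ to convergence of the conditional mean $E(A\mid\Md)$, because in finite samples $(A,B)$ is not exactly Gaussian and the projection residual is only asymptotically independent of $B$. I would follow the rerandomization asymptotics of \citet{Li2018}: Proposition~\ref{prop2pa} shows the limiting acceptance probability $p'_a=\pr\{\chi^2_\p(\omega^\T U_{xx}^{-1}\omega)<a\}$ is strictly positive, so the limiting conditional law is well defined, and a uniform-integrability argument for $\ntau$ on $\Md$---using the bounded limiting second moments supplied by Condition~\ref{cond6}---lets the conditional expectation pass to the limit. This uniform-integrability step is where the main work lies. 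For the second claim I would argue directly that $\ttX=\htX$ when $\pk\equiv p$. Regrouping $\ttX$ by strata gives $\ttX=\skS(\nkt/\nt)\Xobs_{[k]1}-\skS(\nkc/\nc)\Xobs_{[k]0}$, and with $\nkt=p\nk$, $\nkc=(1-p)\nk$, $\nt=pn$, $\nc=(1-p)n$ the weights collapse to $\nkt/\nt=\nkc/\nc=\nk/n=\pik$, so $\ttX=\skS\pik\{\Xobs_{[k]1}-\Xobs_{[k]0}\}=\htX$. Therefore $\cov(\ttX)=\cov(\htX)$ and $M_{\ttX}=M_{\htX}$, so the two criteria accept exactly the same assignments under a common threshold, which is the asserted equivalence of SRRdM and SRRoM; this part is elementary once the weighting identity is in hand.
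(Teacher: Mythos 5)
Your argument follows essentially the same route as the paper's: reduce to the Gaussian limit $(A,B^\T)^\T\sim\mathcal{N}((0,\omega^\T)^\T,U)$ via Corollary~\ref{cor2} (the paper formalizes this step through Lemma~\ref{corA11}, the analogue of Lemma~\ref{corA1} for SRRdM), project $A$ linearly onto $B$ so the residual is independent of the acceptance event, standardize $B$ to $D_\omega=U_{xx}^{-1/2}B\sim\mathcal{N}(U_{xx}^{-1/2}\omega,I)$, and for the second claim verify the weighting identity $\ttX=\htX$ and $U_{xx}=\Sigma_{xx}$ under equal propensity scores. One substantive discrepancy deserves attention: you obtain $U_{\tau x}U_{xx}^{-1/2}\{E(D_\omega\mid D_\omega^\T D_\omega<a)-U_{xx}^{-1/2}\omega\}$, whereas the theorem (and the paper's proof) carries a plus sign. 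Your sign is the correct one: the projection intercept is $E(A)-U_{\tau x}U_{xx}^{-1}E(B)=-U_{\tau x}U_{xx}^{-1}\omega$, while the paper's proof asserts $\epsilon\sim\mathcal{N}(U_{\tau x}U_{xx}^{-1}\omega,(1-R^2)U_{\tau\tau})$, a sign slip that propagates to the displayed formula. This does not alter the qualitative conclusion that $\hat\tau$ is asymptotically biased when $\omega\neq0$, but it is worth noting that you have proved the corrected statement rather than the one as printed. Your remark about upgrading weak convergence to convergence of conditional means is a genuine point of care that the paper does not address; the paper implicitly reads ``asymptotic expectation'' as the mean of the limiting conditional law delivered by Lemma~\ref{corA11}, under which reading no uniform-integrability step is required, but your more literal interpretation would indeed need the truncation/UI argument you sketch.
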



According to Theorem \ref{thm4}, the asymptotic expectation of $\ntau$ under SRRdM is usually not equal to zero when $\omega\neq0$, that is,  $\hat\tau$ is a biased estimator for $\tau$ under SRRdM, even asymptotically, thus we do not recommend SRRdM to be used in stratified randomized experiments.

\section{Another conservative estimator of $\Sigma_{\tau\tau}$}
This section provides another conservative variance estimator based on the results of \citet{Pashley2017}. This variance estimator is applicable when at least two strata of each (small) stratum size exist. 
Let $m_j$ ($j=1,\ldots,J$) be different sizes of small strata and $K_j$ be the number of strata of size $m_j$. We assume that $K_j\ge2$ for all $j$. Let $\tmj=K_j^{-1}\sum_{k\in A_{ss}:\nk=m_j}\tauk$ be the average treatment effect over strata of size $m_j$. Then
$$\hat\Sigma^*_{\tau\tau}=\sum_{k \notin A_{ss} } \pik\Big\{\frac{\skY(1)}{\pk}+\frac{\skY(0)}{1-\pk}\Big\} +  \left(  \frac{n_{ss} }{n} \right)^2 \frac{n}{n_{ss}^2 }\sum_{j=1}^J(m_jK_j)^2 \frac{1}{K_j(K_j-1)}\sum_{k\in A_{ss}:\nk=m_j}(\htk-\htmj)^2
$$ 
is a conservative estimator of $\Sigma_{\tau\tau}$, where $\htmj= (1 / K_j ) \sum_{k\in A_{ss}:\nk=m_j}\htk$ is an unbiased estimator of $\tmj$.
Let $\hat{R}^2_*=\hat{\Sigma}_{\tau x}\Sigma_{xx}^{-1}\hat \Sigma_{x\tau}/\hat \Sigma_{\tau\tau}^*$ and  $\nu_\xi(\hat R^2_*,p_a,p)$ be the $\xi$th quantile of $(1-\hat R^2_*)^{1/2}\epsilon_0+(\hat R^2_*)^{1/2}L_{\p,a}$. 
\begin{theorem}
	\label{thmvaro2}
	Under SRRoM, if Conditions \ref{cond1pk} and \ref{cond4}--\ref{cond::small-strata} hold, then $\hat \Sigma_{\tau\tau}^*\{1-(1-v_{\p,a})\hat{R}^2_*\}$ is an asymptotically conservative estimator for the asymptotic variance of $\ntau$ and
	\begin{equation*}
		\big[\hat\tau-(\hat \Sigma_{\tau\tau}^*/n)^{\frac{1}{2}}\nu_{1-\alpha/2}(\hat R^2_*,p_a,p),\ \hat\tau-(\hat \Sigma_{\tau\tau}^*/n)^{\frac{1}{2}} \nu_{\alpha/2}(\hat R^2_*,p_a,p)\big]
	\end{equation*}
	is an asymptotically conservative $\CI$ confidence interval of $\tau$.
\end{theorem}
\begin{remark}
 We have two conservative estimators of $\Sigma_{\tau\tau}$. It is natural to ask which one is better. In fact,  these two estimators are equivalent when the stratum sizes and propensity scores are the same across strata. Generally, $ \hat\Sigma^*_{\tau\tau}$ has a smaller bias when the treatment effects of similar-sized small strata are similar. However, the use of $ \hat\Sigma^*_{\tau\tau}$ is limited to the situations when there are at least two strata of each size. See \citet{Pashley2017} for more detailed discussions. 
\end{remark}

\section{Additional simulation results}
Figure \ref{fig2} shows the results of different methods for unequal propensity scores in Cases 1-4 in the main text. We present the bias, standard deviation (SD), root mean squared error (RMSE), mean confidence interval length (CI length), and empirical coverage probability (CP) in  Tables \ref{tab11}--\ref{tab22}.

We also conduct simulation studies for paired and finely stratified randomized experiments. Since these two kinds of experiments contain only small strata, we study the performance of only SRRoM and compare it with stratified randomization.  The simulation setup is similar to  Case 1 with the difference that we let $\nk=2$ and $\pk=0.5$ for paired randomized experiments, and let $\nk=4$ and $\pk=0.25$ for finely stratified experiments. Since the stratum sizes and propensity scores are the same across strata, two conservative variance estimators $\hat \Sigma_{\tau \tau}$ and $ \hat\Sigma^*_{\tau\tau}$ are equal to each other.

The results are shown in Figure~\ref{pair} and Table~\ref{tabpair}. The conclusions are similar to those in the main text. First, the treatment effect estimators under both assignment mechanisms have small finite-sample biases, which are more than ten times smaller than the standard deviations.  Second, for paired randomized experiments,  SRRoM  reduces the RMSE and confidence interval length by  $90.67\% - 93.78\%$ and $18.03\% - 36.15\%$, respectively, while for finely stratified randomized experiments, the percentages of reduction in RMSE and confidence interval length are $61.13\% - 133.61\%$ and $39.50\% - 90.33\%$, respectively. Third, all the interval estimators are conservative, with the empirical coverage probabilities being larger than the confidence level.


\begin{figure}[ht]
\begin{center}
\includegraphics[width=6in]{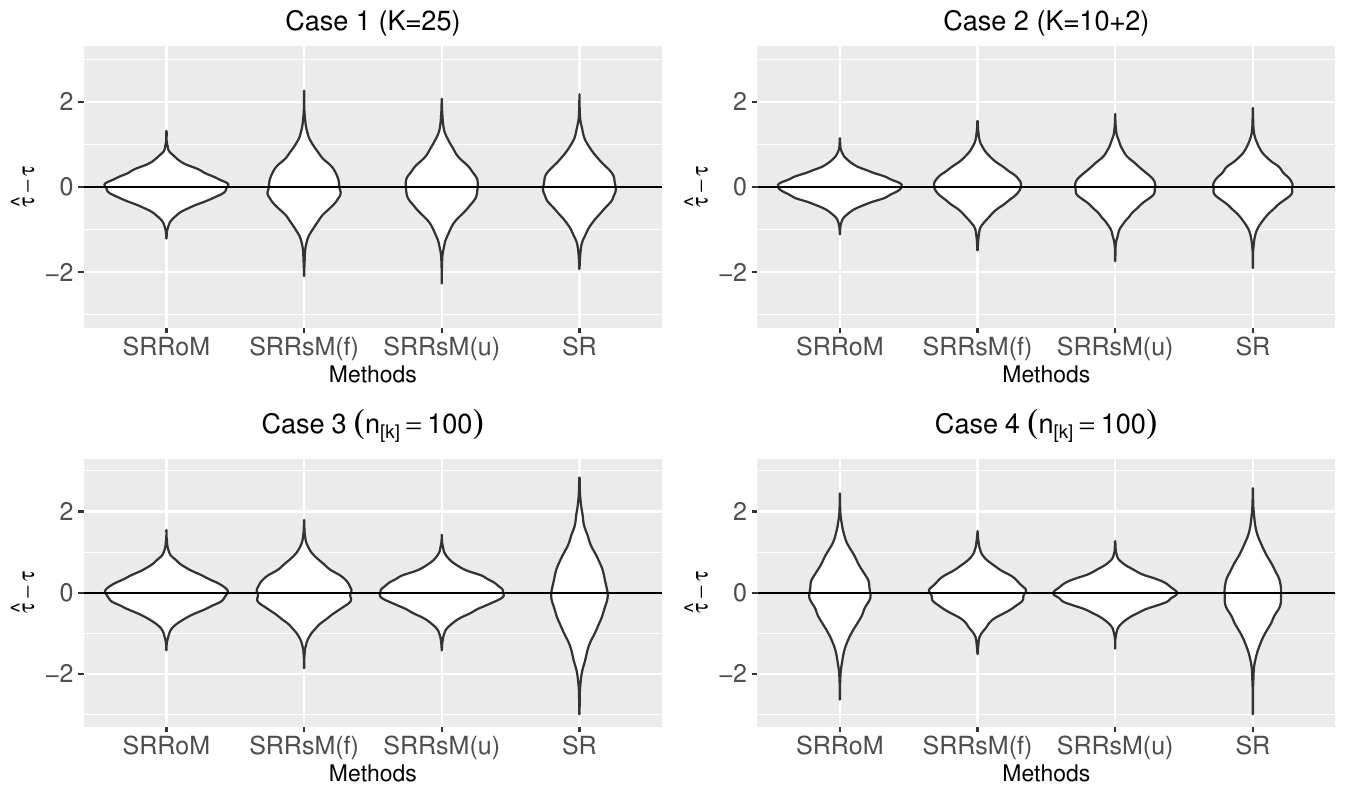}
\end{center}
\caption{Violin plot of the (centered) average treatment effect estimator, $\hat\tau-\tau$, under SRRoM, SRRsM(f) for a fair comparison, SRRsM(u) for an unfair comparison, and stratified randomization (SR). The  propensity scores are unequal across strata ($\pk =0.4, k\le \S/2,$ and $\pk =  0.6, \ k>\S/2)$. \label{fig2}}
\end{figure}

\newpage
\ \\

\begin{table}[H]
\caption{{Simulation results for Case 1} \label{tab11}}
\begin{center}
{
\begin{tabular}{ccccccccc}
  $K$ & Propensity score & Method & Bias & SD & RMSE & CI length & CP (\%) \\[5pt] \hline
  25 & equal & SRRoM & -0.0015 & 0.3128 & 0.3128 & 2.2538 & 99.95 \\ 
 & & SRRsM(f) & -0.0049 & 0.5543 & 0.5544 & 2.6563 & 98.20 \\ 
 & & SRRsM(u) & 0.0008 & 0.5523 & 0.5523 & 2.9207 & 99.13 \\ 
 & & SR &-0.0089 & 0.5569 & 0.5570 & 2.9189 & 98.99 \\ \hline
 & unequal & SRRoM &  0.0002 & 0.3435 & 0.3435 & 2.2857 & 99.89 \\ 
 & & SRRsM(f) &  0.0063 & 0.5729 & 0.5729 & 2.6818 & 97.98 \\ 
 & & SRRsM(u) & -0.0007 & 0.5763 & 0.5763 & 2.9763 & 98.93 \\ 
 & & SR & 0.0011 & 0.5728 & 0.5728 & 2.9806 & 98.97 \\ \hline
 50 & equal & SRRoM  & 0.0016 & 0.2216 & 0.2216 & 1.3816 & 99.79 \\ 
 & & SRRsM(f) & 0.0024 & 0.3561 & 0.3561 & 1.6568 & 98.13 \\ 
 & & SRRsM(u) & 0.0011 & 0.3518 & 0.3518 & 1.7589 & 98.69 \\ 
 & & SR & -0.0029 & 0.3561 & 0.3561 & 1.7585 & 98.57 \\ \hline
 & unequal & SRRoM & 0.0015 & 0.2394 & 0.2394 & 1.4146 & 99.60 \\ 
 & & SRRsM(f) & -0.0018 & 0.3659 & 0.3659 & 1.6859 & 97.75 \\ 
 & & SRRsM(u) & 0.0273 & 0.3518 & 0.3529 & 1.8126 & 99.03 \\ 
 & & SR & -0.0058 & 0.3671 & 0.3672 & 1.8020 & 98.53 \\ \hline
 100 & equal & SRRoM & -0.0023 & 0.1547 & 0.1547 & 0.8522 & 99.33 \\ 
 & & SRRsM(f) & 0.0020 & 0.2625 & 0.2625 & 1.1528 & 97.30 \\ 
 & & SRRsM(u) & -0.0015 & 0.2647 & 0.2647 & 1.1935 & 97.82 \\ 
 & & SR &  0.0004 & 0.2622 & 0.2622 & 1.1933 & 97.66 \\ \hline
 & unequal & SRRoM &  0.0017 & 0.1589 & 0.1590 & 0.8662 & 99.55 \\ 
 & & SRRsM(f) & -0.0006 & 0.2727 & 0.2727 & 1.1715 & 96.61 \\ 
 & & SRRsM(u) & -0.0236 & 0.2665 & 0.2675 & 1.2216 & 97.52 \\ 
 & & SR & 0.0022 & 0.2678 & 0.2678 & 1.2190 & 97.90 \\ \hline
\end{tabular}
}
\begin{tablenotes}
Note: SRRoM, stratified rerandomization based on overall Mahalanobis distance; SRRsM, stratified rerandomization based on stratum-specific Mahalanobis distance; SRRsM(f), SRRsM for fair comparison; SRRsM(u), SRRsM for unfair comparison; SR, stratified randomization; SD, standard deviation; RMSE, root mean squared error; CI length, mean confidence interval length; CP, empirical coverage probability.
\end{tablenotes}
\end{center}	
\end{table}

\begin{table}[H]
\caption{{ Simulation results for Case 2} \label{tab12}}
\begin{center}
{
\begin{tabular}{ccccccccc}
  $K$ & Propensity score & Method & Bias & SD & RMSE & CI length & CP (\%) \\[5pt] \hline
  10+2 & equal & SRRoM & 0.0038 & 0.2917 & 0.2918 & 1.7266 & 99.74 \\ 
 & & SRRsM(f) & -0.0028 & 0.4172 & 0.4172 & 2.0047 & 98.24 \\ 
 & & SRRsM(u) & 0.0031 & 0.4575 & 0.4575 & 2.2276 & 98.40 \\ 
 & & SR & 0.0018 & 0.4583 & 0.4583 & 2.2292 & 98.40 \\ \hline
 & unequal & SRRoM &  -0.0000 & 0.3055 & 0.3055 & 1.7446 & 99.60 \\ 
 & & SRRsM(f) &  0.0266 & 0.4337 & 0.4345 & 2.0207 & 97.91 \\ 
 & & SRRsM(u) & 0.0017 & 0.4742 & 0.4742 & 2.2859 & 98.22 \\ 
 & & SR & 0.0114 & 0.4763 & 0.4765 & 2.2870 & 98.13 \\ \hline
 20+2 & equal & SRRoM  & -0.0010 & 0.2342 & 0.2342 & 1.3554 & 99.62 \\ 
 & & SRRsM(f) & -0.0054 & 0.3524 & 0.3524 & 1.6175 & 97.93 \\ 
 & & SRRsM(u) &  0.0027 & 0.3591 & 0.3591 & 1.7493 & 98.54 \\ 
 & & SR &  -0.0004 & 0.3640 & 0.3640 & 1.7489 & 98.36 \\ \hline
 & unequal & SRRoM & 0.0010 & 0.2454 & 0.2454 & 1.3787 & 99.44 \\ 
 & & SRRsM(f) &  -0.0052 & 0.3678 & 0.3678 & 1.6468 & 97.44 \\ 
 & & SRRsM(u) & -0.0016 & 0.3759 & 0.3759 & 1.8001 & 98.30 \\ 
 & & SR &  0.0006 & 0.3816 & 0.3816 & 1.8013 & 98.13 \\ \hline
 50+2 & equal & SRRoM & -0.0029 & 0.2189 & 0.2189 & 1.1714 & 99.29 \\ 
 & & SRRsM(f) &  -0.0057 & 0.4497 & 0.4498 & 1.8618 & 96.21 \\ 
 & & SRRsM(u) & 0.0006 & 0.4553 & 0.4553 & 1.9713 & 96.74 \\ 
 & & SR &  0.0089 & 0.4614 & 0.4615 & 1.9715 & 96.71 \\ \hline
 & unequal & SRRoM & 0.0011 & 0.2317 & 0.2317 & 1.1946 & 99.00 \\ 
 & & SRRsM(f) &  0.0024 & 0.4610 & 0.4610 & 1.8905 & 96.04 \\ 
 & & SRRsM(u) &  -0.0041 & 0.4716 & 0.4717 & 2.0133 & 96.59 \\ 
 & & SR & 0.0010 & 0.4691 & 0.4691 & 2.0134 & 96.69 \\ \hline
\end{tabular}
}
\begin{tablenotes}
Note: SRRoM, stratified rerandomization based on overall Mahalanobis distance; SRRsM, stratified rerandomization based on stratum-specific Mahalanobis distance; SRRsM(f), SRRsM for fair comparison; SRRsM(u), SRRsM for unfair comparison; SR, stratified randomization; SD, standard deviation; RMSE, root mean squared error; CI length, mean confidence interval length; CP, empirical coverage probability.
\end{tablenotes}
\end{center}	
\end{table}

\begin{table}[H]
\caption{{ Simulation results for Case 3} \label{tab21}}
\begin{center}
{
	\begin{tabular}{ccccccccc}
$ \nk $&Propensity score& Method & Bias & SD & RMSE & CI length & CP (\%) \\[5pt] \hline
	100 &equal&SRRoM & -0.0066 & 0.3803 & 0.3803 & 2.3613 & 99.77 \\ 
	&&SRRsM(f)&  0.0045 & 0.4896 & 0.4896 & 2.7093 & 99.46 \\ 
	&&SRRsM(u)& 0.0022 & 0.3758 & 0.3758 & 2.4404 & 99.89 \\ 
	&&SR&  -0.0115 & 0.8652 & 0.8652 & 3.8709 & 97.19 \\ \hline
	&unequal&SRRoM &-0.0019 & 0.4005 & 0.4005 & 2.3758 & 99.70 \\ 
	&&SRRsM(f)&-0.0000 & 0.5014 & 0.5014 & 2.7249 & 99.49 \\ 
	&&SRRsM(u)&  0.0065 & 0.3938 & 0.3938 & 2.4245 & 99.81 \\ 
	&&SR& -0.0048 & 0.8876 & 0.8876 & 3.9798 & 97.42 \\ \hline
	200 &equal&SRRoM & -0.0000 & 0.2343 & 0.2343 & 1.3582 & 99.61 \\ 
	&&SRRsM(f)& -0.0003 & 0.2562 & 0.2562 & 1.4362 & 99.57 \\ 
	&&SRRsM(u)& -0.0003 & 0.2331 & 0.2331 & 1.3702 & 99.56 \\ 
	&&SR&  0.0056 & 0.3636 & 0.3636 & 1.7552 & 98.65 \\ \hline
	&unequal&SRRoM & 0.0037 & 0.2415 & 0.2416 & 1.3754 & 99.45 \\ 
	&&SRRsM(f)& 0.0039 & 0.2661 & 0.2661 & 1.4518 & 99.43 \\ 
	&&SRRsM(u)& 0.0026 & 0.2411 & 0.2411 & 1.3791 & 99.60 \\ 
	&&SR& -0.0019 & 0.3764 & 0.3764 & 1.8026 & 98.33 \\ \hline
	500 &equal&SRRoM &  -0.0010 & 0.1554 & 0.1554 & 0.8570 & 99.47 \\ 
	&&SRRsM(f)& -0.0011 & 0.1755 & 0.1755 & 0.9213 & 99.14 \\ 
	&&SRRsM(u)& 0.0001 & 0.1541 & 0.1541 & 0.8625 & 99.59 \\ 
	&&SR& -0.0004 & 0.2625 & 0.2625 & 1.1947 & 97.58 \\ \hline
	&unequal&SRRoM & 0.0007 & 0.1588 & 0.1588 & 0.8704 & 99.32 \\ 
	&&SRRsM(f)&  0.0018 & 0.1818 & 0.1818 & 0.9352 & 98.98 \\ 
	&&SRRsM(u)& -0.0022 & 0.1611 & 0.1611 & 0.8740 & 99.29 \\ 
	&&SR& 0.0042 & 0.2701 & 0.2701 & 1.2193 & 97.62 \\ \hline
	\end{tabular}
	}
\begin{tablenotes}
Note: SRRoM, stratified rerandomization based on overall Mahalanobis distance; SRRsM, stratified rerandomization based on stratum-specific Mahalanobis distance; SRRsM(f), SRRsM for fair comparison; SRRsM(u), SRRsM for unfair comparison; SR, stratified randomization; SD, standard deviation; RMSE, root mean squared error; CI length, mean confidence interval length; CP, empirical coverage probability.
\end{tablenotes}
\end{center}	
\end{table}

\begin{table}[H]
\caption{{ Simulation results for Case 4} \label{tab22}}
\begin{center}
{
	\begin{tabular}{ccccccccc}
$ \nk $&Propensity score& Method & Bias & SD & RMSE & CI length & CP (\%) \\[5pt] \hline
	 100 &equal&SRRoM &  0.0008 & 0.6964 & 0.6964 & 3.3621 & 98.37 \\ 
	&&SRRsM(f)& -0.0022 & 0.4254 & 0.4254 & 2.6829 & 99.88 \\ 
	&&SRRsM(u)&-0.0000 & 0.3440 & 0.3440 & 2.5060 & 99.94 \\ 
	&&SR& -0.0108 & 0.7361 & 0.7362 & 3.4941 & 98.09 \\ \hline
	&unequal&SRRoM & 0.0005 & 0.7148 & 0.7148 & 3.3857 & 98.12 \\ 
	&&SRRsM(f)& 0.0001 & 0.4394 & 0.4394 & 2.6818 & 99.83 \\ 
	&&SRRsM(u)& 0.0018 & 0.3484 & 0.3484 & 2.4915 & 99.96 \\ 
 	&&SR& -0.0049 & 0.7412 & 0.7413 & 3.5423 & 98.30 \\ \hline
	 200 &equal&SRRoM &  0.0002 & 0.3877 & 0.3877 & 1.9380 & 98.83 \\ 
	&&SRRsM(f)& 0.0015 & 0.2805 & 0.2805 & 1.6755 & 99.73 \\ 
 	&&SRRsM(u)& -0.0044 & 0.2379 & 0.2379 & 1.5811 & 99.90 \\ 
  	&&SR& -0.0071 & 0.4408 & 0.4408 & 2.1216 & 98.38 \\ \hline
	&unequal&SRRoM & -0.0025 & 0.4088 & 0.4088 & 2.0000 & 98.49 \\ 
	&&SRRsM(f)& -0.0042 & 0.2923 & 0.2924 & 1.6992 & 99.73 \\ 
	&&SRRsM(u)&0.0005 & 0.2523 & 0.2523 & 1.5909 & 99.87 \\ 
	&&SR& -0.0008 & 0.4681 & 0.4681 & 2.2025 & 98.27 \\ \hline
	 500 &equal&SRRoM & 0.0033 & 0.2791 & 0.2791 & 1.2411 & 97.42 \\ 
	&&SRRsM(f)& 0.0003 & 0.1874 & 0.1874 & 0.9524 & 98.97 \\ 
	&&SRRsM(u)& 0.0003 & 0.1604 & 0.1604 & 0.8785 & 99.40 \\ 
	&&SR& 0.0028 & 0.2874 & 0.2874 & 1.2870 & 97.26 \\ \hline
	&unequal&SRRoM & 0.0004 & 0.2856 & 0.2856 & 1.2581 & 97.26 \\ 
	&&SRRsM(f)& 0.0004 & 0.1913 & 0.1913 & 0.9670 & 99.03 \\ 
	&&SRRsM(u)& 0.0034 & 0.1646 & 0.1647 & 0.8918 & 99.21 \\ 
	&&SR& -0.0004 & 0.2933 & 0.2933 & 1.3074 & 97.37 \\ \hline
	\end{tabular}
}
\begin{tablenotes}
Note: SRRoM, stratified rerandomization based on overall Mahalanobis distance; SRRsM, stratified rerandomization based on stratum-specific Mahalanobis distance; SRRsM(f), SRRsM for fair comparison; SRRsM(u), SRRsM for unfair comparison; SR, stratified randomization; SD, standard deviation; RMSE, root mean squared error; CI length, mean confidence interval length; CP, empirical coverage probability.
\end{tablenotes}
\end{center}	
\end{table}

\begin{figure}[ht]
\begin{center}
\includegraphics[width=5.65in]{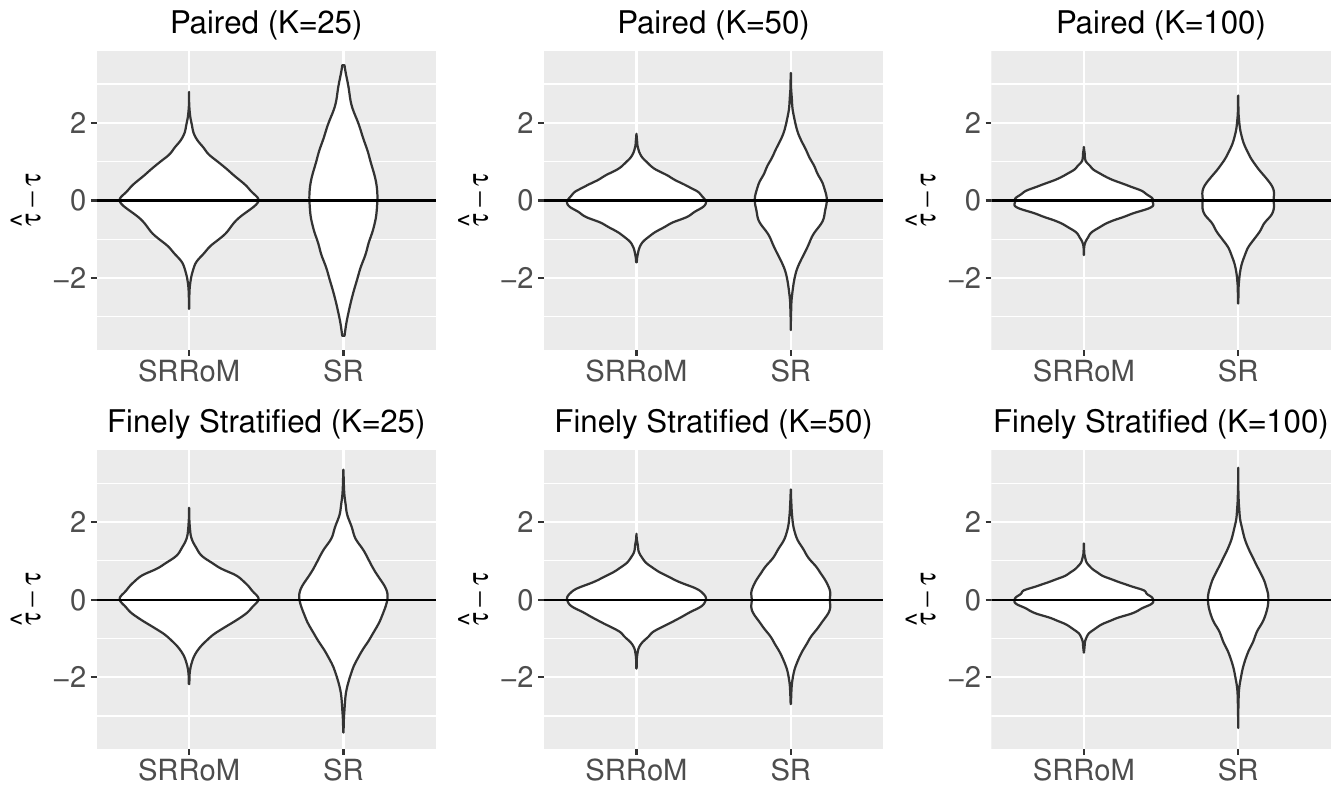}
\end{center}
\caption{Violin plot of the (centered) average treatment effect estimator, $\hat\tau-\tau$, under SRRoM and stratified randomization (SR). The first line shows the results of paired randomized experiments, and the second line shows the results of finely stratified randomized experiments. \label{pair}}
\end{figure}

\begin{table}[H]
\caption{{ Simulation results for paired and finely stratified randomized experiments} \label{tabpair}}
\begin{center}
{
	\begin{tabular}{ccccccccc}
	&$K$& Method & Bias & SD & RMSE & CI length & CP (\%) \\[5pt] \hline
	paired&25&SR & 0.0111 & 1.4414 & 1.4414 & 6.7423 & 97.38 \\ 
	 experiments&&SRRoM & 0.0003 & 0.7560 & 0.7560 & 5.1565 & 99.82 \\ \cline{2-8}
	&50&SR& 0.0149 & 0.9489 & 0.9490 & 4.8359 & 98.59 \\ 
	&&SRRoM& -0.0039 & 0.4897 & 0.4897 & 4.0970 & 100.00 \\ \cline{2-8}
	&100&SR&-0.0035 & 0.7497 & 0.7497 & 3.4744 & 97.64 \\ 
	&&SRRoM&-0.0001 & 0.3890 & 0.3890 & 2.5519 & 99.86  \\ \hline
	finely &25&SR& -0.0172 & 1.0036 & 1.0037 & 4.6598 & 97.14 \\
	stratified &&SRRoM& 0.0033 & 0.6229 & 0.6229 & 3.1964 & 96.27 \\ \cline{2-8}
	experiments&50&SR & 0.0048 & 0.8149 & 0.8149 & 3.7297 & 97.49 \\
	&&SRRoM & -0.0104 & 0.4807 & 0.4809 & 2.6736 & 99.15 \\ \cline{2-8}
	&100&SR& -0.0024 & 0.8887 & 0.8887 & 3.6279 & 95.80 \\ 
	&&SRRoM& -0.0037 & 0.3804 & 0.3804 & 1.9061 & 98.06 \\ \hline
	\end{tabular}
}
\begin{tablenotes}
Note: SRRoM, stratified rerandomization based on overall Mahalanobis distance; SR, stratified randomization; SD, standard deviation; RMSE, root mean squared error; CI length, mean confidence interval length; CP, empirical coverage probability.
\end{tablenotes}
\end{center}	
\end{table}

\section{Proof of results}

\subsection{Some lemmas}

Our proofs rely on some lemmas obtained from \citet{Li2018}, which are presented below without proof.

Our proofs rely on some lemmas from \citet{Li2018}, which are presented below without proof.

\begin{lemma}
\label{lemA1}
Let $L_{\p,a}\sim D_1\mid D^\T D\le a$, where $D=(D_1,\ldots,D_\p)^\T\sim\mathcal{N}(0,I)$. For any $\p$ dimensional unit vector $h$, we have $L_{\p,a}\sim h^\T D\mid D^\T D\le a$.
\end{lemma}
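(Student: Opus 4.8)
The plan is to exploit the orthogonal invariance of the standard multivariate normal law together with the rotational invariance of the acceptance event $\{D^\T D \le a\}$. The key observation is that $D_1 = e_1^\T D$, where $e_1 = (1,0,\ldots,0)^\T$ is itself a unit vector, so the claim amounts to saying that replacing one unit vector by another in the linear functional $h^\T D$ leaves the conditional law unchanged, because the Gaussian and the constraint are insensitive to which orthonormal direction we single out.

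First I would complete $h$ to an orthonormal basis of $\mathbb{R}^\p$ and let $Q$ be the orthogonal matrix whose first row equals $h^\T$; such a $Q$ exists for every unit vector $h$, for instance by the Gram--Schmidt procedure. Setting $\tilde D = QD$, orthogonality of $Q$ gives $\tilde D \sim \mathcal{N}(0, Q I Q^\T) = \mathcal{N}(0,I)$, so $\tilde D$ and $D$ are identically distributed. Moreover $\tilde D^\T \tilde D = D^\T Q^\T Q D = D^\T D$, so the quadratic form, and hence the acceptance event, is preserved, while the first coordinate becomes $\tilde D_1 = e_1^\T Q D = h^\T D$ by the choice of the first row of $Q$.

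Consequently the joint law of $(h^\T D,\ D^\T D) = (\tilde D_1,\ \tilde D^\T \tilde D)$ coincides with that of $(D_1,\ D^\T D)$. Conditioning both pairs on the common event $\{D^\T D \le a\}$ then yields the equality in distribution of $h^\T D \mid D^\T D \le a$ and $D_1 \mid D^\T D \le a$, which is exactly $L_{\p,a}$. I anticipate no genuine obstacle here; the only point requiring a little care is checking that conditioning respects the distributional identity, and this is immediate since the conditioning event is the \emph{same} functional $\{\,\cdot\,^\T\,\cdot \le a\}$ of $\tilde D$ as of $D$, so the equality of joint laws passes intact to the conditional laws.
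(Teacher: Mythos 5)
Your argument is correct: the orthogonal-invariance construction (take $Q$ orthogonal with first row $h^\T$, observe $QD\sim\mathcal{N}(0,I)$, $\|QD\|^2=\|D\|^2$, and $(QD)_1=h^\T D$, so the joint laws of $(h^\T D, D^\T D)$ and $(D_1, D^\T D)$ agree and hence so do the conditionals) is the standard proof of this fact. The paper itself states this lemma without proof, citing \citet{Li2018}, and your argument is essentially the one given there, so there is nothing further to compare.
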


\begin{lemma}
\label{lemA4}
Let $\varepsilon_0 \sim \mathcal{N}(0,1)$, $L_{\p,a}\sim D_1|D'D\leq a$, where $D=(D_1,\ldots,D_\p)^{'}\sim \mathcal{N}(0,I)$, and $\varepsilon_0$ and $L_{\p,a}$ are mutually independent. Then for any $a>0$ and $c \geq 0$,
	\begin{equation}
		\pr(\sqrt{1-\rho^2}\cdot \varepsilon_0+\rho L_{p,a}\geq c)
	\end{equation}
is a nonincreasing function of $\rho$ for $\rho\in[0,1]$.
\end{lemma}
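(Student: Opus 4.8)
The plan is to fix $c\ge 0$ and show that $\Psi(\rho):=\pr\{\sqrt{1-\rho^2}\,\varepsilon_0+\rho L_{\p,a}\ge c\}$ satisfies $\Psi'(\rho)\le 0$ on $(0,1)$; since $L_{\p,a}$ has bounded support, $\Psi$ is continuous on $[0,1]$, so this gives monotonicity on the whole interval. Writing $s=\sqrt{1-\rho^2}$ and letting $f$ denote the density of $L_{\p,a}$ (symmetric about $0$), I would condition on $L_{\p,a}$ to obtain $\Psi(\rho)=E\{\bar\Phi((c-\rho L_{\p,a})/s)\}$, where $\bar\Phi=1-\Phi$ and $\Phi,\phi$ are the standard normal CDF and density. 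First I would differentiate under the integral sign; using $s^2+\rho^2=1$, the $\rho$-derivative of the argument $m(\ell)=(c-\rho\ell)/s$ simplifies to $(c\rho-\ell)/s^3$, and the chain rule yields
\[
\Psi'(\rho)=\frac{1}{s^3}\,E\!\left\{\phi\!\left(\tfrac{c-\rho L_{\p,a}}{s}\right)(L_{\p,a}-c\rho)\right\}.
\]
The problem is thereby reduced to showing this expectation is nonpositive, i.e.\ that $W_\rho:=\sqrt{1-\rho^2}\,\varepsilon_0+\rho L_{\p,a}$ grows more \emph{peaked} about $0$ as $\rho$ increases; the stated conclusion then follows from symmetry via $\pr(W_\rho\ge c)=\tfrac12\pr(|W_\rho|\ge c)$ for $c\ge0$.

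The key algebraic step I would use is the identity $(\ell-c\rho)\phi(m)=\ell s^2\phi(m)-s^2\tfrac{d}{d\ell}\phi(m)$, which follows from $\phi'(x)=-x\phi(x)$ together with the decomposition $\ell-c\rho=\ell s^2-\rho(c-\rho\ell)$. Substituting this into the integral and integrating the second term by parts collapses $\Psi'(\rho)$ to $\tfrac1s\int\phi(m(\ell))\,q(\ell)\,d\ell$ (plus a boundary term treated below), where $q(\ell)=\ell f(\ell)+f'(\ell)$. At this point I would invoke the explicit density: since $L_{\p,a}\sim D_1\mid D^\T D\le a$, integrating out $D_2,\dots,D_\p$ gives $f(\ell)\propto\phi(\ell)\,\pr(\chi^2_{\p-1}\le a-\ell^2)$ on $\ell^2\le a$. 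A short computation then shows that the two contributions to $q$ almost entirely cancel, leaving $q(\ell)\propto-\ell\,\phi(\ell)\,g_{\p-1}(a-\ell^2)$, where $g_{\p-1}\ge0$ is the $\chi^2_{\p-1}$ density. Hence $q$ is an \emph{odd} function with sign $-\operatorname{sign}(\ell)$.

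Finally I would exploit this oddness together with the even, unimodal shape of $\phi$. Folding the integral onto $\ell>0$ gives $\Psi'(\rho)=\tfrac1s\int_0^\infty\{\phi(m(\ell))-\phi(m(-\ell))\}\,q(\ell)\,d\ell$. For $\ell>0$ one has $q(\ell)\le0$, while $|c-\rho\ell|\le c+\rho\ell$ forces $\phi(m(\ell))-\phi(m(-\ell))\ge0$ because $\phi$ is decreasing in $|\cdot|$; the integrand is thus a product of a nonnegative and a nonpositive factor, so $\Psi'(\rho)\le0$ as required.

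I expect the main obstacle to be the middle paragraph: producing the integration-by-parts identity and, above all, computing $q=\ell f+f'$ and recognizing that the two terms cancel to leave a clean odd function. Two subsidiary technical points I would need to check are the legitimacy of differentiating under the integral (routine, as $L_{\p,a}$ has support $[-\sqrt a,\sqrt a]$ and the integrand is smooth in $\rho$ on $(0,1)$) and the boundary term from the integration by parts. For $\p\ge2$ the density vanishes at $\pm\sqrt a$, so it drops out; for $\p=1$ (a truncated normal) the surviving term equals $f(\sqrt a)\{\phi(m(\sqrt a))-\phi(m(-\sqrt a))\}\ge0$ and enters $\Psi'$ with a minus sign, so it only reinforces $\Psi'(\rho)\le0$.
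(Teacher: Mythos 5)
Your proof is correct, but note that the paper itself does not prove this statement: Lemma~\ref{lemA4} is one of the results imported from \citet{Li2018} and explicitly ``presented below without proof,'' so your argument is a self-contained replacement rather than a parallel to anything in the text. I checked the three load-bearing steps and they all go through. The differentiation under the integral gives $\Psi'(\rho)=s^{-3}E\{\phi(m(L_{\p,a}))(L_{\p,a}-c\rho)\}$ with $m(\ell)=(c-\rho\ell)/s$, since $\partial m/\partial\rho=(c\rho-\ell)/s^3$ after using $s^2+\rho^2=1$. The identity $(\ell-c\rho)\phi(m)=\ell s^2\phi(m)-s^2\frac{d}{d\ell}\phi(m)$ follows from $\frac{d}{d\ell}\phi(m)=\rho(c-\rho\ell)s^{-2}\phi(m)$, and the cancellation in $q=\ell f+f'$ is exact: with $f(\ell)=C\phi(\ell)G_{\p-1}(a-\ell^2)$ one gets $f'(\ell)=-\ell f(\ell)-2C\ell\phi(\ell)g_{\p-1}(a-\ell^2)$, hence $q(\ell)=-2C\ell\phi(\ell)g_{\p-1}(a-\ell^2)$, which is odd and nonpositive on $\ell\ge0$. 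Combined with $|c-\rho\ell|\le c+\rho\ell$ (so $\phi(m(\ell))\ge\phi(m(-\ell))$ for $\ell\ge0$), the folded integral is nonpositive, and your handling of the boundary term (vanishing for $\p\ge2$; explicitly nonpositive contribution for $\p=1$) is right. Two minor points worth tightening if this were written out in full: for $\p=2$ the derivative $f'$ has an integrable singularity at $\ell=\pm\sqrt a$, so the integration by parts should be justified via a limit over $[-\sqrt a+\delta,\sqrt a-\delta]$ (the boundary contribution still vanishes because $f$ does); and continuity of $\Psi$ at $\rho=1$ rests on $L_{\p,a}$ having an atomless distribution rather than on bounded support per se. By comparison, the argument in \citet{Li2018} is structured around conditioning on magnitudes and a unimodality/symmetry lemma (in the spirit of Lemma~\ref{lemA7}); your calculus route is more computational but has the advantage of isolating exactly where monotonicity comes from, namely the sign of $\ell f(\ell)+f'(\ell)$.
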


\begin{lemma}
\label{lemA5}
For any $0\leq \pa\leq \tpa\leq 1$ and any $c \geq 0$,
\begin{equation*}
	\pr(|L_{\p,F_\p^{-1}(\pa)}|\leq c)\geq \pr(|L_{\p,F_\p^{-1}(\tpa)}|\leq c)
\end{equation*}
\end{lemma}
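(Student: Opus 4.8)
The plan is to recast the inequality as a monotonicity statement in the threshold. Since $F_\p$ denotes the distribution function of $\chi^2_\p$ and $D^\T D\sim\chi^2_\p$, the inverse $F_\p^{-1}$ is non-decreasing, so $\pa\le\tpa$ yields thresholds $a:=F_\p^{-1}(\pa)\le\tilde a:=F_\p^{-1}(\tpa)$. It therefore suffices to show that the map $a\mapsto\pr(|L_{\p,a}|\le c)$ is non-increasing on $(0,\infty)$. Writing $L_{\p,a}\sim D_1\mid D^\T D\le a$ with $D=(D_1,\dots,D_\p)^\T\sim\mathcal{N}(0,I)$, this probability equals $\pr(|D_1|\le c\mid D^\T D\le a)$.

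The key step is a polar decomposition that exploits the rotational invariance of the standard Gaussian. Set $R=\|D\|$ and $U=D/\|D\|$; it is classical that $R$ and $U$ are independent, $U$ is uniform on the unit sphere $S^{\p-1}$, and $R^2\sim\chi^2_\p$. The event $\{D^\T D\le a\}=\{R\le\sqrt a\}$ constrains only $R$, so after conditioning $U$ stays uniform and independent of the truncated radius $R_a:=(R\mid R\le\sqrt a)$. Consequently $L_{\p,a}\stackrel{d}{=}R_a U_1$ with $R_a\perp U_1$, where $U_1$, the first coordinate of a uniform point on $S^{\p-1}$, has a law not depending on $a$. A direct computation of the conditional distribution function shows $R_a$ is stochastically non-decreasing in $a$: for $a_1\le a_2$ and every $r$ one has $\pr(R_{a_1}\le r)\ge\pr(R_{a_2}\le r)$, since truncating the fixed law of $R$ to the wider interval $[0,\sqrt{a_2}]$ only moves mass to the right.

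Finally, conditioning on $U_1$ and using independence gives $\pr(|L_{\p,a}|\le c)=E\{\pr(R_a\le c/|U_1|\mid U_1)\}$; for each fixed value of $U_1$ the inner probability is non-increasing in $a$ by the stochastic ordering just established, hence so is the expectation, which is precisely the claim. I expect the main obstacle to be recognizing that conditioning on the total norm decouples the radial and angular components---once the polar decomposition is in place, everything reduces to a one-dimensional stochastic-monotonicity argument requiring no delicate estimates. An equivalent route differentiates $g(a):=N(a)/F_\p(a)$ with $N(a)=\pr(|D_1|\le c,\,D^\T D\le a)$, giving $g'(a)=\{f_\p(a)/F_\p(a)\}\{h(a)-g(a)\}$ where $h(a)=\pr(|D_1|\le c\mid D^\T D=a)$; the same sphere identity shows $h(a)=\pr(|U_1|\le c/\sqrt a)$ is non-increasing, whence $g(a)\ge h(a)$ and $g'(a)\le 0$. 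Either way the sphere identity is the crux.
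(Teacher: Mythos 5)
Your argument is correct. Note first that the paper itself gives no proof of this lemma: it is listed among the results ``presented below without proof'' and imported from \citet{Li2018}, so there is no in-paper argument to compare against. Your polar decomposition $D=RU$ with $R=\|D\|$ independent of the uniform direction $U$ is exactly the right device: conditioning on $\{R\le\sqrt a\}$ leaves $U$ uniform and independent of the truncated radius, giving $L_{\p,a}\stackrel{d}{=}R_aU_1$; the truncated radius is stochastically non-decreasing in $a$; and averaging $\pr(R_a\le c/|U_1|)$ over the fixed law of $U_1$ transfers that monotonicity to $\pr(|L_{\p,a}|\le c)$. This is in substance the same representation used in the source reference, where $L_{\p,a}$ is written as a product of a truncated chi variable, an independent random sign, and an independent $\mathrm{Beta}(1/2,(\p-1)/2)^{1/2}$ factor --- your $U_1$ is precisely that sign-times-root-beta variable --- so you have reconstructed the standard proof rather than found a new one. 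Your alternative differentiation route ($g'(a)=\{f_\p(a)/F_\p(a)\}\{h(a)-g(a)\}$ with $h$ non-increasing, hence $g\ge h$ and $g'\le0$) is also sound and arguably cleaner, since it reduces the claim to the single observation that $h(a)=\pr(|U_1|\le c/\sqrt a)$ decreases in $a$. The only loose ends are trivial: the boundary cases $\pa=0$ (where $F_\p^{-1}(0)=0$ and the conditioning event is degenerate) and $c=0$ should be dispatched separately, and for $\p=1$ one should note $U_1=\pm1$ so the decomposition degenerates harmlessly; none of this affects the validity of the argument.
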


\begin{lemma}
\label{lemA6}
For any $\tilde{\p}\ge \p\ge 1$ and any $c \geq 0$,
\begin{equation*}
	\pr(|L_{\p,F_\p^{-1}(\pa)}|\le c)\ge \pr(|L_{\tilde{\p},F_{\tilde\p}^{-1}(\pa)}|\le c)
\end{equation*}
\end{lemma}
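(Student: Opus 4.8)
The plan is to recast the statement as a \emph{peakedness} comparison of two symmetric densities and then reduce it to a single sign change. Throughout let $F_k$ and $f_k$ denote the distribution function and density of $\chi^2_k$, let $\phi$ be the standard normal density, and set $a_\p=F_\p^{-1}(\pa)$ and $a_{\tilde\p}=F_{\tilde\p}^{-1}(\pa)$, so that $F_\p(a_\p)=F_{\tilde\p}(a_{\tilde\p})=\pa$ and, since $\chi^2$ is stochastically increasing in its degrees of freedom, $a_\p\le a_{\tilde\p}$. Conditioning $D\sim\mathcal N(0,I_\p)$ on $D^\T D\le a_\p$ and integrating out all but the first coordinate gives the explicit density
\[
  g_{\p}(x)=\frac{\phi(x)\,F_{\p-1}(a_\p-x^2)}{\pa},\qquad |x|\le a_\p^{1/2},
\]
for $L_{\p,a_\p}$ (and analogously for $L_{\tilde\p,a_{\tilde\p}}$); the useful bookkeeping point is that the calibration $F_\p(a_\p)=\pa$ makes the normalizing constant exactly $\pa$ in \emph{every} dimension. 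As both densities are symmetric about $0$, the asserted inequality $\pr(|L_{\p,a_\p}|\le c)\ge\pr(|L_{\tilde\p,a_{\tilde\p}}|\le c)$ for all $c\ge0$ is precisely the statement that $L_{\p,a_\p}$ is more peaked than $L_{\tilde\p,a_{\tilde\p}}$.

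Since peakedness is transitive, it suffices to treat the adjacent case $\tilde\p=\p+1$ and iterate. Using the density above and the common normalizer, the inequality becomes
\[
  \int_{-c}^{c}\phi(x)\big\{F_{\p-1}(a_\p-x^2)-F_{\p}(a_{\p+1}-x^2)\big\}\,\dd x\ \ge\ 0\qquad(c\ge0).
\]
The integral vanishes at $c=0$ and, because both terms integrate to $\pa$ over the whole line, also at $c=\infty$. Its derivative in $c$ has the sign of $\psi(c^2)$, where, writing $u=x^2$,
\[
  \psi(u)=F_{\p-1}(a_\p-u)-F_{\p}(a_{\p+1}-u).
\]
Hence it is enough to show that $\psi$ changes sign exactly once, from $+$ to $-$, on $[0,a_{\p+1}]$: a single rise-then-fall of the integral pins it between its two vanishing endpoints and forces it to stay nonnegative.

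On $(a_\p,a_{\p+1}]$ the first term is zero and $\psi(u)=-F_\p(a_{\p+1}-u)\le0$, so the crossing can only occur on $[0,a_\p]$, and there I would establish it by proving that the ratio $F_{\p-1}(a_\p-u)/F_{\p}(a_{\p+1}-u)$ is nonincreasing, equivalently that the reversed hazard rates $\tilde r_k(t)=f_k(t)/F_k(t)$ satisfy $\tilde r_{\p}(a_{\p+1}-u)\le\tilde r_{\p-1}(a_\p-u)$ for $u\in[0,a_\p]$. This is the crux and the main obstacle. Two competing monotonicities collide: increasing the argument lowers the reversed hazard rate (which helps, as $a_{\p+1}-u>a_\p-u$), while increasing the dimension raises it at a fixed argument (which hurts, since $\chi^2_{\p-1}$ is smaller than $\chi^2_{\p}$ in the likelihood-ratio and hence reversed-hazard order). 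The content of the lemma is that the threshold gap $a_{\p+1}-a_\p$ forced by the calibration $F_\p(a_\p)=F_{\p+1}(a_{\p+1})=\pa$ is always wide enough to beat the one-step rise in dimension, uniformly in $\pa$. A promising handle is the convolution identity $F_\p(a_{\p+1}-u)=\int\phi(z)\,F_{\p-1}(a_{\p+1}-u-z^2)\,\dd z$, which rewrites $\psi$ as $F_{\p-1}$ evaluated at the single point $a_\p-u$ minus its $\chi^2_1$-average over the shifted points $a_{\p+1}-u-z^2$, thereby turning the whole question into a comparison between the location shift $a_{\p+1}-a_\p$ and the unit mean of the added $\chi^2_1$ noise; carrying out this estimate for all values of $\pa$ is where the real work lies.
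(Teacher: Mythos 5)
The paper does not actually prove this lemma: it is listed among the ``lemmas obtained from \citet{Li2018}, which are presented below without proof,'' so there is no in-paper argument to compare against. Judged on its own terms, your proposal is not yet a proof but a reduction of the lemma to an unproven claim. The parts you do carry out are sound: the conditional density $g_\p(x)=\phi(x)F_{\p-1}(a_\p-x^2)/\pa$ is correct, the calibration $F_\p(a_\p)=F_{\tilde\p}(a_{\tilde\p})=\pa$ does make the normalizers agree, both bracketed terms integrate to $\pa$ over $\mathbb{R}$, peakedness is transitive so the adjacent case $\tilde\p=\p+1$ suffices, and the ``rises-then-falls between two zero endpoints'' device correctly converts a single $+\to-$ sign change of $\psi(u)=F_{\p-1}(a_\p-u)-F_\p(a_{\p+1}-u)$ into the desired inequality.

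The gap is that the single-crossing property itself --- equivalently, your reversed-hazard-rate inequality $\tilde r_\p(a_{\p+1}-u)\le\tilde r_{\p-1}(a_\p-u)$ on $[0,a_\p]$ --- is precisely the quantitative content of the lemma, and you explicitly leave it open (``where the real work lies''). Nothing in the sketch controls the competition you correctly identify between the two monotonicities; indeed even the starting sign is not settled, since $\psi(0)=F_{\p-1}(a_\p)-F_\p(a_{\p+1})$ is a difference of two quantities both of which exceed $\pa$, so the asserted $+\to-$ pattern is not self-evident and could in principle fail or degenerate for some $\pa$. Until that uniform-in-$\pa$ estimate is supplied, the argument does not establish the lemma. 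For completeness: the standard proof (in the supplement of \citet{Li2018}, building on \citet{Morgan2012}) avoids this analytic comparison of distribution functions by using the representation $L_{\p,a}\sim\chi_{\p,a}\,S\,\beta_\p^{1/2}$ with $S$ a random sign and $\beta_\p\sim\mathrm{Beta}\bigl(1/2,(\p-1)/2\bigr)$, and argues via stochastic ordering of the factors; if you want to complete your route instead, the missing reversed-hazard comparison is the single estimate you must prove.
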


\begin{lemma}
\label{lemA7}
Let $\zeta_0,\ \zeta_1,$ and $\zeta_2$ be three jointly independent random variables satisfying: \\
\indent (1) $\zeta_0$ is continuous, symmetric around 0 and unimodal;\\
\indent (2) $\zeta_1$ and $\zeta_2$ are symmetric around 0;\\
\indent (3) $\pr(\zeta_1>c)\le\pr(\zeta_2>c)$ for any $c>0$.\\
Then $\pr(\zeta_0+\zeta_1>c)\le\pr(\zeta_0+\zeta_2>c)$ for any $c>0$.
\end{lemma}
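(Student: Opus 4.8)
The plan is to reduce the claim to a tail comparison between $|\zeta_1|$ and $|\zeta_2|$ and then to exploit the symmetry and unimodality of $\zeta_0$ via a symmetrization argument. First I would record that, because $\zeta_1$ and $\zeta_2$ are symmetric about $0$, condition~(3) is equivalent to the stochastic ordering $\pr(|\zeta_1|>t)\le\pr(|\zeta_2|>t)$ for every $t>0$, since a symmetric variable satisfies $\pr(|\zeta|>t)=2\,\pr(\zeta>t)$. Conditioning on $\zeta_0$ and writing $\bar F_0(s)=\pr(\zeta_0>s)$, I would then express \[ \pr(\zeta_0+\zeta>c)=E\{\bar F_0(c-\zeta)\}, \] the aim being to rewrite the integrand as an even function of $\zeta$, so that the two cases $\zeta=\zeta_1$ and $\zeta=\zeta_2$ can be compared solely through the laws of $|\zeta_1|$ and $|\zeta_2|$.

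Since $\zeta$ and $-\zeta$ have the same law, I would symmetrize by averaging, obtaining \[ \pr(\zeta_0+\zeta>c)=E\{\psi(\zeta)\},\qquad \psi(u)=\tfrac12\{\bar F_0(c-u)+\bar F_0(c+u)\}, \] where $\psi$ is even and bounded, so $\pr(\zeta_0+\zeta>c)=E\{\psi(|\zeta|)\}$ is always well defined. The crux---and the step I expect to be the main obstacle---is to show that $\psi$ is non-decreasing on $[0,\infty)$ whenever $c>0$. If $\zeta_0$ had a symmetric unimodal density $f_0$ this would be immediate, since $\psi'(u)=\tfrac12\{f_0(c-u)-f_0(c+u)\}\ge0$ because $f_0$ is non-increasing in $|\cdot|$ while $|c-u|\le c+u=|c+u|$ for all $c>0$ and $u\ge0$. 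As the hypotheses assume only continuity, symmetry and unimodality, I would argue without a density: for $u'\ge u\ge0$ one has $\psi(u')-\psi(u)=\tfrac12\{\pr(\zeta_0\in(c-u',c-u])-\pr(\zeta_0\in(c+u,c+u'])\}$, and a symmetric unimodal law assigns at least as much mass to the interval $(c-u',c-u]$ as to the equally long interval $(c+u,c+u']$, the former being centred closer to $0$; hence $\psi(u')\ge\psi(u)$.

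With $\psi$ non-decreasing and bounded on $[0,\infty)$, I would conclude through the layer-cake identity \[ E\{\psi(|\zeta|)\}=\psi(0)+\int_0^\infty \pr(|\zeta|>t)\,\dd\psi(t), \] in which $\dd\psi\ge0$. Inserting the ordering $\pr(|\zeta_1|>t)\le\pr(|\zeta_2|>t)$ into the integral and cancelling the common term $\psi(0)$ yields $E\{\psi(|\zeta_1|)\}\le E\{\psi(|\zeta_2|)\}$, which is exactly $\pr(\zeta_0+\zeta_1>c)\le\pr(\zeta_0+\zeta_2>c)$. Apart from the monotonicity of $\psi$, every step uses only symmetry together with the standard equivalence between stochastic dominance and the ordering of expectations of monotone functions.
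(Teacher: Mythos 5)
The paper states Lemma~\ref{lemA7} \emph{without proof}, importing it from \citet{Li2018}, so there is no in-paper argument to compare against; your proof is a correct, self-contained derivation, and it follows essentially the symmetrization route used in that reference. The reduction of condition (3) to $\pr(|\zeta_1|>t)\le\pr(|\zeta_2|>t)$ is valid because symmetry gives $\pr(|\zeta|>t)=\pr(\zeta>t)+\pr(\zeta<-t)=2\pr(\zeta>t)$, and the representation $\pr(\zeta_0+\zeta>c)=E\{\psi(|\zeta|)\}$ with $\psi(u)=\tfrac12\{\bar F_0(c-u)+\bar F_0(c+u)\}$ follows from independence and $\zeta\overset{d}{=}-\zeta$. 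The one step that genuinely needs care is the monotonicity of $\psi$ on $[0,\infty)$ for $c>0$ without presupposing a density, and your interval-mass comparison handles it correctly: continuity plus symmetric unimodality about $0$ forces the CDF of $\zeta_0$ to be concave on $(0,\infty)$, convex on $(-\infty,0)$, and atomless, hence absolutely continuous with a version of the density non-increasing in $|x|$; therefore, among intervals of equal length, the one whose centre is closer to the origin carries at least as much mass, and indeed $|c-(u+u')/2|\le c+(u+u')/2$ whenever $c>0$ and $u'\ge u\ge0$. The closing step is the standard equivalence between first-order stochastic dominance and the ordering of expectations of bounded non-decreasing functions (your layer-cake identity is legitimate here since $\psi$ is continuous and bounded). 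I see no gaps.
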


\subsection{Joint asymptotic normality under SR}

\subsubsection{Proof of Proposition \ref{prop0}}
\begin{proof}
According to the definitions in Section $\textnormal{2}$ in the main text,  the stratum-specific average treatment effect for the vector outcomes $R_i(z)$ and its difference-in-means estimator can be expressed as
$$\tau_{[k] \Ymul} = \frac{1}{\nk} \sumk \big \{  \Ymul_{i}(1) - \Ymul_i(0) \big \},$$ 
$$
\hat \tau_{[k] \Ymul} = \frac{1}{\nkt} \sumk  Z_i  \Ymul_{i}(1) -  \frac{1}{\nkc} \sumk ( 1 - Z_i ) \Ymul_i(0) .
$$
In stratified randomization, since we conduct complete randomization in each stratum independently, $\pik = \nk / n$, and $\pk = \nkt/\nk$, then
\[\begin{split}
   & \cov \big\{ n^{1/2}(\htmul-\tmul) \big\}  =  \cov \Big\{ n^{1/2}\skS\pik\htkmul \Big\} 
    = n\skS\pik^2\cov \big(\htkmul \big) \\
    =&\skS\pik\nk\Big\{ \frac{\SkYmul(1)}{\nkt}+\frac{\SkYmul(0)}{\nkc}- \frac{\Sktmul}{\nk} \Big\}  \\
    = & \skS\pik\Big\{\frac{\SkYmul(1)}{\pk}+\frac{\SkYmul(0)}{1-\pk}-\Sktmul\Big\},
\end{split}\]
where the formula of $\cov(\htkmul)$ is obtained from Theorem 3 of \citet{Li2017}.

\end{proof}


\subsubsection{Proof of Proposition \ref{prop::covariance}}
\begin{proof}
    Let $\Ymul_i(z)=(\Yi(z),\Xi^\T)^\T,\ z=0,1$, then we have
    \begin{equation*}
    \begin{split}
        \SkYmul(z)=&\frac{1}{\nk-1}\sumk \big\{\Ymul_i(z)-\bYmulk(z) \big\} \big\{\Ymul_i(z)-\bYmulk(z) \big\}^\T \\
        =&\sumk\pik\left(
        \begin{array}{cc}
            \SkY(z) & \SkXY^\T(z) \\
            \SkXY(z) & \SkX
        \end{array} \right),\quad z=0,1,
    \end{split}
    \end{equation*}
    and \[ \Sktmul=\frac{1}{\nk-1}\sumk\{\tau_{i,\Ymul}-\tau_{[k] \Ymul}  \}\{\tau_{i,\Ymul}-\tau_{[k] \Ymul}  \}^\T=\sumk\pik\left(
        \begin{array}{cc}
            \Skt(z) & 0 \\
            0 & 0
        \end{array} \right),
    \]
    where $\tau_{i,\Ymul} = \Ymul_i(1) - \Ymul_i(0) = ( \tau_i , 0^\T )^\T$. According to Proposition \ref{prop0}, we have
    \begin{equation}  \cov\{ n^{1/2}(\hat\tau-\tau, \htX^\T)^\T \} 
	=\skS\pik \left(
	\begin{array}{cc}
	\frac{\SkY(1)}{\pk}+\frac{\SkY(0)}{1-\pk}-\Skt & \frac{\SkXY^\T(1)}{\pk}+\frac{\SkXY^\T(0)}{1-\pk} \\
	\frac{\SkXY(1)}{\pk}+\frac{\SkXY(0)}{1-\pk} &	\frac{\SkX}{\pk(1-\pk )} \\
	\end{array}
	\right). \nonumber
    \end{equation}

\end{proof}

\subsubsection{Proof of Theorem \ref{thm1CLT}}
\begin{proof}
	It is enough to show that any linear combination of the components of the random vector $n^{1/2}(\htmul-\tmul)$ converges in distribution to a normal distribution. More precisely,  let $\mu=(\mu_1\ldots,\mu_d)^\T \in\mathbb{R}^d$ be a fixed $d$-dimensional vector and $\mu \neq 0$. It is enough to show that  $ n^{1/2} \mu^\T (\htmul-\tmul) $ converges in distribution to a normal distribution with mean zero and variance $\mu^\T \Sigmul^{\infty} \mu$. For this purpose, we define  scalar potential outcomes as 
	\[R_i^{\new}(z)=\sum_{j=1}^d\mu_jR_{i,j}(z),\ z=0,1,\ i=1,\ldots,n,\] where $R_{i,j}(z)$ is the $j$th component of $R_i(z)$. 
The corresponding average treatment effect and its stratified difference-in-means estimator are denoted as $\tau^{\new}$ and $\hat\tau^{\new}$, respectively. Then
	\begin{equation*}
	\begin{split}
		\tau^{\new}= & \skS\pik \cdot \frac{1}{\nk}  \sumk  \big\{ R_i^{\new}(1)-R_i^{\new}(0) \big\} \\
		=&\skS\pik \cdot \frac{1}{\nk} \sumk \bigg\{ \sum_{j=1}^d\mu_jR_{i,j}(1)-\sum_{j=1}^d\mu_jR_{i,j}(0) \bigg\} \\
		=& \sum_{j=1}^d\mu_j  \skS\pik  \cdot \frac{1}{\nk}  \sumk \big\{ R_{i,j}(1)-R_{i,j}(0) \big\} \\
		=& \mu^\T \tmul,
	\end{split}
	\end{equation*}
	and
	\begin{equation*}
	\begin{split}
	\htnew =&\skS\pik\sumk\Big\{\frac{\Zi R_i^{\new}(1)}{\nkt}-\frac{(1-\Zi)R_i^{\new}(0)}{\nkc}\Big\}\\
	=&\skS\pik\sumk\Big\{\frac{\Zi\sum_{j=1}^d\mu_jR_{i,j}(1)}{\nkt}-\frac{(1-\Zi)\sum_{j=1}^d\mu_jR_{i,j}(0)}{\nkc}\Big\}\\
	=&\sum_{j=1}^d\mu_j\skS\pik\sumk\Big\{\frac{\Zi R_{i,j}(1)}{\nkt}-\frac{(1-\Zi)R_{i,j}(0)}{\nkc}\Big\} \\ 
	= &  \mu^\T \htmul.
	\end{split}
	\end{equation*}
	Therefore, $ n^{1/2} \mu^\T (\htmul-\tmul)  = n^{1/2} ( \hat {\tau}^{\new}- {\tau}^{\new}). $ We only need to check that the conditions for obtaining the asymptotic normality of the (scalar) stratified difference-in-means estimator $\hat\tau^{\new}$ hold.  
	
	\begin{lemma}
	\label{thm::liu2019}
Under Condition \ref{cond::propensity}, if $R_i^{\new}(z)$ satisfies the following two conditions:

(a) For $z=0,1$, $\max_{k=1,\ldots,\S}\max_{i\in[k]} \big\{ R_i^{\new}(z) -  \bar{R}_{[k]}^{\new}(z) \big\}^2/n\to 0$;

(b) The covariance matrices $  \skS\pik \SkYnew(1) / \pk$, $ \skS\pik \SkYnew(0) / (1-\pk)$ and $\skS\pik \Sktnew$ have finite limits, and the limit of the variance $\sigma^2_n = \var \{  n^{1/2} \mu^\T (\htmul-\tmul) \}$ is (strictly) positive, where
\[\sigma_n^2=\skS\pik \bigg\{ \frac{\SkYnew(1)}{\pk}+\frac{\SkYnew(0)}{1-\pk}-\Sktnew \bigg\},\]
then $n^{1/2}(\htnew-\tnew)/\sigma_n$ converges in distribution to $\mathcal{N}(0,1)$ as $n$ tends to infinity.
	
	\end{lemma}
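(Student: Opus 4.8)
The statement is the scalar finite-population central limit theorem for the stratified difference-in-means estimator, and conditions (a) and (b) are precisely the $d=1$ specializations of Conditions~\ref{cond::max} and \ref{cond::covariance}; as noted in the Remark following those conditions, they coincide with the hypotheses of \citet{Liu2019}, so one route is to invoke that result directly. To keep the argument self-contained, the plan is to exploit the independence of the treatment assignments across strata and establish the normal approximation through a Lindeberg-type and finite-population argument, using as a building block the fact that a completely randomized difference-in-means within a single stratum is asymptotically normal \citep{fcltxlpd2016}.

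First I would decompose the centered, scaled estimator into independent stratum contributions. Using $n^{1/2}\pik=\pik^{1/2}\nk^{1/2}$, write
\[
n^{1/2}(\htnew-\tnew)=\skS T_k,\qquad T_k=\pik^{1/2}\nk^{1/2}\big(\hat\tau_{[k]}^{\new}-\tau_{[k]}^{\new}\big),
\]
where the $T_k$ are mutually independent (randomization is carried out separately in each stratum), each has mean zero (the within-stratum difference-in-means is unbiased for $\tau_{[k]}^{\new}$), and $\skS\var(T_k)=\sigma_n^2$, which converges to a strictly positive limit by hypothesis (b). It then suffices to show that $\skS T_k/\sigma_n$ converges in distribution to $\mathcal{N}(0,1)$.

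Next I would establish the limit for this triangular array of independent summands. The subtlety is that a plain Lindeberg--Feller argument is not directly available: when a stratum is large with a non-vanishing proportion $\pik$, its contribution $T_k$ has variance of order $\pik$ and is \emph{not} asymptotically negligible, so the Feller negligibility condition can fail. I would therefore argue via characteristic functions, factorizing $E\exp\{\mathrm{i}t\skS T_k/\sigma_n\}=\prod_{k}E\exp\{\mathrm{i}t\,T_k/\sigma_n\}$, taking logarithms, and showing that the sum of the per-stratum expansions converges to $-t^2/2$. The second-order terms contribute $-t^2\var(T_k)/(2\sigma_n^2)$ and sum to $-t^2/2$; the remainder in each factor is controlled by a Berry--Esseen-type bound for the completely randomized difference-in-means \citep{fcltxlpd2016,Li2017}, and the aggregate remainder is shown to vanish using condition (a). The crucial feature is that (a) normalizes the maximal squared within-stratum deviation by $n$ rather than by $\nk$, which is exactly what delivers uniform control across strata of widely differing sizes.

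The main obstacle is the heterogeneity of the strata: the argument must simultaneously accommodate a few large strata, each supplying a genuinely normal but non-negligible component, and possibly a diverging number of small strata, which are individually negligible yet collectively carry non-trivial variance. Reconciling these two regimes within a single limit is where the bulk of the work lies; condition (a) together with the strictly positive variance limit in (b) is what keeps the combined remainder uniformly small and pins down the normal limit.
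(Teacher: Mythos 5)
Your primary route coincides exactly with the paper's: the paper gives no proof of this lemma at all, stating only that it ``is a direct result of Theorem 2 of \citet{Liu2019}'' and omitting the argument, so invoking that theorem after checking that (a) and (b) are its hypotheses is precisely what the authors do. Your additional self-contained sketch is therefore going beyond the paper, and its skeleton is sound: the decomposition $n^{1/2}(\htnew-\tnew)=\skS T_k$ with $T_k=\pik^{1/2}\nk^{1/2}(\hat\tau^{\new}_{[k]}-\tau^{\new}_{[k]})$ independent, mean-zero, and with variances summing to $\sigma_n^2$ is correct, and you rightly identify that plain Lindeberg--Feller fails because a large stratum contributes a non-negligible $T_k$. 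One caution on the characteristic-function step, though: writing the log of each factor and claiming ``the second-order terms contribute $-t^2\var(T_k)/(2\sigma_n^2)$'' implicitly uses a Taylor expansion of $\log E\exp\{\mathrm{i}tT_k/\sigma_n\}$ that is only valid when $T_k/\sigma_n$ is uniformly small, i.e.\ for the small strata; for a stratum with $\pik$ bounded away from zero you instead need each such factor to converge to a Gaussian characteristic function via the within-stratum finite-population CLT of \citet{fcltxlpd2016}, and condition (a) normalized by $n$ must be converted to the per-stratum normalization by $\nk$ (which works since $\nk\le n$ and $\pik$ is bounded below for such strata). The cleaner organization --- and essentially what the cited proof does --- is to split the strata into a fixed number of ``large'' ones, handled by the finite-population CLT factor by factor, and the remaining ``small'' ones, whose aggregate is handled by a Lindeberg/Lyapunov argument; your write-up gestures at this but does not quite separate the two regimes, so as it stands the remainder control for the large-stratum factors is the gap you would need to fill.
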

	
	\begin{remark}
	Lemma~\ref{thm::liu2019} is a direct result of Theorem 2 of \citet{Liu2019}, so we omit the proof of this lemma.
	\end{remark}

The remaining of the proof is to check that Conditions (a) and (b) hold. According to the definition,  the stratum-specific mean of $R_i^{\new}(z)$ is $\bYmulk^{\new}(z)=\sumk\Ymul^{\new}(z)/\nk$ $(z=0,1)$, and the vector-form stratum-specific mean of the potential outcomes $R_i(z)$  is
    $$  \bYmulk(z) = \frac{1}{\nk} \sumk \Ymul_i(z)=(\bar{R}_{[k],1}(z),\ldots,\bar{R}_{[k],d}(z))^\T ,\quad z=0,1.$$
By Condition \ref{cond::max}, we have for $z=0,1$,
	\begin{equation*}
	\begin{split}
		&\frac{1}{n}\max_{1\le k\le \S}\max_{\ik}\{R_i^{\new}(z)-\bYmulk^{\new}(z)\}^2\\
		=&\frac{1}{n}\max_{1\le k\le \S}\max_{\ik}\Big[\sum_{j=1}^d\mu_j\{R_{i,j}(z)-\bar{R}_{[k],j}(z)\}\Big]^2\\
		\le&\frac{1}{n}\max_{1\le k\le \S}\max_{\ik}\Big(\sum_{j=1}^d\mu_j^2\Big)\sum_{j=1}^d \big\{R_{i,j}(z)-\bar{R}_{[k],j}(z)\big\}^2\\
		\le & d \cdot \Big(\sum_{j=1}^d\mu_j^2 \Big) \cdot \frac{1}{n}\max_{1\le k\le \S}\max_{\ik}  \|\Ymul_i(z)-\bYmulk(z)\|_\infty^2 \to 0. 
	\end{split}
	\end{equation*}
Moreover, the stratum-specific variance of $R_i^{\new}(z)$ in stratum $k$ satisfies
	\begin{equation*}
	\begin{split}
		\SkYnew(z)
		=&\frac{1}{\nk-1}\sumk\{R_i^{\new}(z)-\bYnewk(z)\}^2\\
		=&\frac{1}{\nk-1}\sumk\Big[\sum_{j=1}^d\mu_j\{R_{i,j}(z)-\bar{R}_{[k],j}(z)\}\Big]^2\\
		=&\frac{1}{\nk-1}\sumk\Big[\sum_{j=1}^d\mu_j^2\{R_{i,j}(z)-\bar{R}_{[k],j}(z)\}^2 \\
	\ \ &+\sum_{j\neq l}\mu_j\mu_l\{R_{i,j}(z)-\bar{R}_{[k],j}(z)\}\{R_{i,l}(z)-\bar{R}_{[k],l}(z)\}\Big]\\
		=&\sum_{j=1}^d\mu_j^2\{\SkYmul(z)\}_{j,j}+\sum_{j\neq l}\mu_j\mu_l\{\SkYmul(z)\}_{j,l}
		={\mu}^\T \SkYmul(z){\mu},
	\end{split}
	\end{equation*}
where $(B)_{i,j}$ denotes the $(i,j)$th element of matrix $B$. Thus, by Condition \ref{cond::covariance},
	\begin{equation*}
	\begin{split}
		\skS\pik\frac{\SkYnew(1)}{\pk}=\skS\pik\frac{{\mu}^\T \SkYmul(1){\mu}}{\pk}
		={\mu}^\T \bigg\{ \skS\pik\frac{\SkYmul(1)}{\pk} \bigg\} {\mu}
	\end{split}
	\end{equation*}
	has a finite limit as $n$ tends to infinity. 
	
	Similarly, $\skS\pik \SkYnew(0) / (1-\pk)$ and $\skS\pik \Sktnew$ have finite limits, and 
	\begin{equation}
	\begin{split}
	& \skS\pik\Big\{\frac{\SkYnew(1)}{\pk}+\frac{\SkYnew(0)}{1-\pk}-\Sktnew\Big\} \\
	 = & {\mu}^\T \bigg[\skS\pik\Big\{\frac{\SkYmul(1)}{\pk}+\frac{\SkYmul(0)}{1-\pk}-\Skt\Big\}\bigg]{\mu} \\
	 = &  {\mu}^\T \Sigmul {\mu}
	\label{A}
	\end{split}
	\end{equation} has a  limit $\mu^\T \Sigmul^{\infty} \mu > 0$.

	Thus, by Lemma~\ref{thm::liu2019} and Slutsky's theorem, $n^{1/2} \mu^\T (\htmul-\tmul)$ converges in distribution to $\mathcal{N}(0, \mu^\T \Sigmul^\infty \mu)$.

\end{proof}

\subsubsection{Proof of Theorem \ref{thm:cov-est}}

Before proving Theorem~\ref{thm:cov-est}, we establish the following lemma. Let $X_{ij}$ be the $j$th element of $X_i$, $j=1,\dots,p$. For any pair $(\Ai,\Bi)$ being equal to $(X_{ij}, X_{il}),\ (\Yi(1),X_{ij}),\ (\Yi(0), X_{ij}),\ (\Yi(1),\Yi(1)),$ or $(\Yi(0),\Yi(0))$, $i=1,\dots,n$, $j,l=1,\dots, p$, let $\skAB(z)$ denote the stratum-specific sample covariance between $\Ai $'s and $\Bi $'s in stratum $k$ under treatment arm $z$ for $n_{[k]z} \geq 2$, and let $\SkAB$ denote the stratum-specific population covariance between $\Ai $'s and $\Bi $'s in stratum $k$. Recall that $A_{ss} = \{k :  \nkt =1 \ \textnormal{or}\ \nkc = 1\}$ is the index set of small strata with only one treated or one control unit.
Recall that when $ \nkt = 1$,
	$$s_{[k]XY} (1) = \frac{ \nk }  { (\nk - 1) } \sumk Z_i  ( \Xi -  \bXk) Y_i^\obs =  \frac{ \nk }  { (\nk - 1)   } \cdot \frac{1}{\nkt} \sumk  Z_i   ( \Xi -  \bXk) Y_i(1).$$
Let $\rkz=z\pk+(1-z)(1-\pk)$, $z=0,1,\  k=1,\ldots,\S$. Denote
$$
\hat \sigma^2_{AB}(z) = \sum_{k \notin A_{ss} } \pik \frac{\skAB(z)}{\rkz}\ \ \text{and}\ \ \sigma^2_{AB}(z) = \sum_{ k \notin A_{ss}  }\pik \frac{\SkAB}{\rkz}.
$$
Here, $S_{[k]Y(z)X} = S_{[k]YX}(z) = S_{[k]XY}^\T(z)$ and $S_{[k]Y(z)Y(z)} = S_{[k]Y}^2(z)$ for $z=0,1$.
\begin{lemma}
	Under SRRoM, if Conditions \ref{cond::propensity}, \ref{cond4}, and \ref{cond5} hold, then $ \hat \sigma^2_{AB}(z) -  \sigma^2_{AB}(z) $ converges to zero in probability. Furthermore, if Condition~\ref{cond::small-strata} holds, then $\sum_{k \in A_{ss} } \pik s_{[k]XY} (z) / \rkz    -  \sum_{k \in A_{ss} } \pik S_{[k]XY} (z) / \rkz $  also converges to zero in probability.
	\label{lemsAB}
\end{lemma}
\begin{remark}
The first part of Lemma~\ref{lemsAB} is similar to Lemma A1 of \citet{Liu2019} with the differences being that there is a denominator  $\rkz$ in the weighted summation, and that the randomness originates from stratified rerandomization (SRRoM) instead of stratified randomization. 
\end{remark}


\begin{proof}[Proof of Lemma~\ref{lemsAB}]
	{\bf Step 1}. We prove the first part of Lemma~\ref{lemsAB}. According to Proposition \ref{prop1pa} (we will prove Proposition \ref{prop1pa} later), there exists a constant $c_a$ such that $\pr(\Mtau)\ge c_a>0$ when $n$ is sufficiently large. Then by the property of conditional  expectation,
	\begin{equation*}
	\begin{split}
		& E\big[\{ \hat \sigma^2_{AB}(z) -  \sigma^2_{AB}(z)\}^2\big] \\
		=&\ \pr(\Mtau)E\big[\{ \hat \sigma^2_{AB}(z) -  \sigma^2_{AB}(z)\}^2\mid\Mtau\big] +\pr(\Mtau^c)E\big[\{ \hat \sigma^2_{AB}(z) -  \sigma^2_{AB}(z)\}^2\mid\Mtau^c\big]\\
		\ge&\ \pr(\Mtau)\cdot E\big[\{ \hat \sigma^2_{AB}(z) -  \sigma^2_{AB}(z)\}^2\mid \Mtau\big],
	\end{split}
	\end{equation*}
	where $\Mtau^c$ is the complementary set of $\Mtau$.
	Therefore, 
	\begin{equation}
	\begin{split}
	E\big[\{ \hat \sigma^2_{AB}(z) -  \sigma^2_{AB}(z)\}^2\mid\Mtau\big]\le&\  \pr(\Mtau)^{-1}E\big[\{ \hat \sigma^2_{AB}(z) -  \sigma^2_{AB}(z)\}^2\big] \\
	=&\ \pr(\Mtau)^{-1}\var\{ \hat \sigma^2_{AB}(z)\}.
	\label{lemsAB1}
	\end{split}
	\end{equation}
	Let 
	$$ \bAobsk(z)=\frac{1}{\nkz}\sum_{i \in [k]:\ \Zi=z}\Ai \quad  \textnormal{and} \quad \bBobsk(z)=\frac{1}{\nkz}\sum_{i \in [k]: \ \Zi=z}\Bi, \quad z=0,1, $$
	 be the averages of the observed $\Ai $'s and $\Bi $' in stratum $k$ under treatment arm $z$. When $n_{[k]z} \geq 2$,  the stratum-specific sample variance  can be decomposed as
	\begin{equation*}
		\skAB(z)=\frac{\nkz}{\nkz-1}\Big[\frac{1}{\nkz}\sum_{i \in [k]: \ \Zi=z}(\Ai -\bAk)(\Bi -\bBk)-\big\{\bAobsk(z)-\bAk \big\} \big\{\bBobsk(z)-\bBk \big\}\Big],
	\end{equation*}
	then
	\begin{equation}
	\begin{split}
		&\var\{ \hat \sigma^2_{AB}(z) \}=\sum_{ k \notin A_{ss}} \frac{\pik^2}{\rkz^2}\var\{\skAB(z)\}\\
		=&\sum_{ k \notin A_{ss}} \frac{\pik^2}{\rkz^2}\frac{\nkz^2}{(\nkz-1)^2}\var\bigg[\frac{1}{\nkz}\sum_{i \in [k]:\ \Zi=z}(\Ai -\bAk)(\Bi -\bBk) -\{\bAobsk(z)-\bAk\}\{\bBobsk(z)-\bBk\}\bigg] \\
		\le& \skS\frac{\pik^2}{\rkz^2}\frac{\nkz^2}{(\nkz-1)^2} \cdot 2\bigg( \var\Big\{\frac{1}{\nkz}\sum_{i \in [k]:\ \Zi=z}(\Ai -\bAk)(\Bi -\bBk)\Big\} \\
		&\quad \quad \quad \quad \quad \quad \quad \quad \quad \quad \quad +\var\Big[ \big\{\bAobsk(z)-\bAk \big\} \big\{\bBobsk(z)-\bBk \big\}\Big] \bigg).
	\end{split}
	\label{varsAB}		
	\end{equation}
	The first term in (\ref{varsAB}) is upper bounded as follows:
	\begin{equation*}
	\begin{split}
		& 2 \skS\frac{\pik^2}{\rkz^2}\frac{\nkz^2}{(\nkz-1)^2}  \var\bigg[\frac{1}{\nkz}\sum_{i \in [k]: \ \Zi=z}\{\Ai -\bAk\}\{\Bi -\bBk\}\bigg]\\
		\le& 2 \skS\frac{\pik^2}{\rkz^2}\frac{\nkz^2}{(\nkz-1)^2} \Big(\frac{1}{\nkz}-\frac{1}{\nk} \Big)\frac{1}{\nk-1}\sumk(\Ai -\bAk)^2(\Bi -\bBk)^2 \\
		\le& \frac{1}{n}\max_{1\le k\le \S}\max_{\ik}(\Ai -\bAk)^2 \cdot 2 \skS\frac{\pik\nk}{\rkz^2}\frac{\nkz^2}{(\nkz-1)^2}\Big(\frac{1}{\nkz}-\frac{1}{\nk}\Big)S_{[k]B}^2 \\
		\le & 8 \cdot  \frac{1}{n}\max_{1\le k\le \S}\max_{\ik}(\Ai -\bAk)^2 \cdot  \skS\frac{\pik}{\rkz}S_{[k]B}^2 \cdot \frac{1}{\rkz} \Big( \frac{1}{\rkz} - 1  \Big),
	\end{split}
	\end{equation*}
where the last inequality is because of $ \nkz^2 / (\nkz-1)^2 \leq 4$.

The second term in \eqref{varsAB} is upper bounded as follows:
	\begin{equation*}
	\begin{split}
		& 2 \skS\frac{\pik^2}{\rkz^2}\frac{\nkz^2}{(\nkz-1)^2} \var\Big[ \big\{\bAobsk(z)-\bAk \big\} \big\{\bBobsk(z)-\bBk \big\}\Big] \\
		\le&2  \skS\frac{\pi_k^2}{\rkz^2}\frac{\nkz^2}{(\nkz-1)^2}\max_{\ik}(\Ai -\bAk)^2 E\big\{\bBobsk(z)-\bBk \big\}^2 \\
		\le & 2 \cdot \frac{1}{n}\max_{1\le k\le \S}\max_{\ik}(\Ai -\bAk)^2\skS\frac{\pik\nk}{\rkz^2}\frac{\nkz^2}{(\nkz-1)^2}\Big(\frac{1}{\nkz}-\frac{1}{\nk}\Big) S_{[k]B}^2, \\
		\le & 8 \cdot  \frac{1}{n}\max_{1\le k\le \S}\max_{\ik}(\Ai -\bAk)^2 \cdot  \skS\frac{\pik}{\rkz}S_{[k]B}^2 \cdot \frac{1}{\rkz} \Big( \frac{1}{\rkz} - 1  \Big),
	\end{split}
	\end{equation*}
where the last inequality is also because of  $ \nkz^2 / (\nkz-1)^2 \leq 4$.

By Conditions \ref{cond4} and \ref{cond5}, as $n\to\infty$,  $\max_{1\le k\le \S}\max_{\ik}(\Ai -\bAk)^2/n\to0,$ and $ \skS (\pik/ \rkz ) S_{[k]B}^2$  has a finite limit when $B= X_{ij}$ or $Y_i(z)$. By Condition \ref{cond::propensity},  $ (1 / \rkz )  \big( 1 / \rkz - 1  \big)$ is upper bounded by a constant. Thus, 
$$
\frac{1}{n}\max_{1\le k\le \S}\max_{\ik}(\Ai -\bAk)^2 \cdot  \skS\frac{\pik}{\rkz}S_{[k]B}^2 \cdot \frac{1}{\rkz} \Big( \frac{1}{\rkz} - 1  \Big) \rightarrow 0.
$$
Therefore, $\var\{ \hat \sigma^2_{AB}(z) \}\to0$ as $n\to\infty$. Then by Chebyshev's inequality and \eqref{lemsAB1}, we have, for any $\epsilon>0$,
	\begin{equation*}
	\begin{split}
		\pr(| \hat \sigma^2_{AB}(z) -  \sigma^2_{AB}(z)|>\epsilon\mid\Mtau)\le\frac{1}{\epsilon^2}E\big[\{ \hat \sigma^2_{AB}(z) -  \sigma^2_{AB}(z)\}^2\mid\Mtau\big]\to0.
	\end{split}
	\end{equation*}
	Hence $ \hat \sigma^2_{AB}(z) -  \sigma^2_{AB}(z)$ converges to zero in probability.

	{\bf Step 2}. We prove the second part of Lemma~\ref{lemsAB}. We only prove the result for $z=1$ since the proof for $z=0$ is similar. To ease the notation, we assume that $X_i$ is a scalar in the remaining proof of this lemma. For a $p$-dimensional $X_i$, we can apply the following arguments  to each of its element $X_{ij}$, $j=1,\dots,p$. Since when $ \nkt = 1$,
	$$s_{[k]XY} (1) = \frac{ \nk }  { (\nk - 1)   } \cdot \frac{1}{\nkt} \sumk  Z_i   ( \Xi -  \bXk) Y_i(1)  ,$$
	then
	\begin{eqnarray}
	E\{ s_{[k]XY} (1) \} & = &    \frac{ \nk }  { (\nk - 1)  } \sumk E( Z_i )   ( \Xi -  \bXk)  Y_i(1)   \nonumber \\
	& = & \frac{1}{\nk - 1} \sumk  ( \Xi -  \bXk) Y_i(1) =  S_{[k]XY} (1), \nonumber
	\end{eqnarray}
	and 
	\begin{eqnarray}
	\label{eqn::var-nkt=1}
	\var \{ s_{[k]XY} (1) \}  &=& \left( \frac{\nk}{\nk - 1} \right)^2  \cdot \var  \Big\{ \frac{1}{\nkt} \sumik Z_i   ( \Xi -  \bXk) Y_i(1)  \Big\} \nonumber \\
	&\leq&  \left( \frac{\nk}{\nk - 1} \right)^2  \cdot \left(  \frac{1}{\nkt} - \frac{1}{\nk}  \right)  \cdot \frac{1}{\nk - 1} \sumik  ( \Xi -  \bXk)^2 Y_i^2(1)  \nonumber \\
	& \leq & 4  \cdot \max_{\ik}(\Xi -\bXk)^2 \cdot \left(  \frac{1}{\nkt} - \frac{1}{\nk}  \right)  \cdot \frac{1}{\nk - 1} \sumik Y_i^2(1).
	\end{eqnarray}
	Then it suffices for the convergence of $\sum_{k \in A_{ss} } \pik s_{[k]XY} (1) / \pk    -  \sum_{k \in A_{ss} } \pik S_{[k]XY} (1) / \pk$ to show that
	\begin{equation}
	\label{eqn::nkt=1}
	\var \bigg \{ \sum_{k: \ \nkt = 1} \pik \frac{s_{[k]XY} (1)}{\pk}  \bigg \}  \rightarrow 0. \nonumber
	\end{equation}
	By \eqref{eqn::var-nkt=1},  we have
	\begin{eqnarray}
	&& \var \bigg \{ \sum_{k: \ \nkt = 1} \pik \frac{s_{[k]XY} (1)}{\pk}  \bigg \}  \nonumber \\
	& \leq & 4  \cdot  \max_{1\le k\le \S} \max_{\ik}(\Xi -\bXk)^2    \sum_{k: \ \nkt = 1} \frac{\pik^2}{\pk^2}  \left(  \frac{1}{\nkt} - \frac{1}{\nk}  \right)  \frac{1}{\nk - 1} \sumik Y_i^2(1) \nonumber \\
	& \leq & 4 \cdot \frac{1}{n}  \max_{1\le k\le \S} \max_{\ik}(\Xi -\bXk)^2 \cdot \frac{1}{n} \skS \frac{1}{ \pk^2 } \left( \frac{1}{\pk} - 1  \right)  \frac{\nk}{\nk - 1} \sumik Y_i^2(1)  \nonumber  \\
	& \rightarrow & 0,  \nonumber
	\end{eqnarray}
	where the convergence is because of Conditions \ref{cond::propensity}, \ref{cond4}, and \ref{cond::small-strata}.
\end{proof}


Now, we can prove Theorem \ref{thm:cov-est}. Let $o_p(1)$ denote a sequence of random variables converging to zero in probability. We write $b_n = O(1)$ if $| b_n | \leq C$ for a constant $C$. Let $\htrki$  be the $i$th element of $\htrk$. Similarly, we define $\htrkj$, $\htrsi $, $\htrsj $, $\trki$, $\trkj$, $\trsi $, and $\trsj$.  Let $ A_{ij} $ be the $(i,j)$th element of a matrix $A$. 

\begin{proof}[Proof of Theorem \ref{thm:cov-est}]


By definition, we have
\begin{eqnarray}
 \hat  \Sigma_{R,ij}&=& \sum_{k \notin A_{ss}}\pik\Big\{\frac{\skrij(1)}{\pk}+\frac{\skrij(0)}{1-\pk}\Big\}+ \nonumber \\
&&\left(  \frac{n_{ss} }{n} \right)^2  \sum_{k \in A_{ss} }  \frac{ n \nk^2 } { ( n_{ss} - 2 \nk  ) \Big( n_{ss} + \sum_{h \in A_{ss}} \frac{ n_{[h]}^2  }{ n_{ss} - 2  n_{[h]} }  \Big)  }  (\htrki -  \htrsi )(\htrkj -  \htrsj ). \nonumber
\end{eqnarray}
It suffices for Theorem~\ref{thm:cov-est} to show that, element-wise, 
$$
E( \hat  \Sigma_{R} ) = \tilde{\Sigma}_{R} \quad \textnormal{and} \quad  \hat  \Sigma_{R} - \tilde{\Sigma}_{R} = o_p(1).
$$ 
For any $1\le i,j\le d$, similar to the proof of Theorem 3.4.1 in \citet{Pashley2017}, we have
\begin{eqnarray}
    && E \left\{   \left(  \frac{n_{ss} }{n} \right)^2  \sum_{k \in A_{ss} }  \frac{ n \nk^2 } { ( n_{ss} - 2 \nk  ) \Big( n_{ss} + \sum_{h \in A_{ss}} \frac{ n_{[h]}^2  }{ n_{ss} - 2  n_{[h]} }  \Big)  }  ( \htrki -  \htrsi ) ( \htrkj -  \htrsj )  \right\} \nonumber \\
    & = &  \sum_{k \in A_{ss} }  \pik \Big\{ \frac{\Skrij(1)}{\pk} + \frac{\Skrij(0)}{1 - \pk} - \Sktrij \Big\} +  \nonumber \\
    &&  \frac{n_{ss}^2}{n} \sum_{k \in A_{ss} }  \frac{  \nk^2 } { ( n_{ss} - 2 \nk  ) \Big( n_{ss} + \sum_{h \in A_{ss}} \frac{ n_{[h]}^2  }{ n_{ss} - 2  n_{[h]} }  \Big)  }  (\trki- \trsi)(\trkj- \trsj). \nonumber
\end{eqnarray}
Since 
$$
E \left[   \sum_{k \notin A_{ss} } \pik\Big\{\frac{\skrij(1)}{\pk}+\frac{\skrij(0)}{1-\pk}\Big\} \right] =  \sum_{k \notin A_{ss} } \pik\Big\{\frac{\Skrij(1)}{\pk}+\frac{\Skrij(0)}{1-\pk}\Big\}, 
$$
then
$$
E( \hat  \Sigma_{R} ) = \tilde{\Sigma}_{R}.
$$

Next, we show that, element-wise, $ \hat  \Sigma_{R} - \tilde{\Sigma}_{R} = o_p(1)$.  Replacing $Y_i(z)$ by $R_{i,j}(z)$, $z=0,1$, $j=1,\dots,d$ in the proof of Lemma~\ref{lemsAB},  we have
$$
\var \left\{ \sum_{k \notin A_{ss} } \pik \frac{\skrij(1)}{\pk}   \right\}  \rightarrow 0, \quad   \var \left\{ \sum_{k \notin A_{ss} } \pik \frac{\skrij(0)}{1 - \pk}   \right\}  \rightarrow 0.
$$
Therefore, 
\begin{equation}
\label{eqn::b1-term}
\var \left[   \sum_{k \notin A_{ss} } \pik\Big\{\frac{\skrij(1)}{\pk}+\frac{\skrij(0)}{1-\pk}\Big\}  \right] \rightarrow 0.
\end{equation}
We decompose $ \hat \Sigma_{R,ij}  $ as follows:
\begin{eqnarray}
\hat \Sigma_{R,ij} & \triangleq & B_1 + B_2, \nonumber
\end{eqnarray}
where 
$$
B_1 =   \sum_{k \notin A_{ss} } \pik\Big\{\frac{\skrij(1)}{\pk}+\frac{\skrij(0)}{1-\pk}\Big\} 
$$
and
$$
B_2 = \hat \Sigma_{R,ij} - B_1  =  \left(  \frac{n_{ss} }{n} \right)^2  \sum_{k \in A_{ss} }  \frac{ n \nk^2 } { ( n_{ss} - 2 \nk  ) \Big( n_{ss} + \sum_{h \in A_{ss}} \frac{ n_{[h]}^2  }{ n_{ss} - 2  n_{[h]} }  \Big)  } ( \htrki -  \htrsi ) ( \htrkj -  \htrsj ).
$$


By \eqref{eqn::b1-term} and Chebyshev's inequality, we have
$$
B_1 - E(B_1) = o_p(1).
$$
Now it suffices for  $ \hat  \Sigma_{R,ij} - \tilde{\Sigma}_{R,ij} = o_p(1)$ to show that 
$$B_2 - E(B_2) = o_p(1).$$ 

Let
$$
\theta_k =  \frac{  \nk^2 } { ( n_{ss} - 2 \nk  ) \Big( n_{ss} + \sum_{h \in A_{ss}} \frac{ n_{[h]}^2  }{ n_{ss} - 2  n_{[h]} }  \Big)  }.
$$
According to Condition~\ref{cond::propensity}, the sizes of small strata, $\nk$'s, are uniformly upper bounded, i.e.,   $\nk \leq C$ for all $k \in A_{ss}$, 
where $C$ is a constant. Thus, the number of small strata has the same order as $n_{ss}$ and $\theta_k$ has the same order as $1/n_{ss}^2$ (we say that two sequences of real numbers $\{a_n\}_{n\ge1}$ and $\{b_n\}_{n\ge1}$ have the same order, 
if there exist constants $c_1,\ c_2>0$ such that $c_1a_n\le b_n\le c_2a_n$ for $n\ge1$). Recall that $\trsi = \sum_{k \in A_{ss}} (\nk / n_{ss} ) \trki  = E( \htrsi ) $. The term $B_2$ can be further divided into six terms:
\begin{eqnarray}
B_2 & = &   \left(  \frac{n_{ss} }{n} \right)^2  \sum_{k \in A_{ss} }  \frac{ n \nk^2 } { ( n_{ss} - 2 \nk  ) \Big( n_{ss} + \sum_{h \in A_{ss}} \frac{ n_{[h]}^2  }{ n_{ss} - 2  n_{[h]} }  \Big)  }  ( \htrki -  \htrsi )( \htrkj -  \htrsj ) \nonumber \\
& = &     \frac{n^2_{ss} }{n}   \sum_{k \in A_{ss} } \theta_k  ( \htrki - \trki + \trki- \trsi + \trsi -  \htrsi )\times  \nonumber \\
&&( \htrkj - \trkj + \trkj - \trsj + \trsj - \htrsj )\nonumber \\
& = &   \frac{n^2_{ss} }{n}   \sum_{k \in A_{ss} } \theta_k  ( \htrki - \trki  )( \htrkj - \trkj  ) +  \nonumber \\
& &   \frac{n^2_{ss} }{n}   \sum_{k \in A_{ss} } \theta_k  (\trki - \trsi )(\trkj - \trsj ) +  \nonumber \\
&&   \frac{n^2_{ss} }{n}   \sum_{k \in A_{ss} } \theta_k  ( \trsi -  \htrsi )( \trsj -  \htrsj ) +  \nonumber \\
&&  \frac{n^2_{ss} }{n}   \sum_{k \in A_{ss} } \theta_k \left[  ( \htrki - \trki ) ( \trkj - \trsj )  + ( \htrkj - \trkj ) ( \trki - \trsi ) \right] + \nonumber \\
&&  \frac{n^2_{ss} }{n}   \sum_{k \in A_{ss} } \theta_k  \left[ ( \htrki - \trki ) ( \trsj -  \htrsj ) + ( \htrkj - \trkj ) ( \trsi - \htrsi )\right]+\nonumber \\
&&   \frac{n^2_{ss} }{n}   \sum_{k \in A_{ss} } \theta_k  \left[ ( \trki- \trsi )  ( \trsj -  \htrsj ) +( \trkj- \trsj )  ( \trsi -  \htrsi )\right]\nonumber \\
&\triangleq& B_{21} + B_{22} + B_{23} + B_{24} + B_{25} + B_{26}. \nonumber
\end{eqnarray}
\begin{itemize}
\item For the first term $B_{21}$, we have
\begin{eqnarray}
\var( B_{21} ) & = & \frac{n_{ss}^4}{n^2}   \sum_{k \in A_{ss} } \theta_k^2  \var\{    ( \htrki - \trki  )( \htrkj - \trkj  )\} \nonumber \\
& \leq &  \frac{n_{ss}^4}{n^2}   \sum_{k \in A_{ss} } \theta_k^2  E\{ ( \htrki - \trki  )^2( \htrkj - \trkj  ) ^2 \}. \nonumber
\end{eqnarray}
Let $D_{nz} = \max_{k=1,\dots,K} \max_{\ik} \Vert \Ri(z) - \oRk(z)\Vert_\infty $ for $z=0,1$. Note that 
\begin{eqnarray}
\label{eqn::bound-hattau-tau}
| \htrkj - \trkj |  & = & \Big| \frac{1}{\nkt} \sumik Z_i \{  \Rij(1) - \oRkj(1) \} -  \frac{1}{\nkc} \sumik (1 - Z_i) \{\Rij(0) - \oRkj(0) \} \Big|  \nonumber \\
& \leq & \max_{i \in [k]} | \Rij(1) - \oRkj(1) | +  \max_{i \in [k]} |\Rij(0) - \oRkj(0)| \nonumber \\
& \leq & D_{n1}+D_{n0}. \nonumber
\end{eqnarray}
Then,
\begin{eqnarray}
\var( B_{21} ) & \leq &   \frac{n_{ss}^4}{n^2}   \sum_{k \in A_{ss} } \theta_k^2 (D_{n1} + D_{n0})^2 E( \htrki - \trki  )^2  \nonumber \\
& = & O \left[  \frac{(D_{n1} + D_{n0})^2 }{n^2}  \sum_{k \in A_{ss} }  \Big\{ \frac{\Skrii(1)}{\nkt} +  \frac{\Skrii(0)}{\nkc} - \frac{\Sktrii}{\nk}   \Big\} \right]  \nonumber \\
& = & O \left\{  \frac{ (D_{n1} + D_{n0})^2}{n}  \right\}  \nonumber \\
& \rightarrow & 0,  \nonumber 
\end{eqnarray}
where the convergence is because of Condition~\ref{cond::max} . Thus, $B_{21} - E( B_{21} ) = o_p(1)$.
\item For the second term $B_{22}$, we have $\var(B_{22}) = 0$ and $B_{22} - E( B_{22} ) = 0 = o_p(1)$.
\item For the third term $B_{23}$, we have
\begin{eqnarray}
\var( B_{23} ) & = & \frac{n_{ss}^4}{n^2}   \Big( \sum_{k \in A_{ss} } \theta_k \Big)^2  \var\{( \trsi - \htrsi )( \trsj - \htrsj )\} \nonumber \\
& \le &  O \left( \frac{n_{ss}^2}{n^2} \right)  \cdot  E\{( \trsi - \htrsi )^2( \trsj - \htrsj )^2 \} . \nonumber
\end{eqnarray}
Note that for any $j= 1,\cdots,d$,
\begin{eqnarray}
\label{eqn::bound-hattauss-tauss}
|  \trsj -  \htrsj  |  & = & \Big| \sum_{k \in A_{ss}} \frac{\nk}{n_{ss}} (\trkj - \htrkj) \Big|  \leq  \max_{k \in A_{ss}} |\trkj-\htrkj| \leq  D_{n1}+D_{n0}. \nonumber \\
\end{eqnarray}
Therefore,
\begin{eqnarray}
\var( B_{23} ) & = &  O \left\{ \frac{n_{ss}^2 (D_{n1} + D_{n0} )^2 }{n^2} \right\}  \cdot  E\{   ( \trsi - \htrsi  )^2 \} . \nonumber \\
& = &  O \left\{ \frac{n_{ss}^2 (D_{n1} + D_{n0} )^2 }{n^2} \right\}  \cdot O \left( \frac{1}{n_{ss}} \right) \nonumber \\
& = & O \left\{  \frac{(D_{n1} + D_{n0} )^2}{n}  \right\}  \nonumber \\
& \rightarrow & 0. \nonumber
\end{eqnarray}
Thus, $B_{23} - E( B_{23} ) = o_p(1)$.

\item For the fourth term $B_{24}$, because subscripts $i$ and $j$ are exchangeable, we only show the convergence of the first term $B_{241} =  \sum_{k \in A_{ss} } \theta_k  ( \htrki - \trki ) ( \trkj - \trsj ) $. We have
\begin{eqnarray}
\var( B_{241} ) & = &  \frac{n_{ss}^4}{n^2}   \sum_{k \in A_{ss} } \theta_k^2 (\trkj - \trsj )^2 \cdot \var (\htrki - \trki)   \nonumber \\
& = & O \left\{  \frac{( D_{n1} + D_{n0}  )^2}{n} \cdot  \frac{1}{n} \sum_{k \in A_{ss} } (\trkj - \trsj )^2 \right\} . \nonumber
\end{eqnarray}
By Cauchy-Schwartz inequality,
$$
\oRkj^2(z)=\Big\{\sumik\frac{1}{\nk}\Rij(z)\Big\}^2  \le \Big\{\sumik\left(\frac{1}{\nk}\right)^2\Big\}\cdot \Big\{\sumik \Rij^2(z)\Big\} = \frac{1}{\nk} \sumik \Rij^2(z).
$$
Then by Condition \ref{cond::ss},
$$\sum_{k \in A_{ss} } \oRkj^2(z) \le \skS\oRkj^2(z) \le \skS  \sumik \Rij^2(z) = O(n), $$
and
$$\sum_{k \in A_{ss} }\trkj^2 = \sum_{k \in A_{ss} }\{\oRkj(1)-\oRkj(0)\}^2 \le \sum_{k \in A_{ss} }2\{\oRkj^2(1)+\oRkj^2(0)\} = O(n).$$
Using Cauchy-Schwartz inequality again, we have 
$$\trsj^2 = \left(\sum_{k \in A_{ss} } \frac{\nk}{n_{ss}}\trkj\right)^2 \le \Big\{\sum_{k \in A_{ss}} \left(\frac{\nk}{n_{ss}}\right)^2\Big\}\cdot \Big(\sum_{k \in A_{ss}}\trkj^2\Big) = O \left( \frac{1}{n_{ss} }\sum_{k \in A_{ss}}\trkj^2 \right).$$
Therefore,
$$
\sum_{k \in A_{ss} } (\trkj - \trsj )^2 \le \sum_{k \in A_{ss} } 2( \trkj^2 + \trsj^2) =   O(n).
$$
Thus, $\var(B_{241})\rightarrow 0$ and $B_{24} - E( B_{24} ) = o_p(1)$.

\item For the fifth term, because subscripts $i$ and $j$ are also exchangeable, we only show the convergence of the first term
$$
B_{251} =  \frac{ n^2_{ss} }{n}   \sum_{k \in A_{ss} } \theta_k   ( \htrki - \trki )  \cdot ( \trsj - \htrsj ).
$$
By \eqref{eqn::bound-hattauss-tauss} we have  
$$
| B_{251} | \leq  \frac{D_{n1}+D_{n0} } {\sqrt{n}} \cdot  \Big|  \frac{n^2_{ss} }{ \sqrt{n} }   \sum_{k \in A_{ss} } \theta_k  ( \htrki - \trki ) \Big| .
$$
Therefore,
\begin{eqnarray}
\var( B_{251} ) & \leq  & E (B_{251})^2 \nonumber \\
& \leq & \frac{ (D_{n1}+D_{n0})^2 } { n } \cdot  \frac{ n^4_{ss} }{ n } \cdot E \left\{ \sum_{k \in A_{ss} } \theta_k   (\htrki - \trki) \right\}^2  \nonumber \\
& = &  \frac{ (D_{n1}+D_{n0})^2 } { n } \cdot  \frac{n^4_{ss} }{ n } \cdot \sum_{k \in A_{ss} } \theta_k^2  E   (\htrki - \trki)^2  \nonumber \\
& \le & O \left\{  \frac{ (D_{n1} + D_{n0})^2 }{n}  \right\}  \nonumber \\
& \rightarrow & 0,  \nonumber 
\end{eqnarray}
where the convergence is because of Condition \ref{cond::max}. Thus, $B_{25} - E( B_{25} ) = o_p(1)$.

\item For the last term $B_{26}$, similar to $B_{24}$ and $B_{25}$, we have
\begin{eqnarray}
\var( B_{261} ) & = & \frac{n_{ss}^4}{n^2}   \sum_{k \in A_{ss} } \theta_k^2  ( \trki - \trsi )^2 \cdot  \var( \trsj - \htrsj ) \nonumber \\
& = &  O \left\{ \frac{1}{n^2} \cdot \frac{1}{n_{ss}} \sum_{k \in A_{ss} }(\trki-\trsi)^2   \right\}
= O\left( \frac{1}{n^2}\right) \rightarrow  0. \nonumber
\end{eqnarray}
Thus, $B_{26} - E( B_{26} ) = o_p(1)$.
\end{itemize}
Combining the above arguments, we have $B_2 - E(B_2) = o_p(1)$.
\end{proof}




\subsubsection{Proof of Corollary \ref{cor1}}

\begin{proof}
As the covariates can be considered potential outcomes unaffected by the treatment assignment, we can apply Theorem \ref{thm1CLT} to $R_i(z)=(\Yi(z),\ \Xi^\T)^\T$. Conditions \ref{cond::max} and \ref{cond::covariance} can be deduced from Conditions \ref{cond4} and \ref{cond5}, and hence the corollary holds.
\end{proof}

\subsection{SRRoM}

\subsubsection{Proof of Proposition \ref{prop1pa}}

\begin{proof}
    According to Corollary \ref{cor1}, $n^{1/2}\htX\ \dot{\sim}\ \mathcal{N}(0,\Sigma_{xx})$. Then the asymptotic distribution of the Mahalanobis distance is
	\begin{equation*}
		M_{\htX}=(n^{1/2}\htX)^\T \Sigma_{xx}^{-1}(n^{1/2}\htX)\  \dot{\sim}\ \chi^2_{\p}.
	\end{equation*}
    Therefore, the probability of a random assignment being accepted is 
	\[ \pr(M_{\htX}<a)\to p_a=\pr(\chi^2_{\p}<a) \] as $n$ tends to infinity.
\end{proof}

To prove Theorem \ref{thm2}, we need the following Lemma which directly extends the results of \citet{Li2018} from complete rerandomization to stratified rerandomization.  Let $\phi(\eta, A) : \mathbb{R}^{\p}\times \mathbb{R}^{\p\times\p}\to\{0,1\}$ be an indicator function of covariate balance under the criterion $\eta^\T A^{-1}\eta<a$. Then under SRRoM, $\Mtau\iff\phi(n^{1/2}\htX,\Sigma_{xx})=1$.

\begin{lemma}
\label{corA1}
	Under SRRoM, \[n^{1/2}(\hat\tau-\tau, \htX^\T)^\T\mid\Mtau\ \dot\sim\  (A,B^\T)^\T\mid\phi(B,\Sigma_{xx})=1,\]
	where $(A,B^\T)^\T\sim\mathcal{N}(0,\Sigma)$.
\end{lemma}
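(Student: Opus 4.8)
The plan is to treat this as a standard \emph{conditioning on an asymptotically continuous event} statement and reduce it to the unconditional convergence already supplied by Corollary \ref{cor1}. Write $V_n = n^{1/2}(\hat\tau - \tau, \htX^\T)^\T$ with lower block $B_n = n^{1/2}\htX$, so that Corollary \ref{cor1} gives $V_n \xrightarrow{d} V := (A, B^\T)^\T \sim \mathcal{N}(0,\Sigma)$, where $\Sigma$ is read as its limit $\Sigma^\infty$, which is positive definite by Condition \ref{cond5}. The essential observation is that the acceptance event is a \emph{fixed} Borel region in the sample space of the whole vector: $\Mtau = \{\phi(B_n,\Sigma_{xx})=1\} = \{V_n \in \tilde C\}$, where $\tilde C = \{(u,w^\T)^\T \in \mathbb{R}\times\mathbb{R}^\p : w^\T \Sigma_{xx}^{-1} w < a\}$. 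Thus both the conditioning event and its limiting counterpart $\{V\in\tilde C\}$ are governed by the same set $\tilde C$, which is what makes the reduction clean.

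First I would record the general tool: if $V_n \xrightarrow{d} V$ and $\tilde C$ is a Borel set with $\pr(V \in \partial\tilde C)=0$ and $\pr(V\in\tilde C)>0$, then the conditional laws converge, $\mathcal{L}(V_n \mid V_n\in\tilde C)\xrightarrow{d}\mathcal{L}(V\mid V\in\tilde C)$. The justification is a Portmanteau argument: for bounded continuous $g$, write $E[g(V_n)\mid V_n\in\tilde C] = E[g(V_n)\mathbf{1}_{\tilde C}(V_n)]/\pr(V_n\in\tilde C)$; the denominator tends to $\pr(V\in\tilde C)$ because $\partial\tilde C$ is a $V$-null continuity set, and the numerator tends to $E[g(V)\mathbf{1}_{\tilde C}(V)]$ by the mapping theorem for almost-surely continuous functionals, since $g\cdot\mathbf{1}_{\tilde C}$ is bounded with discontinuity set contained in $\partial\tilde C$.

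Second I would verify the two hypotheses for our $\tilde C$. Its boundary satisfies $\partial\tilde C \subseteq \{(u,w^\T)^\T : w^\T\Sigma_{xx}^{-1}w = a\}$, a Lebesgue-null surface in $\mathbb{R}^{1+\p}$; since $V$ is an absolutely continuous Gaussian vector (because $\Sigma^\infty$ is positive definite), $\pr(V\in\partial\tilde C)=0$. Positivity of $\pr(V\in\tilde C)$ is exactly Proposition \ref{prop1pa}: applying the continuous mapping theorem to $B_n^\T\Sigma_{xx}^{-1}B_n$ yields the limiting $\chi^2_\p$ law, so $\pr(V\in\tilde C)=\pr(\chi^2_\p<a)=p_a>0$ for every $a>0$. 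Combining these with the general tool gives $\mathcal{L}(V_n\mid\Mtau)\xrightarrow{d}\mathcal{L}(V\mid V\in\tilde C)$, which is precisely the asserted $\dot\sim$ equivalence with $(A,B^\T)^\T\mid\phi(B,\Sigma_{xx})=1$.

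The main obstacle I anticipate is the measure-theoretic bookkeeping around the indicator $\mathbf{1}_{\tilde C}$: because the conditioning event is itself a functional of the converging sequence, one cannot invoke the ordinary continuous mapping theorem, and must instead justify convergence of $E[g(V_n)\mathbf{1}_{\tilde C}(V_n)]$ through the almost-sure-continuity version of the mapping theorem, which is exactly where $\pr(V\in\partial\tilde C)=0$ enters. A secondary technical point is that the quadratic form defining $\tilde C$ uses the finite-$n$ design matrix $\Sigma_{xx}$ rather than its limit; since $\Sigma_{xx}\to\Sigma_{xx}^\infty$ under Condition \ref{cond5}, this replacement is asymptotically negligible by a Slutsky argument and does not affect the limiting conditional law.
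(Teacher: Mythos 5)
Your argument is correct. The paper itself does not prove this lemma from first principles: it observes that the acceptance criterion $\phi(n^{1/2}\htX,\Sigma_{xx})$ satisfies Condition A1 of \citet{Li2018} and then invokes their Corollary A1, which is precisely the ``conditioning on an asymptotically continuous acceptance event'' result you establish by hand. So the mathematical content is the same, but you have supplied the underlying Portmanteau argument --- convergence of $E[g(V_n)\mathbf{1}_{\tilde C}(V_n)]$ via the almost-sure-continuity version of the mapping theorem, plus positivity and continuity of the limiting acceptance probability --- that the paper delegates to the cited reference. What your route buys is self-containedness and transparency about exactly which hypotheses are doing the work ($\pr(V\in\partial\tilde C)=0$ from nondegeneracy of $\Sigma^\infty$, and $p_a>0$); what the citation buys is brevity and the fact that Li, Ding and Rubin's Corollary A1 is stated for a general class of balance criteria, so the paper only needs to check membership in that class. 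One point in your favor: you explicitly address the $n$-dependence of $\Sigma_{xx}$ in the definition of the acceptance region and dispose of it by a Slutsky argument, a detail the paper's one-line citation leaves implicit (it is consistent with the paper's convention that $\dot\sim$ compares two \emph{sequences} of laws converging to a common limit). No gaps.
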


\begin{proof}
    As $\Mtau$ represents the event that $M_{\htX}=(n^{\frac{1}{2}}\htX)^\T \Sigma_{xx}^{-1}(n^{\frac{1}{2}}\htX)<a$, the covariate balance criterion $\phi(n^{1/2}\htX,\Sigma_{xx})$ satisfies the Condition $\textnormal{A1}$ proposed in \citet{Li2018}. According to Corollary A1 of \citet{Li2018}, this lemma holds.
\end{proof}

\subsubsection{Proof of Theorem \ref{thm2}}
\begin{proof}
    Let $(A,B^\T)^\T\sim\mathcal{N}(0,\Sigma)$ be the same as that in Lemma \ref{corA1}. 
As the linear projection of $A$ on $B$ is $\Sigma_{\tau x}\Sigma_{xx}^{-1}B$, whose variance is $c^2=\Sigma_{\tau x}\Sigma_{xx}^{-1}\Sigma_{x\tau}=\Sigma_{\tau\tau}R^2$, and the projection residual is $\epsilon=A-\Sigma_{\tau x}\Sigma_{xx}^{-1}B\sim\mathcal{N}(0,(1-R^2)\Sigma_{\tau\tau})\sim\{\Sigma_{\tau\tau}(1-R^2)\}^{1/2}\epsilon_0$, we have 
$$
A=\epsilon+\Sigma_{\tau x}\Sigma_{xx}^{-1}B=\epsilon+\Sigma_{\tau x}\Sigma_{xx}^{-1/2}D=\epsilon+ch^\T D,
$$
where $h^\T = \Sigma_{\tau x} \Sigma_{xx}^{-1/2} /c$ is the normalized vector of $\Sigma_{\tau x}\Sigma_{xx}^{-1/2}$ and $D = \Sigma_{xx}^{-1/2} B \sim N(0,I)$. Because $\phi(B,\Sigma_{xx})=1$ if and only if $B^\T \Sigma_{xx}^{-1}B=D^\T D<a$, then according to Lemma \ref{lemA1}, 
\[A\mid\phi(B,\Sigma_{xx})=1 \  \sim\ \epsilon+ch^\T D\mid D^\T D\le a\ \sim\ \epsilon+cL_{\p,a}\ \sim \  \Sigma_{\tau\tau}^{\frac{1}{2}}\big\{(1-R^2)^{\frac{1}{2}}\epsilon_0+(R^2)^{\frac{1}{2}}L_{\p,a}\big\}.\]
Combining the above result with Lemma \ref{corA1}, this theorem holds.

\end{proof}

\subsubsection{Proof of Corollary \ref{cor3}}
\begin{proof}
    According to Theorem \ref{thm2}, the asymptotic expectation of $\ntau$ is the limit of \[\Sigma_{\tau\tau}^{\frac{1}{2}}\big\{(1-R^2)^{\frac{1}{2}} E(\epsilon_0 ) + (R^2)^{\frac{1}{2}} E( L_{\p,a} ) \big\}=0.\] Thus, $\hat\tau$ is an asymptotically unbiased estimator of $\tau$.
    
    The asymptotic variance of $\ntau$ is the limit of \[\Sigma_{\tau\tau}\{(1-R^2) \var( \epsilon_0 )+ R^2  \var( L_{\p,a} ) \} =  \Sigma_{\tau\tau}\{(1-R^2)+ R^2 v_{\p,a}\}= \Sigma_{\tau\tau}\big\{1-(1-v_{\p,a})R^2\big\}.\]
    According to Proposition \ref{prop::covariance}, the asymptotic variance of $\ntau$ under stratified randomization is $\Sigma_{\tau\tau}$. 
    Thus, compared to stratified randomization, the percentage of reduction in asymptotic variance is the limit of \[\big[\Sigma_{\tau\tau}-\Sigma_{\tau\tau}\big\{1-(1-v_{\p,a})R^2\big\}\big]/\Sigma_{\tau\tau}=(1-v_{\p,a})R^2.\]

\end{proof}

\subsubsection{Proof of Proposition \ref{prop::gamma}}

\begin{proof}
Proposition~\ref{prop::gamma} follows directly from Lemma~\ref{lemsAB}. 
\end{proof}



\subsubsection{Proof of Corollary \ref{thm3}}
\begin{proof}
    According to Theorem \ref{thm2}, the asymptotic distribution of $\ntau$ under SRRoM has the same form as that in Theorem $\textnormal{1}$ of \citet{Li2018}. Therefore, the corollary follows from Theorem $\textnormal{2}$ of \citet{Li2018}.

\end{proof}

\subsubsection{Proof of Theorem \ref{thmvaro}}

Before proving Theorem \ref{thmvaro}, we present a useful lemma. Let 
$$
\tilde{\Sigma}_{\tau\tau}= \Sigma_{\tau \tau} + \sum_{k \notin A_{ss}} \pik\Skt + \frac{n_{ss}^2}{n} \sum_{k \in A_{ss} }  \frac{  \nk^2 } { ( n_{ss} - 2 \nk  ) \Big( n_{ss} + \sum_{h \in A_{ss}} \frac{ n_{[h]}^2  }{ n_{ss} - 2  n_{[h]} }  \Big)  }  ( \tau_{[k]} -   \tau_{ss} )^2 \ge \Sigma_{\tau\tau} .
$$
\begin{lemma}
\label{lem::conservative-sb}
Under the conditions of Theorem \ref{thmvaro}, 
$$
E( \hat  \Sigma_{\tau\tau} ) = \tilde{\Sigma}_{\tau\tau} \quad \textnormal{and} \quad  \hat  \Sigma_{\tau\tau} - \tilde{\Sigma}_{\tau\tau} = o_p(1).
$$ 
\end{lemma}


Lemma~\ref{lem::conservative-sb} is a direct result of Theorem~\ref{thm:cov-est}, so we omit its proof. Now, based on Lemmas~\ref{lemsAB} and \ref{lem::conservative-sb}, we can prove Theorem \ref{thmvaro}.

\begin{proof}[Proof of Theorem \ref{thmvaro}]

According to Lemma \ref{lemsAB}, $ \hat\Sigma_{x\tau}$ is a consistent estimator of $\Sigma_{x\tau}=\Sigma_{\tau x}^\T $. According to Lemma \ref{lem::conservative-sb}, $\hat  \Sigma_{\tau\tau}$ is a consistent estimator of $\tilde \Sigma_{\tau\tau}$.
As $\tilde{\Sigma}^{\infty}_{\tau\tau}\ge \Sigma^{\infty}_{\tau\tau}$, where we use the superscript $\infty$ to denote the limit value as $n$ tends to infinity, $\hat \Sigma_{\tau\tau}$ is an asymptotically conservative estimator of $\Sigma_{\tau\tau}$. 
Since $\hat{R}^2=\hat{\Sigma}_{\tau x}\Sigma_{xx}^{-1}\hat \Sigma_{x\tau}/\hat \Sigma_{\tau\tau}$, then by the consistency of $\hat{\Sigma}_{\tau x}$,
$$\hat \Sigma_{\tau\tau}\hat{R}^2-\Sigma_{\tau\tau}R^2 = \hat \Sigma_{\tau x} \Sigma_{xx}^{-1} \hat \Sigma_{x \tau} -   \Sigma_{\tau x} \Sigma_{xx}^{-1} \Sigma_{x \tau}=o_p(1). $$
Now we have
\begin{equation*}
	\hat \Sigma_{\tau\tau}-(1-v_{\p,a})\hat \Sigma_{\tau\tau}\hat{R}^2-\big\{\tilde{\Sigma}_{\tau\tau}-(1-v_{\p,a})\Sigma_{\tau\tau}R^2\big\}=o_p(1).
\end{equation*}

Thus, the probability limit of $\hat \Sigma_{\tau\tau}\{1-(1-v_{\p,a})\hat{R}^2\}$ is larger than or equal to that of $\Sigma_{\tau\tau}\{1-(1-v_{\p,a})R^2\}$, which, according to Corollary \ref{cor3}, is the asymptotic variance of $\ntau$ under SRRoM. 

By continuous mapping theorem, 
\begin{equation}
\label{eqn::limits}
 \hat \Sigma_{\tau\tau}^{\frac{1}{2}}\big\{(1-\hat R^2)^{\frac{1}{2}}\epsilon_0+(\hat R^2)^{\frac{1}{2}}L_{\p,a}\big\} \to \big\{ \tilde \Sigma^{\infty}_{\tau\tau}-\Sigma^{\infty}_{\tau\tau} (R^\infty)^2 \big\}^{\frac{1}{2}}\epsilon_0+ \big\{\Sigma^{\infty}_{\tau\tau} (R^\infty)^2\big\}^{\frac{1}{2}} L_{\p,a}
\end{equation}
in distribution. 
According to Lemma \ref{lemA7}, we have that the $\CI$ quantile range of 
\[\{ \tilde \Sigma^{\infty}_{\tau\tau}-\Sigma^{\infty}_{\tau\tau} (R^\infty)^2 \}^{ 1/2 }\epsilon_0+ \{\Sigma^{\infty}_{\tau\tau} (R^\infty)^2\}^{ 1/2 }L_{\p,a}\]
includes that of $ \{ \Sigma^{\infty}_{\tau\tau}-\Sigma^{\infty}_{\tau\tau} (R^\infty)^2 \}^{ 1/2 }\epsilon_0+ \{\Sigma^{\infty}_{\tau\tau} (R^\infty)^2 \}^{ 1/2 }L_{\p,a} $, which is identically distributed with the asymptotic distribution of $\ntau$ under SRRoM. Therefore, the limit of the coverage probability of the  confidence interval for $\tau$: 
\begin{equation*}
	\big[\hat\tau-(\hat \Sigma_{\tau\tau}/n)^{\frac{1}{2}}\nu_{1-\alpha/2}(\hat R^2,p_a,p),\ \hat\tau-(\hat \Sigma_{\tau\tau}/n)^{\frac{1}{2}} \nu_{\alpha/2}(\hat R^2,p_a,p)\big]
\end{equation*}
is no less than $\CI$.

\end{proof}

\subsection{SRRsM}

\subsubsection{Proof of Proposition \ref{prop3pa}}
\begin{proof}
    Because the asymptotic probability of accepting a random assignment for stratum $k$ is $p_{a_k}=\pr(\chi^2_{\p}<a_k)$ \citep{Li2018} and the rerandomization is carried out separately and independently across strata, the asymptotic probability of accepting an assignment for all units is $\prod_{k=1}^\S p_{a_k}$.
\end{proof}

\subsubsection{Proof of Theorem \ref{thm5}}
\begin{proof}
    The random variable $\ntau$ can be decomposed as \[n^{\frac{1}{2}}(\hat\tau-\tau) = \skS \pik^{\frac{1}{2}}\nk^{\frac{1}{2}}(\htk-\tauk).\] 
    Applying Theorem $\textnormal{1}$ of \citet{Li2018} (or Theorem \ref{thm2} with $\S = 1$ in the main text)  within each stratum $k$, we have
    \[  \nk^{\frac{1}{2}}(\htk-\tauk)  \mid \mathcal{M}_{[k]} \ \dot{\sim} \  \Vktt^{\frac{1}{2}}\big\{(1-\Rk)^{\frac{1}{2}}\epsilon_0^k+(\Rk)^{\frac{1}{2}}L_{p,a_k}^k\big\}, \quad k = 1,\dots,K, \]
    where $\mathcal{M}_{[k]} = \{ (Z_i)_{i \in [k] } : \ \Mk < a_k \}$ denotes  the event that an assignment within stratum $k$ is accepted by SRRsM. Since the rerandomization is conducted independently across strata, then
    \[ \begin{split}
    n^{\frac{1}{2}}(\hat\tau-\tau)\mid\MI\ \dot{\sim}\
    &\skS \pik^{\frac{1}{2}}\Vktt^{\frac{1}{2}}\Big\{(1-\Rk)^{\frac{1}{2}}\epsilon_0^k+(\Rk)^{\frac{1}{2}}L_{p,a_k}^k\Big\} \\
    \sim \ &\Big\{\skS\pik\Vktt(1-\Rk)\Big\}^{\frac{1}{2}}\epsilon_0+\skS(\pik \Vktt \Rk)^{\frac{1}{2}}L_{p,a_k}^k, 
    \end{split} \]
    where $\epsilon_0,\ \epsilon_0^k\sim\mathcal{N}(0,1)$ and $L_{p,a_k}^k \sim L_{p,a_k},\ k=1,\ldots,\S$ are mutually independent.

\end{proof}

\subsubsection{Proof of Corollary \ref{cor4}}
\begin{proof}
    According to Theorem \ref{thm5}, the asymptotic expectation of $\ntau$ is \[\Big\{\skS\pik\Vktt(1-\Rk)\Big\}^{\frac{1}{2}} E( \epsilon_0 ) + \skS(\pik \Vktt \Rk)^{\frac{1}{2}} E ( L_{p,a_k}^k  ) =0,\] hence $\hat\tau$ is asymptotically unbiased.
    
    The asymptotic variance of $\ntau$ is the limit of \[\skS\pik\Vktt(1-\Rk)+ \skS\pik\Vktt\Rk v_{\p,a_k} =\skS\pik\Vktt\big\{1-(1-v_{\p,a_k})\Rk\big\},\] thus the reduction in asymptotic variance compared to stratified randomization is the limit of $\skS\pik\Vktt(1-v_{\p,a_k})\Rk / \Sigma_{\tau\tau}$.

\end{proof}

\subsubsection{Proof of Theorem \ref{prop3}}
\begin{proof}
When $a_1=\cdots=a_\S=a$, as $v_{\p,a_k} = v_{\p,a}$ and $ \Sigma_{\tau\tau} = \skS\pik\Vktt $, then according to Corollaries \ref{cor3} and \ref{cor4}, the difference between the asymptotic variances of $\ntau$ under SRRsM and SRRoM is the limit of  
\begin{equation}
\label{eqprop4}
\begin{split}
	&\skS\pik\Vktt\big\{1-(1-v_{\p,a_k})\Rk\big\} - \Sigma_{\tau\tau}\big\{1-(1-v_{\p,a})R^2\big\} \\ 
	=&\skS\pik \Vktt(1-v_{\p,a})(R^2-\Rk) \\
	= & (1-v_{\p,a})\Big[\Sigma_{\tau x}\Sigma_{xx}^{-1}\Sigma_{x\tau}-\skS\pik \Vktx \Vkxx^{-1}\Vkxt\Big].
\end{split}
\end{equation}
Let \[b=\big((\pi_{[1]}^{1/2}\Sigma_{[1]xx}^{-1/2}\Sigma_{[1]x\tau})^\T,\ldots,(\pi_{[\S]}^{1/2}\Sigma_{[\S]xx}^{-1/2}\Sigma_{[\S]x\tau})^\T\big)^\T\] and \[d=\big((\pi_{[1]}^{1/2}\Sigma_{[1]xx}^{1/2}\Sigma_{xx}^{-1}\Sigma_{x\tau})^\T,\ldots,(\pi_{[\S]}^{1/2}\Sigma_{[\S]xx}^{1/2}\Sigma_{xx}^{-1}\Sigma_{x\tau})^\T\big)^\T\] be two $(\S\p)\times1$ vectors. By Cauchy-Schwarz inequality and 
$$\Sigma_{xx} = \skS \pik \Sigma_{[k]xx}, \quad  \Sigma_{x \tau} = \skS \pik \Sigma_{[k]x \tau}, $$ 
we have \[ \begin{split}
\bigg(\skS\pik \Vktx \Vkxx^{-1}\Vkxt\bigg) \big(\Sigma_{\tau x}\Sigma_{xx}^{-1}\Sigma_{x\tau}\big)=(b^\T b)(d^\T d)
\ge(b^\T d)^2=\big(\Sigma_{\tau x}\Sigma_{xx}^{-1}\Sigma_{x\tau}\big)^2,
\end{split}\]
where the equality holds if and only if $b=\lambda d$ for some $\lambda$. Since $\Sigma_{\tau x}\Sigma_{xx}^{-1}\Sigma_{x\tau} \geq 0$, then (\ref{eqprop4}) is smaller than or equal to 0. When $b=\lambda d$ for some $\lambda$, $\Vkxt= \lambda \Vkxx\Sigma_{xx}^{-1}\Sigma_{x\tau},\ k=1,\ldots,\S$, hence
\[ \Sigma_{x\tau}= \lambda \skS\pik\Vkxx\Sigma_{xx}^{-1}\Sigma_{x\tau} = \lambda\Sigma_{xx}\Sigma_{xx}^{-1}\Sigma_{x\tau} = \lambda\Sigma_{x\tau}.
\]
Then $\lambda = 1$. Therefore, the equality holds if and only if $\Vkxx^{-1}\Vkxt=\Sigma_{xx}^{-1}\Sigma_{x\tau}$ for $k=1,\ldots,\S$.

Finally, because $v_{\p,a}\to 0$ as $a\to 0$, the same conclusion holds when all the thresholds tend to 0.

\end{proof}

\subsubsection{Proof of Corollary \ref{thm6}}
\begin{proof}
    Let $f(x),\ f_{(1)}(x),$ and $f_1(x)$ be the probability density functions of random variables 
	\begin{equation*}
	\begin{split}
		\skS(\pik \Vktt)^{\frac{1}{2}} \big\{(1-\Rk)^{\frac{1}{2}}\epsilon_0^k+(\Rk)^{\frac{1}{2}}L_{p,a_k}^k \big\},\\
		\sum_{k=2}^\S(\pik \Vktt)^{\frac{1}{2}} \big\{(1-\Rk)^{\frac{1}{2}}\epsilon_0^k+(\Rk)^{\frac{1}{2}}L_{p,a_k}^k \big\},
		\end{split}
	\end{equation*}
	and 
	$$(\pi_{[1]}V_{[1]\tau\tau})^{\frac{1}{2}} \big\{(1-R_{[1]}^2)^{\frac{1}{2}}\epsilon_0^1+(R_{[1]}^2)^{\frac{1}{2}}L_{\p,a_1}^1 \big\},$$
	respectively. For notation simplicity, denote $\P(R_{[1]}^2,c)$ as the probability 
	\[\pr\big[(\pi_{[1]}V_{[1]\tau\tau})^{\frac{1}{2}} \big\{(1-R_{[1]}^2)^{\frac{1}{2}}\epsilon_0^1+(R_{[1]}^2)^{\frac{1}{2}}L_{\p,a_1}^1\big\}\ge c\big],\]
	which is a function of $R_{[1]}^2$ and $c$.
	Then
	\begin{equation}
	\label{eqpr}
	\begin{split}
	&\pr\Big[\Big\{\skS\pik\Vktt(1-\Rk)\Big\}^{\frac{1}{2}}\epsilon_0+\skS(\pik \Vktt \Rk)^{\frac{1}{2}}L_{p,a_k}^k\ge c\Big]\\
	=&\int_{c}^{+\infty}f(x)\dd x = \int_{c}^{+\infty}\int_{-\infty}^{+\infty}f_1(x-y)f_{(1)}(y)\dd y\dd x \\
	=&\int_{-\infty}^{+\infty}f_{(1)}(y)\dd y\int_{c}^{+\infty}f_1(x-y)\dd x \\
	=&\int_{-\infty}^{+\infty}f_{(1)}(y)\P(R_{[1]}^2,c-y)\dd y.
	\end{split}
	\end{equation}
	Since $\epsilon_0^k$ and $L^k_{\p,a}$ are symmetric and unimodal around $0$, $f_{(1)}(x)$ and $f_1(x)$ are also symmetric and unimodal around $0$. By Lemma \ref{lemA4}, $\P(R_{[1]}^2,y)-\P(\tilde R_{[1]}^2,y)\ge0$ when $y\ge0$ for $0\le R_{[1]}^2\le\tilde{R}_{[1]}^2\le1$. Then when $y\ge 0$,
	\begin{equation*}
	\begin{split}
	\P(R_{[1]}^2,-y)-\P(\tilde R_{[1]}^2,-y)
	=&\big\{1-\P(R_{[1]}^2,y)\big\} - \big\{1-\P(\tilde R_{[1]}^2,y)\big\} \\
	=&\P(\tilde R_{[1]}^2,y)-\P(R_{[1]}^2,y)\le0.
	\end{split}
	\end{equation*} 
	Thus, 
	\begin{equation*}
		\begin{split}
		&\pr\Big[\Big\{\skS\pik\Vktt(1-\Rk)\Big\}^{\frac{1}{2}}\epsilon_0+\skS(\pik \Vktt \Rk)^{\frac{1}{2}}L_{p,a_k}^k\ge c\Big]\\
		-&\pr\Big[\Big\{\skS\pik\Vktt(1-\tRk)\Big\}^{\frac{1}{2}}\epsilon_0+\skS(\pik \Vktt\tRk)^{\frac{1}{2}}L_{p,a_k}^k\ge c\Big] \\
		=&\int_{-\infty}^{+\infty}f_{(1)}(y)\big\{\P(R_{[1]}^2,c-y)-\P(\tilde R_{[1]}^2,c-y)\big\}\dd y \\
		=&\int_{-\infty}^{+\infty}f_{(1)}(y-c)\big\{\P(R_{[1]}^2,y)-\P(\tilde R_{[1]}^2,y)\big\}\dd y \\
		=&\int_{-\infty}^{0}f_{(1)}(y-c)\big\{\P(R_{[1]}^2,y)-\P(\tilde R_{[1]}^2,y)\big\}\dd y + \int_{0}^{+\infty}f_{(1)}(y-c)\big\{\P(R_{[1]}^2,y)-\P(\tilde R_{[1]}^2,y)\big\}\dd y \\
		=&\int_{-\infty}^{0}f_{(1)}(y-c)\big\{\P(R_{[1]}^2,y)-\P(\tilde R_{[1]}^2,y)\big\}\dd y - \int_{-\infty}^{0}f_{(1)}(y+c)\big\{\P(R_{[1]}^2,y)-\P(\tilde R_{[1]}^2,y)\big\}\dd y \\
		=&\int_{-\infty}^{0}\big\{f_{(1)}(y-c)-f_{(1)}(y+c)\big\}\big\{\P(R_{[1]}^2,y)-\P(\tilde R_{[1]}^2,y)\big\}\dd y \ge0.
		\end{split}
	\end{equation*}
	Hence the probability in (\ref{eqpr}) is a nonincreasing function of $R_{[1]}^2$. Similarly, the conclusion holds  for $R_{[2]}^2,\ldots,R_{[\S]}^2$. Therefore, the quantile $q_{1-\alpha/2}(R_{[1]}^2,\ldots,R_{[\S]}^2,p_{a_1},\ldots,p_{a_\S},p)$ is a nonincreasing function of $\Rk\ (k=1,\ldots,\S)$ with $R_{[l]}^2$'s $(l\neq k)$, $p_{a_k}$'s, and $\p$ being fixed.
	
By Lemma \ref{lemA5}, for $0\le \pa\le\tpa\le1$ and $c\ge0$, \[\pr(L_{\p,F_\p^{-1}(\pa)}\ge c)\le\pr(L_{\p,F_\p^{-1}(\tpa)}\ge c),\] then by Lemma \ref{lemA7}, \[\begin{split}
	&\pr\Big[\Big\{\skS\pik\Vktt(1-\Rk)\Big\}^{\frac{1}{2}}\epsilon_0+\sum_{k=2}^\S(\pik \Vktt \Rk)^{\frac{1}{2}}L_{p,a_k}^k \\
	&\ \ \ \ +(\pi_{[1]}V_{[1]\tau\tau}R_{[1]}^2)^{\frac{1}{2}}L_{\p,F_{\p}^{-1}(\pa)}^1\ge c\Big] \\
	\le\ &\pr\Big[\Big\{\skS\pik\Vktt(1-\Rk)\Big\}^{\frac{1}{2}}\epsilon_0+\sum_{k=2}^\S(\pik \Vktt \Rk)^{\frac{1}{2}}L_{p,a_k}^k \\
	&\ \ \ \ +(\pi_{[1]}V_{[1]\tau\tau}R_{[1]}^2)^{\frac{1}{2}}L_{\p,F_{\p}^{-1}(\tpa)}^1\ge c\Big].
	\end{split}\]
	Hence the quantile $q_{1-\alpha/2}(R_{[1]}^2,\ldots,R_{[\S]}^2,p_{a_1},\ldots,p_{a_\S},p)$ is a nondecreasing function of $p_{a_k}\ (k=1,\ldots,\S)$ with $p_{a_l}$'s $(l\neq k)$, $R_{[l]}^2$'s, and $\p$ being fixed.
	
	By Lemma \ref{lemA6}, for $\tilde\p\ge\p\ge1$ and $c\ge0$, \[\pr(L_{\p,F_{\p}^{-1}(\pa)}\ge c)\le\pr(L_{\tilde\p,F_{\tilde\p}^{-1}(\pa)}\ge c),\] then by Lemma \ref{lemA7}, 
	\[\begin{split}
	&\pr\Big[\Big\{\skS\pik\Vktt(1-\Rk)\Big\}^{\frac{1}{2}}\epsilon_0+\skS(\pik \Vktt \Rk)^{\frac{1}{2}}L_{\p,F_{\p}^{-1}(p_{a_k})}^k\ge c\Big] \\
	\le\ &\pr\Big[\Big\{\skS\pik\Vktt(1-\Rk)\Big\}^{\frac{1}{2}}\epsilon_0+\sum_{k=2}^\S(\pik\Vktt\Rk)^{\frac{1}{2}}L_{\p,F_{\p}^{-1}(p_{a_k})}^k \\
	&\ \ \ \ +(\pi_{[1]}V_{[1]\tau\tau}R_{[1]}^2)^{\frac{1}{2}}L_{\tilde\p,F_{\tilde\p}^{-1}(p_{a_1})}^1\ge c\Big] \\
	\le\ &\pr\Big[\Big\{\skS\pik\Vktt(1-\Rk)\Big\}^{\frac{1}{2}}\epsilon_0+\sum_{k=3}^\S(\pik \Vktt \Rk)^{\frac{1}{2}}L_{\p,F_{\p}^{-1}(p_{a_k})}^k \\
	&\ \ \ \ +\sum_{k=1}^2(\pik \Vktt \Rk)^{\frac{1}{2}}L_{\tilde\p,F_{\tilde\p}^{-1}(p_{a_k})}^k\ge c\Big] \\
	\le\ &\cdots\le\ \pr\Big[\Big\{\skS\pik\Vktt(1-\Rk)\Big\}^{\frac{1}{2}}\epsilon_0+\skS(\pik\Vktt\Rk)^{\frac{1}{2}}L_{\tilde\p,F_{\tilde\p}^{-1}(p_{a_k})}^k\ge c\Big].
	\end{split}\]
	Hence the quantile $q_{1-\alpha/2}(R_{[1]}^2,\ldots,R_{[\S]}^2,p_{a_1},\ldots,p_{a_\S},\p)$ is a nondecreasing function of $\p$ with $p_{a_k}$'s and $\Rk$'s being fixed.

\end{proof}

\subsubsection{Proof of Theorem \ref{thmvars}}
\begin{proof}
The asymptotic conservativeness of the estimator for the variance of $\ntau$ under SRRsM is because of the conservativeness of the variance estimators in each stratum $k$ \citep{Li2018}. Let $\tVktt=\Vktt+\Skt-S_{[k]\tau|x}^2\ge \Vktt$. Similar to \eqref{eqn::limits}, by continuous mapping theorem we have
\[\begin{split} &\Big\{\skS\pik\hVktt(1-\hRk)\Big\}^{\frac{1}{2}}\epsilon_0+\skS(\pik\hVktt\hRk)^{\frac{1}{2}}L_{p,a_k}^k \\
\to &\Big[\skS\pik \big\{ \tilde\Sigma^{\infty}_{[k]\tau\tau}-\Sigma^{\infty}_{[k]\tau\tau}(R_{[k]}^{\infty})^2 \big\} \Big]^{\frac{1}{2}}\epsilon_0+\skS \big\{ \pik \Sigma^{\infty}_{[k]\tau\tau} R_{[k]}^{\infty})^2 \big\}^{\frac{1}{2}}L_{p,a_k}^k
\end{split}\] in distribution as $n\to\infty$. By Lemma \ref{lemA7},
\[\begin{split}
&\pr\bigg(   \Big[\skS\pik \big\{ \tilde\Sigma^{\infty}_{[k]\tau\tau}-\Sigma^{\infty}_{[k]\tau\tau}(R_{[k]}^{\infty})^2 \big\} \Big]^{\frac{1}{2}}\epsilon_0 + \big\{ \pi_{[1]}V_{[1]\tau\tau}^{\infty}(R_{[1]}^{\infty})^2 \big\}^{\frac{1}{2}}L_{\p,a_1}^1\ge c\bigg) \\
\ge & \pr\bigg(   \Big[\skS\pik \big\{ \Sigma^{\infty}_{[k]\tau\tau}-\Sigma^{\infty}_{[k]\tau\tau}(R_{[k]}^{\infty})^2 \big\} \Big]^{\frac{1}{2}}\epsilon_0 + \big\{ \pi_{[1]}V_{[1]\tau\tau}^{\infty}(R_{[1]}^{\infty})^2 \big\}^{\frac{1}{2}}L_{\p,a_1}^1\ge c\bigg), \\
\end{split}\] for any $c>0$. Applying Lemma \ref{lemA7} again, we have
\[\begin{split}
& \pr\bigg(   \Big[\skS\pik \big\{ \tilde\Sigma^{\infty}_{[k]\tau\tau}-\Sigma^{\infty}_{[k]\tau\tau}(R_{[k]}^{\infty})^2 \big\} \Big]^{\frac{1}{2}}\epsilon_0 + \sum_{k=1}^2 \big\{ \pik \Sigma^{\infty}_{[k]\tau\tau}(R_{[k]}^{\infty})^2 \big\}^{\frac{1}{2}}L_{p,a_k}^k \ge c\bigg) \\
\ge & \pr\bigg(   \Big[\skS\pik \big\{ \Sigma^{\infty}_{[k]\tau\tau}-\Sigma^{\infty}_{[k]\tau\tau}(R_{[k]}^{\infty})^2 \big\} \Big]^{\frac{1}{2}}\epsilon_0 + \sum_{k=1}^2 \big\{ \pik \Sigma^{\infty}_{[k]\tau\tau}(R_{[k]}^{\infty})^2 \big\}^{\frac{1}{2}}L_{p,a_k}^k \ge c\bigg), \\
\end{split}\] for any $c>0$. Apply Lemma \ref{lemA7} for $\S$ times, we have that the limit of the coverage probability of the confidence interval of $\tau$ is no less than $\CI$.

\end{proof}

\subsection{SRRdM}

\subsubsection{Proof of Proposition \ref{props1}}
\begin{proof}
    First we define individual level pseudo potential outcomes $W$ as $\Wi(1)=\pk\Xi/p_1$ and $\Wi(0)=(1-\pk)\Xi/p_0$ for $i\in{[k]}\ (k=1,\ldots,K)$,
	and let $\Ymul_i(z)=(\Yi(z),\Wi(z)^\T)^\T$. The average treatment effect of $W$ is
\begin{equation*}
\begin{split}
	{\tW}=& \frac{1}{n}\skS \sumk \Big( \frac{\pk \Xi}{ p_1 } - \frac{ ( 1 - \pk ) \Xi }{ p_0 } \Big) \\
	=  & \frac{1}{n}\skS \sumk\frac{\pk-p_1}{p_1p_0}\Xi
	=\frac{1}{p_1p_0}\skS\pik (\pk-p_1)\bXk,
\end{split}
\end{equation*}
which is fixed and known in the design stage, and the stratified difference-in-means estimator for $\tW$ is
\begin{equation*}
\begin{split}
	\htW=&\skS\pik \Big\{\frac{1}{\nkt}\sumk\Wi(1)\Zi-\frac{1}{\nkc}\sumk\Wi(0)(1-\Zi)\Big\}\\
	=&\skS\sumk\Big\{\frac{1}{n_1}\Xi\Zi-\frac{1}{n_0}\Xi(1-\Zi)\Big\}\\
	=&\ttX.
\end{split}
\end{equation*} 
   Recall that  $\rkz=z\pk+(1-z)(1-\pk)$, $z=0,1,\  k=1,\ldots,\S$. Since 
	\[\begin{split}
	&\SkWY(z)=\frac{1}{\nk-1}\sumk\{\Wi(z)-\bWk(z)\}\{\Yi(z)-\bYk(z)\}^\T \\
	=&\frac{1}{\nk-1}\sumk\frac{\rkz}{p_z}(\Xi-\bXk)\{\Yi(z)-\bYk(z)\}^\T=\frac{\rkz}{p_z}\SkXY(z),
	\end{split}\]
	and
	\[\begin{split}
	&\SkW(z)=\frac{1}{\nk-1}\sumk\{\Wi(z)-\bWk(z)\}\{\Wi(z)-\bWk(z)\}^\T \\
	=&\frac{1}{\nk-1}\sumk\Big(\frac{\rkz}{p_z}\Big)^2(\Xi-\bXk)(\Xi-\bXk)^\T=\Big(\frac{\rkz}{p_z}\Big)^2\SkX,
	\end{split}\]
    then the stratum-specific covariance of $\Ymul_i(z)=(\Yi(z),\Wi(z)^\T)^\T$ are
    \[
	\SkYmul(z)=\left(\begin{array}{cc}
	    \SkY(z) & \frac{\rkz}{p_z}\SkXY^\T(z) \\
	    \frac{\rkz}{p_z}\SkXY(z) & \big(\frac{\rkz}{p_z}\big)^2\SkX
	\end{array}\right),\ z=0,1.
	\]
	As $\tau_{i,\Ymul}=(\taui,\ (\pk-p_1)/(p_1p_0)\Xi^\T)^\T$, the stratum-specific covariance of $\tmul$ is
	\[
	\Sktmul=\left(\begin{array}{cc}
	    \Skt & \frac{\pk-p_1}{p_1p_0}\big\{\SkXY^\T(1)-\SkXY^\T(0)\big\} \\
	    \frac{\pk-p_1}{p_1p_0}\big\{\SkXY(1)-\SkXY(0)\big\} & \big(\frac{\pk-p_1}{p_1p_0}\big)^2\SkX
	\end{array}\right).
	\]
	Thus, according to Proposition \ref{prop0}, the upper left block of $\cov\{n^{1/2}(\htmul-\tmul)\}$ is \[ \frac{\SkY(1)}{\pk}+\frac{\SkY(0)}{1-\pk}-\Skt, \] the upper right block is 
	\[\begin{split} &\frac{\SkXY^\T(1)}{p_1}+\frac{\SkXY^\T(0)}{p_0}-\frac{(\pk-p_1)\{\SkXY^\T(1)-\SkXY^\T(0)\}}{p_1p_0} \\
	=&\frac{(1-\pk)\SkXY^\T(1)}{p_1p_0}+\frac{\pk\SkXY^\T(0) }{p_1p_0}, \end{split}\]
	and the lower right block is
	\[ \frac{\pk}{p_1^2}\SkX+\frac{(1-\pk)}{p_0^2}\SkX-\Big(\frac{\pk-p_1}{p_1p_0}\Big)^2\SkX=\frac{\pk(1-\pk)}{p_1^2p_0^2}\SkX. \]
	Therefore,
	$\cov\{n^{1/2}(\hat\tau-\tau,\ \ttX^\T)^\T\}=\cov\{n^{1/2}(\hat\tau-\tau,\ (\htW^\T-\tW^\T)^\T\}=U.$

\end{proof}

\subsubsection{Proof of Corollary \ref{cor2}}
\begin{proof}
Let $\Wi(z)$ be the same as in the proof of of Proposition \ref{props1}. We can apply Theorem \ref{thm1CLT} to $\Ymul_i(z)=(\Yi(z),\ \Wi(z)^\T)^\T$. Conditions \ref{cond::max} and \ref{cond::covariance} can be deduced from Conditions \ref{cond4} and \ref{cond6}, thus $n^{1/2}(\hat\tau-\tau,\ \htW^\T-\tW^\T)^\T$ converges in distribution to $\mathcal{N}(0,U^\infty)$. Therefore, the corollary holds because $n^{1/2}\tW$ converges to $\omega$ as $n\to\infty$.

\end{proof}

\subsubsection{Proof of Proposition \ref{prop2pa}}
\begin{proof}
    According to Corollary \ref{cor2}, $n^{1/2}\ttX\ \dot{\sim}\ \mathcal{N}( \omega ,U_{xx})$. Then the asymptotic distribution of the Mahalanobis distance is
	\begin{equation*}
		M_\ttX=(n^{1/2}\ttX)^\T U_{xx}^{-1}(n^{1/2}\ttX)\  \dot{\sim}\ \chi^2_{\p}(\omega^\T U_{xx}^{-1}\omega).
	\end{equation*}
	Therefore, the probability of a random assignment being accepted is 
	\[ \pr(M_\ttX<a)\to p_a'=\pr\{\chi^2_{\p}(\omega^\T U_{xx}^{-1}\omega)<a\} \] as $n$ tends to infinity.

\end{proof}

\subsubsection{Proof of Theorem \ref{thm4}}

\begin{lemma}
\label{corA11}
	Under SRRdM, \[
	n^{1/2}(\hat\tau-\tau, \ttX^\T)^\T\mid\Md\ \dot\sim\  (A,B^\T)^\T\mid\phi(B,U_{xx})=1,\]
	where $(A,B^\T)^\T\sim\mathcal{N}((0,\omega^\T)^\T,U)$.
\end{lemma}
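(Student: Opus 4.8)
The plan is to follow the proof of Lemma~\ref{corA1} almost verbatim, replacing the centered joint limit under SRRoM by the noncentral joint limit under SRRdM. First I would rewrite the acceptance event in the functional form required by the rerandomization machinery. Since $\cov(\ttX)=U_{xx}/n$, we have $M_\ttX=(n^{1/2}\ttX)^\T U_{xx}^{-1}(n^{1/2}\ttX)$, and therefore
\[\Md=\{M_\ttX<a\}=\{\phi(n^{1/2}\ttX,U_{xx})=1\},\]
so conditioning on $\Md$ is conditioning on the balance indicator $\phi(\cdot,U_{xx})$ evaluated at $n^{1/2}\ttX$, exactly as in the SRRoM case but with $\Sigma_{xx}$ replaced by $U_{xx}$.

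Next I would supply the joint asymptotic normality that drives the result. Writing $V_n=n^{1/2}(\hat\tau-\tau,\ttX^\T)^\T$, Corollary~\ref{cor2} gives $V_n\dot\sim V:=(A,B^\T)^\T\sim\mathcal{N}((0,\omega^\T)^\T,U)$. The only structural departure from the SRRoM setting is that the covariate block $B\sim\mathcal{N}(\omega,U_{xx})$ now has a nonzero asymptotic mean $\omega$, while the estimand block remains asymptotically centered. I would then appeal to the conditional weak-convergence result underlying Corollary A1 of \citet{Li2018}: for a balance criterion of the form $\phi(\cdot,U_{xx})$, the conditional law of $V_n$ given acceptance converges to the conditional law of $V$ given $\phi(B,U_{xx})=1$, provided the acceptance region is a continuity set of the law of $B$ and has positive limiting probability.

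Both provisos hold here. The region $\{b:b^\T U_{xx}^{-1}b<a\}$ has boundary $\{b^\T U_{xx}^{-1}b=a\}$, which carries zero mass under the absolutely continuous law $\mathcal{N}(\omega,U_{xx})$, so it is a continuity set; and its limiting probability is precisely $p'_a=\pr\{\chi^2_{\p}(\omega^\T U_{xx}^{-1}\omega)<a\}>0$ by Proposition~\ref{prop2pa}. With these in hand, a portmanteau argument yields $E[f(V_n)\mathbf{1}_{\Md}]\to E[f(V)\mathbf{1}\{\phi(B,U_{xx})=1\}]$ for every bounded continuous $f$ together with $\pr(\Md)\to p'_a$, and dividing the two gives the claimed conditional convergence. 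The main obstacle is simply that Corollary A1 of \citet{Li2018} was stated for a mean-zero covariate limit, so I would either check that its proof uses only absolute continuity of the covariate limit and positivity of the acceptance probability --- both of which survive the shift by $\omega$ --- or reproduce the short portmanteau step above to cover the noncentral normal explicitly. Everything else carries over unchanged from Lemma~\ref{corA1}.
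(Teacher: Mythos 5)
Your proposal is correct and follows essentially the same route as the paper, which simply notes that $\Md\iff\phi(n^{1/2}\ttX,U_{xx})=1$ and declares the argument identical to that of Lemma~\ref{corA1} (i.e., an appeal to Corollary A1 of \citet{Li2018} applied to the joint limit from Corollary~\ref{cor2}). Your explicit verification that the portmanteau/continuity-set argument survives the shift of the covariate limit by $\omega$ is a detail the paper omits entirely, and it is exactly the right point to check.
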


\begin{proof}
    As $\Md \iff \phi(n^{1/2}\ttX,U_{xx})=1$, the proof of this lemma is similar to that of Lemma \ref{corA1}, so we omit it.

\end{proof}

\begin{proof}[Proof of Theorem \ref{thm4}]
Let $(A,B^\T)^\T\sim\mathcal{N}((0,\omega^\T)^\T,U)$ be the same as in Lemma \ref{corA11}. Then according to Lemma \ref{corA11},
\[
n^{1/2}(\hat{\tau}-\tau,\htW^\T)^\T\mid\Md\ \dot{\sim}\ (A, B_0^\T+\omega^\T)^\T \mid\phi(B,U_{xx})=1,
\]
where $B_0 = B - \omega \sim \mathcal{N}(0, U_{xx})$. As under SRRdM, $\phi(B,U_{xx})=1$ if and only if $(B_0+\omega)^\T{U}_{xx}^{-1}(B_0+\omega)<a$, to show the asymptotic biasedness of $\hat\tau$, we compute the expectation of $A\mid (B_0+\omega)^\T({U}_{xx})^{-1}(B_0+\omega)<a$.

Let $\epsilon=A-U_{\tau x}U_{xx}^{-1}(B_0+\omega)$ be the residual from the linear projection of $A$ on $B_0+\omega$. Then $\epsilon\sim\mathcal{N}(U_{\tau x}U_{xx}^{-1}\omega,(1-R^2)U_{\tau\tau})$ and $\epsilon$ is independent of $B_0+\omega$. Let $D_\omega=U_{xx}^{-1/2}(B_0+\omega)$, then $D_\omega\sim\mathcal{N}(U_{xx}^{-1/2}\omega,I)$ is independent of $\epsilon$, and $A=\epsilon+U_{\tau X}U_{xx}^{-1/2}D_\omega$.

Thus
\begin{equation*}
\begin{split}
	&E\{A\mid (B_0+\omega)^\T{U}_{xx}^{-1}(B_0+\omega)<a\}\\
	=\ &E(\epsilon)+E\{U_{\tau x}U_{xx}^{-1/2}D_\omega\mid D_\omega^\T D_\omega<a\} \\
	=\ &U_{\tau x}U_{xx}^{-1}\omega+U_{\tau x}U_{xx}^{-1/2}E(D_\omega\mid D_\omega^\T D_\omega<a)\\
	=\ &U_{\tau x}U_{xx}^{-1/2}\big\{U_{xx}^{-1/2}\omega+E(D_\omega\mid D_\omega^\T D_\omega<a)\big\},
\end{split}
\end{equation*}
which is usually not equal to $0$ when $\omega\neq0$.

To obtain the equivalence of SRRdM and SRRoM,  it is enough to show that $M_{\htX}=M_\ttX$ when the propensity scores are identical across strata. When $\pk=p_1= \nt / n \ (k=1,\ldots,\S)$, 
	\[\begin{split}
	\htX=&\skS\frac{\nk}{n}\biggl\{\frac{1}{\nkt}\sumk\Zi\Xi-\frac{1}{\nkc}\sumk(1-\Zi)\Xi\biggr\} \\
	=&\frac{1}{n}\Big\{\frac{1}{p_1}\sum_{i=1}^n\Zi\Xi-\frac{1}{p_0}\sum_{i=1}^n(1-\Zi)\Xi\Big\}=\ttX,
	\end{split}\] 
and
    \begin{equation*}
	U_{xx}=\skS\pik \frac{\pk(1-\pk)}{p_1^2p_0^2}\SkX=\skS\frac{\pik}{\pk ( 1 - \pk )}\SkX = \Sigma_{xx},
    \end{equation*}
    and therefore $M_{\htX}=n\htX^\T\Sigma_{xx}^{-1}\htX=n\ttX^\T U_{xx}^{-1}\ttX=M_\ttX$.
\end{proof}

\subsection{Another conservative variance estimator}
\subsubsection{Proof of Theorem \ref{thmvaro2}}
The proof is similar to the proof of Theorem \ref{thmvaro}. First we present a lemma. Let
$$
\tilde{\Sigma}_{\tau\tau}^*= \Sigma_{\tau \tau} + \sum_{k \notin A_{ss}} \pik\Skt + \sum_{j=1}^J\frac{m_j^2K_j}{n_{ss}^2(K_j-1)}\skmj (\tauk-\tmj)^2 .
$$
\begin{lemma}
\label{lem::conservative-sb2}
Under the conditions of Theorem \ref{thmvaro2}, 
$$
E( \hat  \Sigma_{\tau\tau}^* ) = \tilde{\Sigma}_{\tau\tau}^* \quad \textnormal{and} \quad  \hat  \Sigma_{\tau\tau}^* - \tilde{\Sigma}_{\tau\tau}^* = o_p(1).
$$ 
\end{lemma}
\begin{proof}[Proof of Lemma~\ref{lem::conservative-sb2}]
According to Corollary 3.4.1 of \citet{Pashley2017}, we have
\begin{eqnarray}
&& E \left\{   \left(  \frac{n_{ss} }{n} \right)^2 \frac{n}{n_{ss}^2 }\sum_{j=1}^J(m_jK_j)^2 \frac{1}{K_j(K_j-1)}\sum_{k\in A_{ss}:\nk=m_j}(\htk-\htmj)^2 \right\} \nonumber \\
& = &  \sum_{k \in A_{ss} }  \pik \Big\{ \frac{S_{[k]Y}^2(1)}{\pk} + \frac{S_{[k]Y}^2(0)}{1 - \pk} - \Skt \Big\} +  \sum_{j=1}^J\frac{m_j^2K_j}{n_{ss}^2(K_j-1)}\skmj (\tauk-\tmj)^2. \nonumber
\end{eqnarray}
Since 
$$
E \left[   \sum_{k \notin A_{ss} } \pik\Big\{\frac{\skY(1)}{\pk}+\frac{\skY(0)}{1-\pk}\Big\} \right] =  \sum_{k \notin A_{ss} } \pik\Big\{\frac{\SkY(1)}{\pk}+\frac{\SkY(0)}{1-\pk}\Big\}, 
$$
then
$$
E( \hat  \Sigma_{\tau\tau}^* ) = \tilde{\Sigma}_{\tau\tau}^*.
$$
Next, we show that $ \hat  \Sigma_{\tau\tau}^* - \tilde{\Sigma}_{\tau\tau}^* = o_p(1)$.  We have shown that 
\begin{equation}
\label{eqn::b1-term2}
\var \left[   \sum_{k \notin A_{ss} } \pik\Big\{\frac{\skY(1)}{\pk}+\frac{\skY(0)}{1-\pk}\Big\}  \right] \rightarrow 0.
\end{equation}
By definition,
\begin{eqnarray}
\hat \Sigma_{\tau\tau}^* &= &   \sum_{k \notin A_{ss} } \pik\Big\{\frac{\skY(1)}{\pk}+\frac{\skY(0)}{1-\pk}\Big\}  \nonumber \\
&& +  \left(  \frac{n_{ss} }{n} \right)^2 \frac{n}{n_{ss}^2 }\sum_{j=1}^J(m_jK_j)^2 \frac{1}{K_j(K_j-1)}\sum_{k\in A_{ss}:\nk=m_j}(\htk-\htmj)^2 \nonumber \\
& \triangleq & B_1 + B_2,
\end{eqnarray}
where 
$$
B_1 =   \sum_{k \notin A_{ss} } \pik\Big\{\frac{\skY(1)}{\pk}+\frac{\skY(0)}{1-\pk}\Big\}, \quad B_2  =  \hat \Sigma_{\tau\tau}^* - B_1.
$$
By \eqref{eqn::b1-term2} and Chebyshev's inequality, we have $B_1 - E(B_1) = o_p(1).$
Now it suffices for  $ \hat  \Sigma^*_{\tau\tau} - \tilde{\Sigma}^*_{\tau\tau} = o_p(1)$ to show that 
$B_2 - E(B_2) = o_p(1)$. 
Let
$$
\theta_j =  \frac{m_j^2K_j}{n_{ss}^2(K_j-1)},\ j=1,\ldots,J.
$$
Then $\theta_k$ has the same order as $1/n_{ss}^2$. The term $B_2$ can be further divided into six terms:
\begin{eqnarray}
B_2 & = &  \left(  \frac{n_{ss} }{n} \right)^2 \frac{n}{n_{ss}^2 }\sum_{j=1}^J(m_jK_j)^2 \frac{1}{K_j(K_j-1)}\sum_{k\in A_{ss}:\nk=m_j}(\htk-\htmj)^2 \nonumber \\
& = &     \frac{n^2_{ss} }{n}   \sum_{j=1}^J \skmj \theta_j(\htk-\tauk+\tauk-\tmj+\tmj-\htmj)^2 \nonumber \\
& = &   \frac{n^2_{ss} }{n}   \sum_{k \in A_{ss} } \theta_j  ( \htk - \tauk  )^2 +  \nonumber \\
& &   \frac{n^2_{ss} }{n}   \sum_{k \in A_{ss} } \theta_j  ( \tauk - \tmj )^2 +  \nonumber \\
&&   \frac{n^2_{ss} }{n}   \sum_{k \in A_{ss} } \theta_j  ( \tmj -  \htmj )^2 +  \nonumber \\
&&2 \cdot \frac{n^2_{ss} }{n}   \sum_{k \in A_{ss} } \theta_j   ( \htk - \tauk ) ( \tauk - \tmj )  + \nonumber \\
&& 2 \cdot \frac{n^2_{ss} }{n}   \sum_{k \in A_{ss} } \theta_j   ( \htk - \tauk ) ( \tmj -  \htmj ) + \nonumber \\
&&2 \cdot  \frac{n^2_{ss} }{n}   \sum_{k \in A_{ss} } \theta_j   ( \tauk - \tmj )  ( \tmj -  \htmj ) \nonumber \\
&\triangleq& B_{21} + B_{22} + B_{23} + B_{24} + B_{25} + B_{26}. \nonumber
\end{eqnarray}
\begin{itemize}
\item For the first term $B_{21}$, we have
\begin{eqnarray}
\var( B_{21} ) & = & \frac{n_{ss}^4}{n^2}   \sum_{j=1}^J\skmj \theta_j^2  \var\{   ( \htk - \tauk  )^2 \} \nonumber \\
& \leq &  \frac{n_{ss}^4}{n^2}   \sum_{j=1}^J\skmj \theta_j^2   E\{   ( \htk - \tauk  )^4 \} \nonumber \\
& \leq &   \frac{n_{ss}^4}{n^2}   \sum_{j=1}^J\skmj \theta_j^2   (D_{n1} + D_{n0})^2  E\{   ( \htk - \tauk  )^2 \}  \nonumber \\
& = & O \left[  \frac{  (D_{n1} + D_{n0})^2 }{n^2}  \sum_{k\in A_{ss}}   \Big \{  \frac{\SkY(1) }{\nkt} +  \frac{\SkY(0) }{\nkc} -  \frac{\Skt }{\nk}    \Big\}  \right]  \nonumber \\
& = & O \left\{  \frac{ (D_{n1} + D_{n0})^2 }{n}  \right\}  \nonumber \\
& \rightarrow & 0,  \nonumber 
\end{eqnarray}
where $D_{nz} = \max_{k=1,\dots,K} \max_{\ik} |  Y_i(z) - \Yk(z) | $ for $z=0,1$. Thus, $B_{21} - E( B_{21} ) = o_p(1)$.
\item For the second term $B_{22}$, we have $\var(B_{22}) = 0$ and $B_{22} - E( B_{22} ) = 0 = o_p(1)$.
\item For the third term $B_{23}$, we have
\begin{eqnarray}
\var( B_{23} ) & = & \frac{n_{ss}^4}{n^2}  \sum_{j=1}^J ( K_j  \theta_j )^2  \var\{   ( \tmj -  \htmj  )^2 \} \nonumber \\
& = &  O \left( \frac{1}{n^2} \right) \cdot \sum_{j=1}^J K_j^2  E\{   ( \tmj -  \htmj  )^4 \} . \nonumber
\end{eqnarray}
Note that
\begin{eqnarray}
\label{eqn::bound-htmj-tmj}
|  \tmj -  \htmj  |  & = & \Big| \skmj \frac{1}{K_j} ( \tauk - \htk ) \Big|  \leq  \max_{k} | \tauk - \htk | \leq  D_{n1}+D_{n0},
\end{eqnarray}
therefore,
\begin{eqnarray}
\var( B_{23} ) & = &  O \left\{ \frac{(D_{n1} + D_{n0} )^2 }{n^2} \right\}  \cdot \sum_{j=1}^J K_j^2 E\{   ( \tmj -  \htmj  )^2 \} . \nonumber \\
& = &  O \left\{ \frac{ (D_{n1} + D_{n0} )^2 }{n^2} \right\}  \cdot\sum_{j=1}^J K_j^2 O \left( \frac{1}{m_jK_j} \right) \nonumber \\
& = & O \left\{  \frac{ (D_{n1} + D_{n0})^2 }{n}  \right\}  \nonumber \\
& \rightarrow & 0. \nonumber
\end{eqnarray}
Thus, $B_{23} - E( B_{23} ) = o_p(1)$.
\item For the fourth term $B_{24}$, we have
\begin{eqnarray}
\var( B_{24} ) & = &  \frac{4 n_{ss}^4}{n^2}   \sum_{j=1}^J \skmj \theta_j^2   ( \tauk - \tmj )^2  \cdot \var (   \htk - \tauk  )   \nonumber \\
& = & O \left(  \frac{1}{n^2}  \right)\cdot  \sum_{j=1}^J \skmj ( \tauk - \tmj )^2  \cdot \var (   \htk - \tauk  ). \nonumber
\end{eqnarray}
Since
$$ \tmj^2 = \left( \skmj \frac{1}{K_j}\tauk\right)^2 \le \left( \skmj \frac{1}{K_j^2} \right) \cdot \Big(\skmj\tauk^2\Big) \le \skmj\tauk^2, $$
then 
\begin{eqnarray}
&  & \sum_{j=1}^J \skmj ( \tauk - \tmj )^2  \cdot \var (   \htk - \tauk  ) \nonumber \\
& \le &\sum_{j=1}^J \skmj 2(\tauk^2 + \skmj\tauk^2) \cdot \var (   \htk - \tauk  )\nonumber \\
& \le & \sum_{j=1}^J \Big[ 2\skmj \{\tauk^2\var (   \htk - \tauk  )\} + 2\Big(\skmj\tauk^2\Big) \Big\{ \skmj \var (   \htk - \tauk  ) \Big\} \Big] \nonumber \\
& \le & 2 \sum_{k\in A_{ss}} \tauk^2\var (   \htk - \tauk  ) + 2 \Big(\sum_{k\in A_{ss}}\tauk^2\Big) \Big\{\sum_{k\in A_{ss}}\var (   \htk - \tauk  )\Big\} \nonumber \\
&\le & 4\Big(\sum_{k\in A_{ss}}\tauk^2\Big) \Big\{\sum_{k\in A_{ss}}\var (   \htk - \tauk  )\Big\} \nonumber \\
&= & O(n)\cdot O \left(\frac{1}{n}\right) = O(1). \nonumber
\end{eqnarray}
Thus, $\var( B_{24} )\rightarrow 0$ and $B_{24} - E( B_{24} ) = o_p(1)$.
\item For the fifth term
$$
B_{25} =  \frac{ 2 n^2_{ss} }{n}   \sum_{j=1}^J \skmj  \theta_j   ( \htk - \tauk )  \cdot ( \tmj -  \htmj ),
$$
by \eqref{eqn::bound-htmj-tmj} we have  
$$
| B_{25} | \leq  \frac{D_{n1}+D_{n0} } {\sqrt{n}} \cdot \Big|   \frac{ 2 n^2_{ss} }{ \sqrt{n} } \sum_{j=1}^J \skmj \theta_j   ( \htk - \tauk ) \Big|.
$$
Therefore,
\begin{eqnarray}
\var( B_{25} ) & \leq  & E (B_{25})^2 \nonumber \\
& \leq & \frac{ (D_{n1}+D_{n0})^2 } { n } \cdot  \frac{ 4 n^4_{ss} }{ n } \cdot E \left\{ \sum_{j=1}^J \skmj \theta_j   ( \htk - \tauk ) \right\}^2  \nonumber \\
& = &  \frac{ (D_{n1}+D_{n0})^2 } { n } \cdot  \frac{ 4 n^4_{ss} }{ n } \cdot \sum_{j=1}^J \skmj \theta_j^2  E\{   ( \htk - \tauk  )^2 \} \nonumber \\
& = & O \left\{  \frac{ (D_{n1} + D_{n0})^2 }{n^2}  \right\}  \nonumber \\
& \rightarrow & 0.  \nonumber 
\end{eqnarray}
Thus, $B_{25} - E( B_{25} ) = o_p(1)$.
\item For the last term $B_{26}$, we have
\begin{eqnarray}
\var( B_{26} ) & = & \frac{4 n_{ss}^4}{n^2}  \sum_{j=1}^J \skmj \theta_j^2  ( \tauk - \tmj )^2 \cdot  \var( \tmj -  \htmj ) \nonumber \\
& = &  O \left(  \frac{1}{n^2}\right) \cdot \sum_{j=1}^J  \var( \tmj -  \htmj ) \skmj ( \tauk - \tmj )^2, \nonumber
\end{eqnarray}
where
\begin{equation*}
\begin{split}
&\skmj  ( \tauk - \tmj )^2 \le \skmj 2(\tauk^2+\tmj^2) \\
\le&\ 2\skmj\tauk^2 + 2K_j\skmj\tauk^2 \\
\le&\ 2(1+K_j) \skmj \tauk^2.
\end{split}
\end{equation*}
Since $\sum_{k\in A_{ss}}\tauk^2 = O(n)$ (see the proof of Theorem~\ref{thm:cov-est}), then
\begin{eqnarray}
\var( B_{26} ) & = & O \left(  \frac{1}{n^2}\right) \cdot \sum_{j=1}^J  \var( \tmj -  \htmj ) 2(1+K_j) \skmj \tauk^2 \nonumber \\
&= & O \left(  \frac{1}{n^2}\right) \cdot\sum_{j=1}^J \skmj \tauk^2 \nonumber \\
&=& O\left( \frac{1}{n}\right) \rightarrow 0.\nonumber
\end{eqnarray}
Thus, $B_{26} - E( B_{26} ) = o_p(1)$.
\end{itemize}
Combining the above arguments, we have $B_2 - E(B_2) = o_p(1)$.
\end{proof}
Based on Lemmas~\ref{lemsAB} and \ref{lem::conservative-sb2}, Theorem \ref{thmvaro2} follows similarly as the proof of Theorem \ref{thmvaro}.

\end{singlespace}

\end{document}